\newtheorem{theorem}{Theorem}[section]
\newtheorem*{theorem*}{Theorem}
\newtheorem*{proposition*}{Proposition}
\newtheorem{lemma}[theorem]{Lemma}
\newtheorem*{lemma*}{Lemma}
\newtheorem{corollary}[theorem]{Corollary}
\newtheorem*{conjecture*}{Conjecture}
\newtheorem{fact}[theorem]{Fact}
\newtheorem*{fact*}{Fact}
\newtheorem*{hypothesis*}{Hypothesis}
\newtheorem{conjecture}[theorem]{Conjecture}
\theoremstyle{definition}
\newtheorem{definition}[theorem]{Definition}
\newtheorem*{definition*}{Definition}
\newtheorem{algorithm}[theorem]{Algorithm}
\theoremstyle{remark}
\newtheorem*{claim*}{Claim}
\newtheorem{remark}[theorem]{Remark}
\newtheorem*{remark*}{Remark}
\newtheorem*{observation*}{Observation}
\let\mathbb\varmathbb
\crefname{lemma}{Lemma}{Lemmas}
\crefname{fact}{Fact}{Facts}
\crefname{theorem}{Theorem}{Theorems}
\crefname{corollary}{Corollary}{Corollaries}
\crefname{claim}{Claim}{Claims}
\crefname{example}{Example}{Examples}
\crefname{algorithm}{Algorithm}{Algorithms}
\crefname{problem}{Problem}{Problems}
\crefname{definition}{Definition}{Definitions}
\def\AA{\mathbf{A}}
\def\BB{\mathbf{B}}
\def\MM{\mathbf{M}}
\def\HH{\mathbf{H}}
\def\WW{\mathbf{W}}
\def\CC{\mathbf{C}}
\def\II{\mathbb{I}}
\def\YY{\mathbf{Y}}
\def\XX{\mathbf{X}}
\def\KK{\mathbf{K}}
\def\PP{\mathbf{P}}
\def\UU{\mathbf{U}}
\def\VV{\mathbf{V}}
\def\SS{\mathbf{S}}
\def\WW{\mathbf{W}}
\def\QQ{\mathbf{Q}}
\def\GG{\mathbf{G}}
\def\MM{\mathbf{M}}
\def\ZZ{\mathbf{Z}}
\def\RR{\mathbf{R}}
\def\II{\mathbf{I}}
\def\YY{\mathbf{Y}}
\def\calS{\mathcal{S}}
\def\Sig{\mathbf{\Sigma}}
\def\poly{\textrm{poly}}
\def\rank{\textrm{rank}}
\def\trace#1{\mathrm{Tr} \left(#1 \right)}
\newcommand{\Paren}[1]{\left(#1\right)}
\def\abs#1{\left|#1  \right|}
\def\norm#1{\left\| #1 \right\|}
\newcommand{\bigO}[1]{\mathcal{O}\hspace{-0.1cm}\left(#1\right)}
\newcommand{\iprod}[1]{\left\langle #1\right\rangle}
\newcommand{\abnote}[1]{}
\newcommand{\kcnote}[1]{}
\newcommand{\dwnote}[1]{}
\numberwithin{equation}{section}
\numberwithin{figure}{section}
\newcounter{Frame}
\title{Low-Rank Approximation with $1/\epsilon^{1/3}$ Matrix-Vector Products}
\author{ Ainesh Bakshi\\
abakshi@cs.cmu.edu\\
CMU
\and
Kenneth L. Clarkson\\
klclarks@us.ibm.com\\
IBM
\and 
David P. Woodruff\\
dwoodruf@cs.cmu.edu\\
CMU
  }
\date{}
\DeclareMathOperator{\nnz}{\mathtt{nnz}}
\newcommand{\eps}{\varepsilon}
\begin{document}

\maketitle

 \thispagestyle{empty}

\begin{abstract}
We study iterative methods based on Krylov subspaces for low-rank approximation under any Schatten-$p$ norm. Here, given access to a matrix $\AA$ through matrix-vector products, an accuracy parameter $\epsilon$, and a target rank $k$, the goal is to find a rank-$k$ matrix $\ZZ$ with orthonormal columns such that $\norm{ \AA \Paren{\II - \ZZ\ZZ^\top} }_{\calS_p} \leq (1+\epsilon)\min_{\UU^\top \UU = \II_k }\norm{\AA\Paren{\II - \UU\UU^\top}}_{\calS_p}$, where $\| \MM \|_{\calS_p}$ denotes the $\ell_p$ norm of the the singular values of~$\MM$. For the special cases of $p=2$ (Frobenius norm) and $p = \infty$ (Spectral norm), Musco and Musco (NeurIPS 2015) obtained an algorithm based on Krylov methods that uses $\tilde{O}(k/\sqrt{\epsilon})$ matrix-vector products, improving on the na\"ive $\tilde{O}(k/\epsilon)$ dependence obtainable by the power method, where $\tilde{O}(\cdot)$ suppresses poly$(\log(dk/\epsilon))$ factors. 

Our main result is an algorithm that uses only $\tilde{O}(kp^{1/6}/\epsilon^{1/3})$ matrix-vector products, and
works for \emph{all}, not necessarily constant,  $p \geq 1$. For $p = 2$ our bound improves the previous $\tilde{O}(k/\epsilon^{1/2})$ bound to $\tilde{O}(k/\epsilon^{1/3})$. Since the Schatten-$p$ and Schatten-$\infty$ norms of any matrix are the same up to a $1+ \epsilon$ factor when $p \geq (\log d)/\epsilon$, our bound recovers the result of Musco and Musco for $p = \infty$. 
Further, we prove a matrix-vector query lower bound of $\Omega(1/\epsilon^{1/3})$ for \emph{any} fixed constant $p \geq 1$, showing that surprisingly $\tilde{\Theta}(1/\epsilon^{1/3})$ is the optimal complexity for constant~$k$. 

To obtain our results, we introduce several new techniques, including optimizing over \emph{multiple Krylov subspaces} simultaneously, and \emph{pinching inequalities} for partitioned operators. Our lower bound for $p \in [1,2]$  uses the \textit{Araki-Lieb-Thirring} trace inequality, whereas for $p>2$, we appeal to a \textit{norm-compression} inequality for \textit{aligned partitioned operators}. As our algorithms only require matrix-vector product access, they can be applied in settings where alternative techniques such as sketching cannot, e.g., to covariance matrices, Hessians defined implicitly by a neural network, and arbitrary polynomials of a matrix.

\end{abstract}

\clearpage
\setcounter{page}{1}

\section{Introduction}
Iterative methods, and in particular Krylov subspace methods, are ubiquitous in scientific computing. Algorithms such as power iteration, Golub-Kahan Bidiagonalization, Arnoldi iteration, and the Lanczos iteration, are used in basic subroutines for matrix inversion, solving linear systems, linear programming, low-rank approximation, and numerous other fundamental linear algebra primitives \cite{saad1981krylov,liesen2013krylov}. 
A common technique in the analysis of Krylov methods is the use of Chebyshev polynomials, which can be applied to the singular values of a matrix to implement an approximate interval or step function \cite{mason2002chebyshev,rivlin2020chebyshev}. Further, Chebyshev polynomials reduce the degree required to accurately approximate such functions, leading to significantly fewer iterations and faster running time. In this paper we investigate the power of Krylov methods for low-rank approximation in the matrix-vector product model. 

\vspace{0.1in}
\textbf{The Matrix-Vector Product Model.}  In this model, there is an underlying matrix $\AA$, which is often implicit, and for
which the only access to $\AA$ is via matrix-vector products. Namely, the algorithm chooses
a query vector $v^1$, obtains the product $\AA \cdot v^1$, chooses the next query
vector $v^2$, which is any randomized function of $v^1$ and $\AA \cdot v^1$, then receives $\AA \cdot v^2$,
and so on. If $\AA$ is a non-symmetric matrix, we assume access to products of the form $\AA^\top v$ as well. We refer to the minimal number $q$ of queries needed by the algorithm to solve a problem with constant probability as the {\it query complexity}. We note that upper bounds on the query complexity immediately translate to running time bounds for the RAM model, when $\AA$ is explicit, since a matrix-vector product can be implemented in $\nnz(\AA)$ time, i.e., the number of non-zero entries in the matrix. Since this model captures a large family of iterative methods, it is natural to ask whether Krylov subspace based methods yield optimal algorithms, where the complexity measure of interest is the number of matrix-vector products.

This model 
and related vector-matrix-vector query models were formalized for a number of problems 
in \cite{SunWYZ19,RWZ20}, though the model is standard for measuring efficiency in scientific computing and numerical linear algebra, see, e.g., \cite{bai1996some};
in that literature, methods that use only matrix-vector products are called \emph{matrix-free}.
Subsequently, for the problem of estimating the top eigenvector, nearly tight bounds were obtained
in \cite{SAR18,braverman2020gradient}. Also, for the problem of estimating the trace of a positive semidefinite matrix, tight bounds were obtained in \cite{MMMW21} (see, also \cite{WWZ14}, where tight bounds
were shown in the related vector-matrix-vector query model). For recovering a planted clique
from a random graph, upper and lower bounds were obtained in \cite{woodruff21}. 
In the non-adaptive setting, where
$v^1, \ldots, v^q$, are chosen before making any queries to $\AA$, this is equivalent to the {\it sketching model}, which is thoroughly studied on its own (see, e.g., \cite{nelson2011sketching,woodruff2014sketching}), and in the context of data streams \cite{M05,LNW14}. 

\vspace{0.1in}
\textbf{Why is the matrix $\AA$ implicit?} A small query complexity $q$ leads to an algorithm running in time $\bigO{ T(\AA) \cdot q + P(n, d, q)}$,
where $T(\AA)$ is the time to multiply the $n \times d$ matrix $\AA$ by an arbitrary vector, and $P(n,d,q)$ is the time
needed to form the queries and process the query responses, which is typically small. When the matrix
$\AA$ is given as a list of $\nnz(\AA)$ non-zero entries,
then $T(\AA) \leq \nnz(\AA)$. However, 
in many problems $\AA$ is not given explicitly, and it is too expensive to write $\AA$ down. 
Indeed, one may be given $\AA$ but want to compute a low-rank approximation to the ``covariance'' (Gram) matrix $\AA^\top\AA$, and computing $\AA^\top\AA$ is too slow \cite{musco2017input}.
More generally, one may be given $\AA = \UU \Sig \VV^\top$ and a function $f:\mathbb{R} \to \mathbb{R}$, and want to compute matrix-vector products with the generalized matrix function $f(\AA) = \UU f(\Sig) \VV^\top$, where $\UU$ has orthonormal columns, $\VV^\top$ has orthonormal rows, $\Sig$ is a diagonal matrix, and $f$ is applied entry-wise to each entry on the diagonal. 

The covariance matrix corresponds to $f(x) = x^2$, and other common functions $f$ include the matrix exponential $f(x) = e^x$ and low-degree polynomials. For instance, when $\AA$ is the adjacency matrix
of an undirected graph, $f(x) = x^3/6$ is used to count the number of triangles \cite{tsourakakis2008fast,avron2010counting}.
Yet another example is when $\AA$ is the Hessian $\HH$ of a neural network with a huge number of parameters, for which it is often impossible to compute or store the entire Hessian~\cite{gkx2019investigation}. Typically $\HH \cdot v$, for any chosen vector $v$, is computed using Pearlmutter's trick~\cite{pearlmutter1994hv_trick}. However,  even with Pearlmutter's trick and distributed computation on modern GPUs, it takes 20 hours to compute the eigendensity of a single Hessian $\HH$ with respect to the cross-entropy loss on the CIFAR-10 dataset from a set of fixed weights for ResNet-18 ~\cite{krizhevsky2009learning}, which has approximately 11 million parameters~\cite{he2016deep,gkx2019investigation}. This time is directly
proportional to the number of matrix-vector products, and therefore minimizing this quantity is crucial. 

\vspace{0.1in}
\textbf{Algorithms and Lower Bounds for Low-Rank Approximation.} 
The low-rank approximation problem is well studied in numerical linear algebra,
with countless applications to clustering, data mining, principal component analysis, 
recommendation systems, and many more. (For
surveys on low-rank approximation, see the monographs \cite{KV09,M11,woodruff2014sketching} and references therein.) In this problem, given an implicit 
$n \times d$ matrix $\AA$, the goal is to output a matrix
$\ZZ \in \mathbb{R}^{d \times k}$  with orthonormal columns such that
\begin{eqnarray}\label{eqn:lowrank}
    \norm{\AA \Paren{\II -\ZZ\ZZ^\top }}_{X} \leq \Paren{1+\epsilon } \min_{\UU : \UU^\top \UU =\II_k} \norm{\AA \Paren{\II - \UU \UU^\top}}_{X},
\end{eqnarray}
where $\|\cdot\|_X$ denotes some norm. Note that given $\ZZ$, one can compute
$\AA \ZZ$ with an additional $k$ queries, which will be negligible, and then $(\AA\ZZ) \cdot \ZZ^\top$ is a
rank-$k$ matrix written in factored form, i.e., as the product of an $n \times k$ matrix and a $k \times d$ matrix. Among other things, low-rank approximation provides (1) a compression of $\AA$ from
$nd$ parameters to $(n+d)k$ parameters, (2) faster matrix-vector products, since $\AA \ZZ \cdot \ZZ^\top \cdot y$ can be computed in $O((n+d)k)$ time for an arbitrary vector $y$, as opposed to the $O(nd)$ time needed to compute $\AA \cdot y$, and (3) de-noising, as often matrices $\AA$ are close to low-rank (e.g., they are the product of latent factors) but only high rank due to noise. 

Despite its tremendous importance, 
the optimal matrix-vector product complexity of low-rank approximation is unknown for any commonly used norm. The best known upper
bound is due to Musco and Musco \cite{musco2015randomized}, who achieve $\tilde{\mathcal{O}}(k/\epsilon^{1/2})$ queries\footnote{We let $\tilde{\mathcal{O}}(f) = f \cdot \poly(\log(dk/\epsilon))$.} for
both the case when $\|\cdot\|_X$ is the commonly studied Frobenius norm $\|\BB\|_F = \left (\sum_{i, j} \BB_{i,j}^2 \right )^{1/2}$ as well as  when $\|\cdot\|_X$ is the Spectral (operator) norm $\|\BB\|_2 = \sup_{\|y\|_2 = 1} \| \BB y\|_2$. 

On the lower bound front, there is a trivial lower bound of $k$, since $\AA$ may be full rank 
and achieving (\ref{eqn:lowrank}) requires $k$ matrix-vector products since one must reconstruct
the column span of $\AA$ exactly. However, {\it no lower bounds in terms of the approximation factor
$\epsilon$ were known}. We note that Simchowitz, Alaoui and Recht~\cite{SAR18} prove lower bounds for approximating the top 
$r$ eigenvalues of a symmetric matrix; however these guarantees are incomparable to those that follow from a low-rank approximation, even when the norm $\|\cdot\|_X$ is the operator norm (see Appendix \ref{sec:appendix-sar} for a brief discussion). 

\vspace{0.1in}
\textbf{Relationship to the Sketching Literature.} Low-rank approximation has been extensively studied in the sketching literature which, when $\AA$ is given
explicitly, can achieve $\bigO{\nnz(\AA)}$ time both for the Frobenius norm \cite{clarkson2013low,meng2013low,NN13}, as well as 
for Schatten-$p$ norms \cite{lw20}. 
However, these works require 
reading all of the entries in $\AA$, and thus do not apply to any of the settings mentioned above. 
Further, the matrix-vector query model is especially important for problems such as trace estimation, where a low-rank approximation
is used to first reduce the variance \cite{MMMW21}. As trace estimation is often applied to implicit 
matrices, e.g., in computing Stochastic Lanczos Quadrature (SLQ) for Hessian eigendensity estimation~\cite{gkx2019investigation}, in studying the effects of batch normalization and residual connections in neural networks~\cite{ygkm2020pyhessian}, and in computing a disentanglement regularizer for deep generative models~\cite{peebles2020hessian}, 
sketching algorithms for low-rank approximation often do not apply. 

Another
important application is low-rank approximation of covariance matrices \cite{musco2017input}, for which the covariance matrix is not
given explicitly. Here, we have a data matrix $\AA$ and we want a low-rank approximation for $\AA\AA^\top$. Even when $\SS$ is a sparse sketching matrix, the matrix $\SS \AA$ is no longer sparse, and one needs to multiply $\SS\AA$  by $\AA^\top$ to obtain a sketch of $\SS \AA \AA^\top$, which is a dense matrix-matrix multiplication. Moreover, when viewed in the matrix-vector product model, sketching algorithms obtain provably worse query complexity than existing iterative algorithms (see Table \ref{fig:results} for a comparison).  
Further, as modern
GPUs often do not exploit sparsity, {\it even when the matrix $\AA$ is given, a GPU may not be able to take advantage of sparse queries}, which means the total time taken is proportional to the number of matrix-vector products. 

\vspace{0.1in}
\textbf{Motivating Schatten-$p$ Norms.}
The Schatten norms for $1 \leq p < 2$ are more robust than the Frobenius norm, as they dampen the effect of large singular values. In particular, the Schatten-$1$ norm, also known as the nuclear norm, has been widely used for robust PCA~\cite{xu2010robust, candes2011robust, yi2016fast} as well as a convex relaxation of matrix rank in matrix completion~\cite{candes2009exact,candes2010matrix}, low-dimensional Euclidean embeddings~\cite{recht2010guaranteed, tenenbaum2000global, roweis2000nonlinear}, image denoising~\cite{gu2014weighted, gu2017weighted} and tensor completion~\cite{yuan2016tensor}. In contrast, 
for $p > 2$, Schatten norms are more sensitive to large singular values and provide an approximation
to the operator norm. In particular, for a rank $r$ matrix, it is easy to see that setting $p =\log(r)/\eta$ yields a $(1+ \eta)$-approximation to the operator norm (i.e., $p=\infty$). While the Block Krylov algorithm of Musco and Musco~\cite{musco2015randomized} implies a matrix-vector query upper bound of $\tilde{\mathcal{O}}\Paren{k/\epsilon^{1/2}}$ for Schatten-$\infty$ low-rank approximation, the exact complexity of this problem remains an outstanding open problem. When $p >2$, we can interpolate between Frobenius and operator norm, and setting $p$ to be a large fixed constant can be a proxy for Schatten-$\infty$ low-rank approximation, with significantly fewer matrix-vector products (see Theorem \ref{thm:optimal_schatten_p_lra}).


\vspace{0.1in}
\textbf{ Our Central Question.} The main question of our work is:
\begin{center}
{\it What is the matrix-vector product complexity of low-rank approximation for the Frobenius norm, and
more generally, for other matrix norms?}
\end{center}

\subsection{Our Results} 
\begin{figure*}[!htb]
\begin{center}
\resizebox{\columnwidth}{!}{
{\tabulinesep=1.2mm
\begin{tabu}{|c|c|c|c|}\hline
Problem & Frobenius & Schatten-$p$, $p \in [1,2)$ & Schatten-$p$, $p > 2$ \\\hline\hline
Sketching \cite{clarkson2009numerical,lw20} & $\Theta(k/\epsilon)$ & $\Omega(k^{2/p}/\epsilon^{4/p + 1})$ &  $\Omega(\min(n,d)^{1-2/p})$ \\\hline
Block Krylov \cite{musco2015randomized} & $\tilde{\mathcal{O}}(k/\epsilon^{1/2})$ & N/A & N/A \\\hline
Our Upper Bound & $\tilde{\mathcal{O}}(k/\epsilon^{1/3})$ & $\tilde{\mathcal{O}}(k/\epsilon^{1/3})$ & $\tilde{\mathcal{O}}(k p^{1/6}/\epsilon^{1/3})$\\\hline
Our Lower Bound & $\Omega(1/\epsilon^{1/3})$ & $\Omega(1/\epsilon^{1/3})$ &  $\Omega(1/\epsilon^{1/3})$\\\hline 
\end{tabu}
}}
\end{center}
\caption{Prior Upper and Lower Bounds on the Matrix Vector Product Complexity for Frobenius and Schatten-$p$ low-rank Approximation. The poly$(k/\epsilon)$ factors in prior sketching work for Schatten-$p$ are not explicit, but we have computed lower bounds on them to illustrate our improvements. Our bounds are optimal, up to logarithmic factors, for constant $k$. 
For $p> \log(d)/\epsilon$, spectral low-rank approximation \cite{musco2015randomized} implies an $\tilde{\mathcal{O}}\Paren{k/\sqrt{\epsilon}}$ upper bound.  
}\label{fig:results}
\end{figure*}

We begin by stating our results for Frobenius and more generally, Schatten-$p$ norm low-rank approximation
for any $p \geq 1$; see Table \ref{fig:results} for a summary. 

\begin{theorem}[Query Upper Bound, informal Theorem \ref{thm:optimal_schatten_p_lra}]
\label{thm:inf1}
Given a matrix $\AA \in \mathbb{R}^{n \times d}$, a target rank $k \in [d]$, an accuracy parameter $\epsilon \in (0,1)$ and any (not necessarily constant) $p \in [1, \bigO{\log(d)/\epsilon}]$, there exists an algorithm that uses $\tilde{\mathcal{O} }\Paren{ kp^{1/6}/\epsilon^{1/3}}$ matrix-vector products  and outputs a $d \times k$ matrix $\ZZ$ with orthonormal columns such that with probability at least $99/100$,
\begin{equation*}
    \norm{ \AA \Paren{\II - \ZZ \ZZ^\top}  }_{\calS_p} \leq (1+\epsilon) \min_{\UU : \hspace{0.05in} \UU^\top \UU = \II_k} \norm{ \AA \Paren{\II - \UU \UU^\top} \ }_{\calS_p}.
\end{equation*}
When $p \geq \log(d)/\epsilon$, we get $\tilde{\mathcal{O}}\Paren{k/\sqrt{\epsilon}}$ matrix-vector products.
\end{theorem}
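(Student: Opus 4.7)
The plan is to run the same block Krylov algorithm as Musco--Musco, but with a carefully reduced number of iterations $q = \tilde{\Theta}(p^{1/6}/\epsilon^{1/3})$, and to obtain the sharper rate through a new analysis based on decomposing the spectrum of $\AA$ into bands and applying a pinching inequality for Schatten-$p$ norms. Concretely, I would draw $\GG \in \mathbb{R}^{d \times k}$ with i.i.d.\ Gaussian entries, form the Krylov matrix $\KK = [\AA\GG,\ (\AA\AA^\top)\AA\GG,\ \ldots,\ (\AA\AA^\top)^{q-1}\AA\GG]$ using $qk$ matrix-vector products, compute an orthonormal basis $\QQ$ of its column span, and let $\ZZ$ be the top-$k$ right singular vectors of $\AA\QQ\QQ^\top$. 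The novelty is in the analysis rather than the algorithm, which must simultaneously track the action of $\QQ\QQ^\top$ on several ``bands'' of the spectrum.

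To analyze, I would partition the singular indices into bands $B_j = \{i : \sigma_i/\sigma_{k+1} \in [(1+\gamma)^j,(1+\gamma)^{j+1})\}$ for $j \geq 0$, where the base gap $\gamma$ satisfies $1/q^2 \asymp \gamma$. For each band $B_j$ with $j \geq 1$, the relative gap from the tail is at least $j\gamma \geq 1/q^2$, so a degree-$q$ Chebyshev polynomial gives a per-direction sine of at most $e^{-\Omega(q\sqrt{j\gamma})} \leq \poly(\epsilon/d)$; these bands are captured so well by $\mathrm{col}(\QQ)$ that their Schatten-$p$ residual is negligible. The heart of the analysis is the ``close band'' $B_0$ of singular values within a factor $1+\gamma$ of $\sigma_{k+1}$, where the Musco--Musco analysis saturates at $1/\sqrt{\epsilon}$. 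Here I would invoke a pinching inequality for partitioned operators: after writing $\AA(\II-\ZZ\ZZ^\top)$ in the basis of the spectral decomposition of $\AA$, its $p$-th Schatten power is bounded above by the sum of contributions from the diagonal blocks indexed by the bands, so that swap-errors inside $B_0$ aggregate \emph{sub-additively} in $\calS_p$ rather than additively. Provided $\ZZ\ZZ^\top$ is suitably ``aligned'' with the spectral decomposition of $\AA$---which follows automatically from the Krylov structure---the $B_0$ contribution is at most a $(1+O(\epsilon))$-factor of $\|\AA-\AA_k\|_{\calS_p}^p$. Balancing by setting $\gamma = \Theta(\epsilon^{2/3}/p^{1/3})$ and $q = \tilde\Theta(1/\sqrt{\gamma}) = \tilde\Theta(p^{1/6}/\epsilon^{1/3})$ yields the claimed bound. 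For $p \geq \log(d)/\epsilon$, Schatten-$p$ and operator norms agree up to a $(1+\epsilon)$-factor, so the $\tilde O(k/\sqrt{\epsilon})$ Musco--Musco bound takes over.

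The main obstacle is the sub-additivity step in the close-band analysis. A direct Weyl-type argument yields only a $(1+O(p\gamma))$-factor and, after balancing against the Chebyshev cost, reproduces the $\tilde O(k\sqrt{p/\epsilon})$ rate rather than the desired $\tilde O(kp^{1/6}/\epsilon^{1/3})$. The improvement hinges on a Schatten-$p$ pinching inequality that is sharpest when the block decomposition is aligned with the singular structure of $\AA$; extending this from the classical commuting case to the non-commuting projectors $\ZZ\ZZ^\top$ and the true spectral projectors $\UU_j\UU_j^\top$ of $\AA\AA^\top$ is the main technical work. This is precisely where the Araki--Lieb--Thirring trace inequality (for $p \in [1,2]$) and the norm-compression inequality for aligned partitioned operators (for $p > 2$) become essential---the same tools the authors flag for their matching lower bounds---so constructing Krylov-based proxies for the true spectral projectors that remain ``aligned enough'' for these inequalities to apply tightly is where I expect the bulk of the difficulty to lie.
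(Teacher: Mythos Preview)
Your algorithm differs from the paper's in a way that creates a genuine gap. You run a single Block Krylov with block size $k$ and $q=\tilde\Theta(p^{1/6}/\epsilon^{1/3})$ iterations, whereas the paper runs \emph{two} Krylov subspaces in parallel: one with block size $k$ and $q=\tilde\Theta(p^{1/6}/\epsilon^{1/3})$ iterations, and a second with \emph{enlarged} block size $s=\Theta(k/(\epsilon p)^{1/3})$ and only $\tilde O(\sqrt{p})$ iterations. The second run is what handles the ``small tail'' regime. If $\|\AA-\AA_k\|_{\calS_p}^p \le \frac{k}{(p\epsilon)^{1/3}}\sigma_{k+1}^p$, a counting argument forces $\sigma_{k+1+s} \le \sigma_{k+1}/(1+1/p)$, so the large block induces a gap and the gap-\emph{dependent} Krylov bound gives $\poly(\epsilon/d)$ per-vector error. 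With block size $k$ alone, the best you get is the gap-\emph{independent} per-vector error $\gamma = \epsilon^{2/3}/p^{1/3}$, and the paper's per-vector-to-Schatten translation (their Lemma~5.7) then yields an additive $\Theta(kp\gamma)\,\sigma_{k+1}^p = \Theta(kp^{2/3}\epsilon^{2/3})\,\sigma_{k+1}^p$ error; this is a $(1+\epsilon)$-factor only when the tail is at least $\Omega(kp^{2/3}/\epsilon^{1/3})\,\sigma_{k+1}^p$, which is exactly the case your approach \emph{cannot} cover.

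Your proposed fix---bounding $\|\AA(\II-\ZZ\ZZ^\top)\|_{\calS_p}^p$ above by a sum of diagonal band-blocks via pinching---goes the wrong way: pinching says $\sum_i\|\MM_{(i,i)}\|_{\calS_p}^p \le \|\MM\|_{\calS_p}^p$, so diagonal blocks give a \emph{lower} bound, not an upper bound. The paper uses pinching differently: it is applied to $\AA$ itself with $\PP=\WW\WW^\top$, $\QQ=\ZZ\ZZ^\top$ (where $\ZZ$ is a basis for $\AA^\top\WW$) to obtain $\|\AA(\II-\ZZ\ZZ^\top)\|_{\calS_p}^p \le \|\AA\|_{\calS_p}^p - \|\WW^\top\AA\|_{\calS_p}^p$, and the subtracted term is then lower-bounded from the per-vector Krylov guarantees. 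Finally, the Araki--Lieb--Thirring and norm-compression inequalities you invoke are used in the paper only for the \emph{lower bounds}; the upper bound analysis relies solely on the pinching inequality plus the head/tail split and the two-block-size trade-off above.
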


We note that for Frobenius norm low-rank approximation (Schatten $p$ for $p=2$), we improve the prior matrix-vector product bound of $\tilde{\mathcal{O}}(k/\epsilon^{1/2})$ by Musco and Musco~\cite{musco2015randomized}
to $\tilde{\mathcal{O}}(k/\epsilon^{1/3})$. 
For Schatten-$p$ low-rank approximation for $p \in [1, 2)$, we improve work of
Li and Woodruff~\cite{lw20} who require query complexity at least $\Omega(k^{2/p}/ \epsilon^{4/p +1})$, which is a polynomial factor worse in both $k$ and $1/\epsilon$ than our $\tilde{\mathcal{O}}(k/\epsilon^{1/3})$ 
bound. 

For $p>2$, \cite{lw20} obtain a query complexity of 
$\Omega(\min(n, d)^{1-2/p})$. We drastically improve this to 
$\tilde{\mathcal{O}}(k/\epsilon^{1/3})$, which does not depend on $d$ or $n$ at all. 
Setting $p =\log(d)/\epsilon$ suffices to obtain a $(1+\epsilon)$-approximation to the spectral norm ($p=\infty)$, and we obtain an $\tilde{\mathcal{O}}\Paren{k/\sqrt{\epsilon}}$ query algorithm, matching the best known bounds for spectral low-rank approximation~\cite{musco2015randomized}. When $p>\log(d)/\epsilon$, we can simply run Block Krylov for $p=\infty$. 
%
%
%
%

\begin{remark}[Comments on the RAM Model]
Although our focus is on minimizing the number of matrix-vector products, which is the key resource
in the applications described above, our bounds also improve the running time of low-rank approximation algorithms when the matrix $\AA$ has a small number of non-zero entries and is
explicitly given. For simplicity, we state our bounds and those of previous work without using
algorithms for fast matrix multiplication; similar improvements hold when using such
algorithms. When $\nnz(\AA) = O(n)$, 
for Frobenius norm low-rank approximation, 
work in the sketching literature, and in particular \cite{avron2017sharper} 
(building off of \cite{clarkson2013low, NN13, cohen2016nearly}), achieves
$O(nk^2/\epsilon)$ time. 
In contrast, in this setting our runtime is 
$\tilde{\mathcal{O}}(nk^2/\epsilon^{2/3})$. Similarly, 
for Schatten-$p$ low-rank approximation for $p \in [1,2)$, the previous best \cite{lw20} requires
$\tilde{\Omega}(n k^{4/p}/\epsilon^{(8/p-2)})$ time, while for $p > 2$ \cite{lw20} requires
$\tilde{\Omega}(n d^{2(1-2/p)} (k/\epsilon)^{4/p})$ time. In both cases our runtime is only  
$\tilde{\mathcal{O}}(nk^2p^{1/3}/\epsilon^{2/3})$. 
We obtain analogous improvements when the sparsity $\nnz(\AA)$ is allowed to be $n (k/\epsilon)^C$ for a
small constant $C > 0$. 
\end{remark} 

Next, we state our lower bounds on the matrix-vector query complexity of Schatten-$p$ low-rank approximation.

\begin{theorem}[Query Lower Bound for constant $p$, informal Theorem \ref{thm:mv_lowerboundn_schatten_12} and Theorem \ref{thm:query_lower_bound_p>2} ]
\label{thm:inf2}
Given $\eps >0$, and a fixed constant $p \geq 1$, there exists a distribution $\mathcal{D}$ over $n \times n$ matrices such that for $\AA \sim \mathcal{D}$, any algorithm that with at least constant probability outputs a unit vector $v$ such that $\norm{ \AA  \Paren{\II -  v v^\top} }^p_{\calS_p} \leq (1+\eps)\min_{\norm{u}_2=1} \norm{\AA \Paren{\II - u u^\top }  }^p_{\calS_p}$ must perform $\Omega(1/\eps^{1/3})$ matrix-vector queries to $\AA$.
\end{theorem}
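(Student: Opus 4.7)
The plan is to exhibit a distribution $\mathcal{D}$ on $n \times n$ symmetric matrices with a weakly planted rank-one spike, and to argue that (i) any $(1+\eps)$-approximate Schatten-$p$ rank-one minimizer must have constant non-trivial inner product with the planted direction, while (ii) producing such a vector from matrix-vector queries requires $\Omega(1/\eps^{1/3})$ queries. Concretely, take $n = \Theta(\eps^{-1/3})$ (large enough for GOE concentration), let $\WW = \tfrac{1}{\sqrt{n}}\GG$ for $\GG$ a GOE with $N(0,1)$ entries, and set $\AA = \WW + \lambda\, u u^\top$ with spike strength $\lambda = 1 + \gamma$, $\gamma = \Theta(\eps^{1/3})$ just above the BBP threshold, and $u$ uniform on $\mathbb{S}^{n-1}$ independent of $\WW$. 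Then $\sigma_1(\AA) \approx 2 + \gamma^2$ sits above a bulk concentrated on $[-2,2]$, giving an effective spectral gap of $\Theta(\gamma^2) = \Theta(\eps^{2/3})$, and $\norm{\AA}_{\calS_p}^p = \Theta_p(n)$.

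The query lower bound piece follows the Simchowitz-Alaoui-Recht framework. Orthogonal invariance of $\WW$ lets me condition on the adaptive history: after $q$ queries, the algorithm's information about $u$ is captured by the projection of $u$ onto a Krylov subspace $\mathcal{K}_q$ of dimension $O(q)$, while the orthogonal component of $u$ remains uniform in $\mathcal{K}_q^\perp$. A Chebyshev-polynomial obstruction shows that no polynomial of degree $q$ in $\AA$ can amplify the planted spike relative to the bulk edge beyond $(1+O(\gamma))^q$, so $\|P_{\mathcal{K}_q} u\|^2 \geq c$ for any fixed constant $c > 0$ already forces $q = \Omega(1/\gamma) = \Omega(1/\eps^{1/3})$. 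Any algorithm output $v \in \mathcal{K}_q$ inherits the same cap on $|\langle v, u\rangle|^2$.

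The main obstacle, and where the proof splits by $p$, is the reduction from a $(1+\eps)$ Schatten-$p$ LRA guarantee to an upper bound on $1 - |\langle v, u\rangle|^2$. Setting $P = \II - vv^\top$, I would lower bound $\norm{\AA P}_{\calS_p}^p = \mathrm{tr}((P\AA^2 P)^{p/2})$. For $p \in [1,2]$, the exponent $p/2 \in [1/2,1]$, and the $r \leq 1$ form of the Araki-Lieb-Thirring inequality together with $P$ being a projection gives
\[
\mathrm{tr}\Paren{(P\AA^2 P)^{p/2}} \;\geq\; \mathrm{tr}(P\AA^p) \;=\; \mathrm{tr}(\AA^p) - v^\top \AA^p v.
\]
Expanding $v^\top \AA^p v \leq \sigma_1^p - (1-|\langle v, u\rangle|^2)(\sigma_1^p - \sigma_2^p)$, using $\sigma_1^p - \sigma_2^p = \Theta_p(\gamma^2)$ and $\mathrm{OPT}_p = \Theta_p(n)$, the hypothesis $\norm{\AA P}_{\calS_p}^p \leq (1+\eps)\mathrm{OPT}_p$ yields $1 - |\langle v, u\rangle|^2 \leq O_p(\eps n/\gamma^2) = O_p(1)$, which is a constant strictly less than $1$ for the parameter choice above, as needed. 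For $p > 2$ the ALT direction reverses, so instead I would decompose $\AA$ along $\mathrm{span}(u) \oplus u^\perp$ into a $2 \times 2$ partitioned operator whose diagonal blocks are $\lambda + u^\top \WW u$ (scalar) and $\WW_\perp$ (the restriction of $\WW$ to $u^\perp$), with off-diagonal vectors $(\II - uu^\top)\WW u$, and apply a norm-compression inequality for \emph{aligned} partitioned operators to lower bound $\norm{\AA P}_{\calS_p}$ by the Schatten-$p$ norm of the scalar $2\times 2$ matrix of block Schatten-$p$ norms. The alignment hypothesis is satisfied because the spike is rank-one and $\WW_\perp$ is (asymptotically) free from $u$. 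The hardest single step is identifying and proving the right form of this partitioned inequality with a tight constant; weaker versions would lose polynomial factors and yield only $\Omega(1/\eps^{1/4})$ or worse, so the correct $\eps^{1/3}$ dependence hinges on this piece.
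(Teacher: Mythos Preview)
Your approach diverges substantially from the paper's and contains a genuine gap in the reduction step. The paper does not use a spiked GOE or the SAR detection framework; it takes $\AA = \II - \tfrac{1}{5}\WW$ for $\WW \sim \textsf{Wishart}(n)$ with $n = \Theta(\eps^{-1/3})$ and reduces a $(1+\eps)$-approximate Schatten-$p$ rank-one LRA to \emph{estimating the minimum eigenvalue} $\lambda_n(\WW)$ to additive $O(1/n^2)$ accuracy, which is hard by Braverman--Hazan--Simchowitz--Woodworth. The Araki--Lieb--Thirring step (for $p\le 2$) and the norm-compression step (for $p>2$) are used only to establish $\norm{\AA(\II-vv^\top)}_{\calS_p}^p \ge \norm{\AA}_{\calS_p}^p - \norm{\AA v}_2^p$ (up to lower-order terms), after which one reads off $\hat\lambda = \tfrac{5}{p}\bigl(1-\norm{\AA v}_2^p\bigr)$ as an $O(\eps^{2/3})$-accurate estimator of $\lambda_n(\WW)$. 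No correlation with any planted direction is ever extracted.

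The specific gap in your argument is the inequality $v^\top |\AA|^p v \le \sigma_1^p - (1-|\langle v,u\rangle|^2)(\sigma_1^p-\sigma_2^p)$. This holds with $u$ replaced by the top \emph{singular vector} $w_1$ of $\AA$, not the planted direction $u$. In the near-threshold BBP regime $\lambda = 1+\gamma$ you chose, one has $|\langle w_1,u\rangle|^2 \to 1 - 1/\lambda^2 = \Theta(\gamma) = \Theta(\eps^{1/3})$, so even the \emph{optimal} LRA vector $v=w_1$ satisfies $1-|\langle v,u\rangle|^2 = 1-o(1)$. Consequently no bound of the form $1-|\langle v,u\rangle|^2 \le c < 1$ can possibly follow from the LRA guarantee, and your computation $1-|\langle v,u\rangle|^2 \le O_p(\eps n/\gamma^2)=O_p(1)$ is necessarily vacuous rather than ``a constant strictly less than $1$''. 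This is exactly the obstruction the paper discusses when explaining why the SAR-style top-eigenvector lower bound does not transfer to low-rank approximation in a black-box way. For $p>2$ there is a second issue: you propose decomposing $\AA$ along $\mathrm{span}(u)\oplus u^\perp$, but this partition does not satisfy the alignment hypothesis of the known norm-compression inequality. The paper instead partitions along the \emph{output} vector $v$ and $y=\AA v/\norm{\AA v}_2$, which forces the $(2,1)$ block $(\II - yy^\top)\AA vv^\top$ to vanish identically; only then does the matrix reduce to an aligned form where the special case of the norm-compression conjecture is actually a theorem.
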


\begin{remark}
We note that this is the first lower bound as a function of $\epsilon$ for this problem, even for the well-studied case of $p=2$, achieving an $\Omega(1/\epsilon^{1/3})$ bound, which is tight for any constant $k$, simultaneously for all constant $p \geq 1$.  
\end{remark}

\begin{remark}
Braverman,  Hazan, Simchowitz and Woodworth \cite{braverman2020gradient} and Simchowitz, Alaoui and Recht \cite{SAR18} establish 
eigenvalue estimation lower bounds that we use in our arguments, but
their results do not directly imply low-rank approximation lower bounds 
for any matrix norm that we are aware of, including spectral low-rank approximation, i.e., $p = \infty$ (see Appendix \ref{sec:appendix-sar}). 
\end{remark}



\paragraph{Matrix Polynomials and Streaming Algorithms.}

Since our algorithms are based on iterative methods, they generalize naturally to low-rank approximations of matrices of the form $\AA\Paren{ \AA^\top \AA}^\ell$ and $\Paren{ \AA^\top \AA}^\ell$ for any integer $\ell$, given $\AA$ as input. We defer the details to Appendix \ref{sec:polynomial}.



Since we work in the matrix-vector model, our algorithms naturally extend to the multi-pass turnstile streaming setting. Notably, for $p > 2$, with $\bigO{\log(d/\epsilon)p^{1/6}/\epsilon^{1/3}}$ passes we are able to improve the $\tilde{\mathcal{O}}\Paren{ n\Paren{\frac{k n^{1-2/p} }{\epsilon^2} + \frac{k^{2/p} + n^{1-2/p}}{\epsilon^{2+2/p}}}}$ memory bound of \cite{lw20} to $\tilde{\mathcal{O}}\Paren{nk/\epsilon^{1/3}}$. We defer the details to Appendix \ref{sec:stream}.

\subsection{Open Questions}

We note that our lower bounds are tight only when the target rank $k$ and Schatten norm $p$ are fixed constants. In particular, it is open to obtain matrix-vector lower bounds that grow as a function of $k$, $p$ and $1/\epsilon$. For the important special case of Spectral low-rank approximation ($p=\infty$), it is open to obtain any lower bound that grows as a function of $1/\epsilon$, even when the target rank $k=1$ (see Appendix \ref{sec:appendix-sar} for more details). We also note that improving our upper bound to even $p^{1/6-o(1)}$ would imply a faster algorithm for Spectral low-rank approximation, addressing the main open question in~\cite{woodruff2014sketching}.


\section{Technical Overview} 
For our technical overview, we drop polylogarithmic factors appearing in the analysis and assume the input $\AA$ is a symmetric $n \times n$ matrix (we handle arbitrary $n \times d$ matrices in Section \ref{sec:upper_bound}).

\subsection{Algorithms for Low-Rank Approximation}
We first describe our algorithm for the special case of rank-$1$ approximation in the Frobenius norm, i.e., $p =2$.  Our algorithm is inspired by the Block Krylov algorithm of Musco and Musco~\cite{musco2015randomized}. Briefly, their algorithm begins with a random starting vector $g$ (block size is $1$) and computes the Krylov subspace $\KK = [ \AA g; \AA^2 g ; \ldots; \AA^q g ]$, for $q = \bigO{1/\epsilon^{1/2}}$. Next, their algorithm computes an orthonormal basis for the column span of $\KK$, denoted by a matrix $\QQ$, and outputs the top singular vector of $\QQ^\top \AA^2 \QQ$, denoted by $z$ (see Algorithm \ref{algo:simul_power_iter} for a formal description). It follows from Theorem~1, guarantee (1) in \cite{musco2015randomized} that 
\begin{equation}
\label{eqn:guarantee_k=1_p=2}
    \norm{ \AA \Paren{\II - z z^\top} }_F^2 \leq \Paren{1+\epsilon}\min_{\norm{u}_2=1 } \norm{\AA \Paren{\II - uu^\top }}_F^2,
\end{equation}
and it is easy to see that this algorithm requires $\Theta\Paren{1/\epsilon^{1/2}}$ matrix-vector products. A na\"ive analysis requires an $\bigO{1/\epsilon}$-degree polynomial in the matrix $\AA$ to obtain \eqref{eqn:guarantee_k=1_p=2}, while \cite{musco2015randomized} use Chebyshev polynomials to approximate the threshold function between first and second singular value, and save a quadratic factor in the degree. The guarantee in \eqref{eqn:guarantee_k=1_p=2} then follows from observing that the best vector in the Krylov subspace is at least as good as the one that exists using Chebyshev polynomial approximation.

\begin{mdframed}
  \begin{algorithm}[Algorithm Sketch for Frobenius rank-$1$ LRA ]
  \label{algo:sketch}
    \mbox{}
    \begin{description}
    \item[Input:] 
    An $n \times n$ symmetric matrix $\AA$,  accuracy parameter $0<\eps<1$. 
    \begin{enumerate}
    \item Run Block Krylov for $\bigO{1/\epsilon^{1/3}}$ iterations with a random starting vector $g$. Let $z_1$ be the resulting output. 
    \item Run Block Krylov for $\bigO{\log(n/\epsilon)}$ iterations, but initialize with an $n \times b$ random matrix $\GG$, where $b = \bigO{1/\epsilon^{1/3}}$. Let $z_2$ be the resulting output. 
    \end{enumerate}
    \item[Output:]  $z = \arg\max_{z_1, z_2 } \Paren{ \norm{\AA z_1}^2_2, \norm{\AA z_2}^2_2 } $.
    \end{description}
  \end{algorithm}
\end{mdframed}

Our starting point is the observation that while we require degree $\Theta\Paren{1/\epsilon^{1/2}}$ to separate the first and second singular values, if any subsequent singular value is sufficiently separated from $\sigma_1$, a significantly smaller degree polynomial suffices. In the context of Krylov methods, this translates to the intuition that starting with a matrix $\GG$ with $b$ columns (block size is $b$) should result in fewer iterations to find some vector in the top $b$ subspace of $\AA$. On the other hand, if no such singular value exists, the norm of the tail must be large and we can get away with a less accurate solution.  We show that we can indeed exploit this trade-off by running Block Krylov on two different scales in parallel and then combine the solution. In particular, we use Algorithm \ref{algo:sketch}.

Algorithm \ref{algo:sketch} captures the extreme points of the trade-off between the size of the starting matrix and the number of iterations, such that the total number of matrix-vector products is at most $\tilde{\mathcal{O}}({1/\epsilon^{1/3}})$. Further, we can compute the squared Euclidean norms of $\AA z_1$ and $\AA z_2$ with an additional matrix-vector product, and it remains to analyze the Frobenius cost of projecting $\AA$ on the subspace $\II - zz^\top$, where $z$ is the unit vector output by Algorithm \ref{algo:sketch}.

Using gap-independent guarantees for Block Krylov (see Lemma \ref{lem:gap_independent} for a formal statement), it follows that with $\bigO{1/\epsilon^{1/3}}$ iterations, we have 
\begin{equation}
\label{eqn:intro_gap_indep}
    \norm{\AA z_1}_2^2 \geq \sigma_1^2(\AA) - \epsilon^{2/3} \sigma_{2}^2(\AA). 
\end{equation}
In contrast, using gap-dependent guarantees (see Lemma \ref{lem:gap_dependent}) for Block Krylov initialized with block size $b$, it follows that for any $\gamma>0$, running $q = \log(1/\gamma)\cdot\sqrt{ \sigma_1(\AA) / \Paren{\sigma_1(\AA) - \sigma_b(\AA)}}$ iterations results in $z_2$ such that
\begin{equation}
\label{eqn:intro_gap_dep}
    \norm{\AA z_2}_2^2 \geq \sigma_1^2(\AA) - \gamma \sigma_{2}^2(\AA). 
\end{equation}
If $\sigma_b(\AA) \leq \sigma_1(\AA)/2$, we can set $\gamma = \epsilon / n$ in Equation \eqref{eqn:intro_gap_dep} to obtain a highly accurate solution. Further, regardless of the input instance, Step 3 in Algorithm \ref{algo:sketch} ensures that we get the best of both guarantees, \eqref{eqn:intro_gap_indep} and \eqref{eqn:intro_gap_dep}. Then, observing that $\II - z z^\top$ is an orthogonal projection matrix (see Definition \ref{def:orthogonal_projection}) and using the Pythagorean Theorem for Euclidean space we have:
\begin{equation}
\label{eqn:inf_cost_lra}
    \norm{\AA \Paren{\II - z z^\top} }_F^2 = \norm{\AA }_F^2 - \norm{\AA zz^\top}_F^2 = \norm{\AA }_F^2 - \norm{\AA z}_2^2, 
\end{equation}
where the second inequality follows from unitary invariance (see Fact \ref{fact:unitary_inv}) of the Frobenius norm and that the squared Frobenius norm of a rank-$1$ matrix $\AA z$ (vector) is equal to its squared Euclidean norm. If it happens that $\sigma_2(\AA) \leq \sigma_1(\AA) /2$, i.e., a constant gap exists between the first two singular values, then since guarantee \eqref{eqn:intro_gap_dep} implies that $\norm{\AA z}_2^2 \geq \sigma_1^2(\AA) - (\epsilon/n) \sigma_2^2(\AA) $, we can plug this into \eqref{eqn:inf_cost_lra} to yield a $(1+\epsilon/n)$-approximate solution. Hence, we focus on instances where $\sigma_2(\AA) > \sigma_1(\AA) /2$. 

Consider the case where the Frobenius norm of the tail is large, i.e., $ \|\AA -\AA_1 \|_F^2 \geq \sigma_2^2(\AA) / \epsilon^{1/3}$, where $\AA_1$ is the best rank-$1$ approximation to $\AA$.  Then we only require an $\epsilon^{2/3}$-approximate solution (plugging guarantee \eqref{eqn:intro_gap_indep} into \eqref{eqn:inf_cost_lra} ) since
\begin{equation}
\label{eqn:weaker-error-guarantee}
    \norm{\AA \Paren{\II - z_1 z^\top_1} }_F^2 \leq \norm{\AA}_F^2 - \sigma_1^2(\AA) + \epsilon^{2/3} \sigma_2^2(\AA) \leq \norm{\AA - \AA_1}_F^2 + \epsilon \norm{\AA - \AA_1}_F^2.
\end{equation}
Otherwise, $\sum_{i=2}^n \sigma_i^2(\AA) <  \sigma_2^2(\AA) / \epsilon^{1/3} $, which implies that there is a constant gap between the second and $b$-th singular values, where $b= \bigO{1/\epsilon^{1/3}}$. To see this, observe if $\sigma_b(\AA) > \sigma_2(\AA)/4$, then 
$\sum_{i=2}^n \sigma_i^2(\AA) \geq  \sum_{i=2}^b \sigma_i^2(\AA) \geq b \sigma_2^2(\AA)/4$, 
which is a contradiction when $b > 10/\epsilon^{1/3}$, and thus $\sigma_b(\AA) \leq \sigma_2(\AA)/4 < \sigma_1/2$. Now we can apply guarantee \eqref{eqn:intro_gap_dep} with $q = \bigO{\log(n/\epsilon)}$ and conclude $\norm{\AA z}_2^2 \geq \sigma_1^2(\AA) - (\epsilon/n) \sigma_2^2(\AA) $, yielding a highly accurate solution yet again. Overall, this suffices to obtain a $(1+\epsilon)$-approximate solution with $\tilde{\mathcal{O}}({1/\epsilon^{1/3}})$ matrix-vector queries.


\paragraph{Challenges in generalizing to Schatten $p\neq 2$ and rank $k>1$.} The outline above crucially relies on the norm of interest being Frobenius. In particular, we use the Pythagorean Theorem to analyze the cost of the candidate solution in Equation \eqref{eqn:inf_cost_lra}; however, the Pythagorean Theorem does not hold for non-Euclidean spaces. Therefore, a priori, it is unclear how to analyze the Schatten-$p$ norm of a candidate rank-$1$ approximation. 
A proxy for the Pythagorean Theorem that holds for Schatten-$p$ norms is Mahler's operator inequality (see Fact \ref{fact:mahler_ortho_ineq}), which is in the right direction but holds only for $p\geq 2$, whereas
we would like to handle all $p \geq 1$. Separately, for $p>2$, the case where the tail is small corresponds to $\norm{\AA - \AA_1}_{\calS_p}^p \leq \sigma_2^p\Paren{\AA}/\epsilon^{1/3}$. Therefore, na\"ively extending the above argument requires picking a block size that scales proportional to $\bigO{2^p/\epsilon^{1/3}}$ to induce a constant gap between $\sigma_1$ and $\sigma_b$, and the number of matrix-vector products scales exponentially in $p$.   

Finally, in the above outline, we also crucially use that $\norm{\AA z z^\top}_F^2 = \norm{\AA z}_2^2$. Observe that this no longer holds
if we replace $z$ with a matrix $\ZZ$ that has $k$ orthonormal columns. Therefore, it remains unclear how to relate $\norm{\AA \ZZ }^p_{\calS_p}$ to $\norm{\AA \ZZ_{*,i}}_2^2$, yet the vector-by-vector error guarantee obtained by Block Krylov (see Lemmas \ref{lem:gap_independent} and \ref{lem:gap_dependent}) only bounds the latter.

\paragraph{Handling all Schatten-$p$ Norms and $k>1$.}
We modify our algorithm to run Block Krylov on $\AA^\top$ and obtain an orthonormal matrix $\WW$ such that for all $i \in[k]$,
\begin{equation}
\label{eqn:per-vector-w}
    \norm{\AA^\top \WW_{*,i}}^2 \geq \sigma_i^2(\AA)- \gamma \sigma^2_{k+1}(\AA),
\end{equation}
for some $\gamma>0$. We then analyze the cost $\norm{ \AA\Paren{\II - \ZZ \ZZ^\top} }_{\calS_p}^p$, where $\ZZ$ is a basis for $\AA^\top \WW$. 
Our key insight is to interpret the input matrix $\AA$ as a partitioned operator (block matrix) and invoke \emph{pinching inequalities} for such operators. Pinching inequalities were originally introduced to understand unitarily invariant norms over direct sums of Hilbert spaces~\cite{von1937some,schatten1960norm}. 
In our setting, given a block matrix $\MM = \begin{pmatrix}
        \MM^{(1)}  &  \MM^{(2)} \\\
         \MM^{(3)} & \MM^{(4)}
\end{pmatrix} $, the \emph{pinching inequality} (see Fact \ref{fact:pinching}) implies that for all $p \geq 1$, 
\begin{equation}
\label{eqn:pinching-intro}
    \norm{\MM}_{\calS_p}^p \geq \norm{\MM^{(1)} }_{\calS_p}^p + \norm{\MM^{(4)}}_{\calS_p}^p.
\end{equation}
A priori, it is unclear how to use Equation \eqref{eqn:pinching-intro} to bound $\norm{\AA \Paren{\II - \ZZ \ZZ^\top}}_{\calS_p}^p$. First, we establish a general inequality for the Schatten norm of a matrix times an orthogonal projection. Let $\PP$ and $\QQ$ be any $n \times n$ orthogonal projection matrices with rank $k$ (see Definition \ref{def:orthogonal_projection}). Then, we prove (see Lemma \ref{lem:schatten_p_orthogonal_projections} for details) that for any matrix $\AA$, 
\begin{equation}
\label{eqn:schatten-inequality-projectors-intro}
    \norm{\AA}_{\calS_p}^p \geq \norm{\PP \AA \QQ}_{\calS_p}^p + \norm{\Paren{\II - \PP} \AA \Paren{ \II - \QQ} }_{\calS_p}^p.
\end{equation}
To obtain this inequality, we use a rotation argument along with the fact that the Schatten-$p$ norms are unitarily invariant to show that
$    \norm{\AA }_{\calS_p}^p = \norm{\begin{pmatrix}
        \AA^{(1)}  &  \AA^{(2)} \\\
         \AA^{(3)} & \AA^{(4)}
\end{pmatrix}  }_{\calS_p}^p$,
where $\norm{\AA^{(1)}}_{\calS_p} = \norm{\PP \AA \QQ}_{\calS_p}$ and  $\norm{\AA^{(4)}}_{\calS_p} = \norm{\Paren{\II - \PP} \AA \Paren{\II - \QQ} }_{\calS_p}$,
and then we can apply Equation \eqref{eqn:pinching-intro} to the block matrix above.

Once we have established Equation \eqref{eqn:schatten-inequality-projectors-intro}, we can set $\PP=\WW \WW^\top $ and set $\QQ = \ZZ \ZZ^\top$ to be the projection matrix corresponding to the column span of $\AA^\top \WW \WW^\top$. Then, we have that $\PP \AA \QQ = \WW \WW^\top \AA $ and $ \Paren{\II - \PP}\AA\Paren{\II - \QQ} = \AA \Paren{\II - \ZZ \ZZ^\top}$, and combined with \eqref{eqn:schatten-inequality-projectors-intro} this yields 
\begin{equation}
\label{eqn:upper-bound-cost-sp-intro}
    \norm{ \AA \Paren{\II - \ZZ \ZZ^\top}}_{\calS_p}^p \leq \norm{ \AA }_{\calS_p}^p - \norm{ \WW \WW^\top \AA   }_{\calS_p}^p.
\end{equation}
To obtain a bound on $\norm{ \WW \WW^\top \AA   }_{\calS_p}^p$, we appeal to the per-vector guarantees in Equation \eqref{eqn:per-vector-w}. However, translating from $\ell_2^2$ error to $\sigma_p^p\Paren{\WW^\top \AA}$ incurs a mixed guarantee (see Lemma \ref{lem:correlated_projections} for details):
\begin{equation*}
     \norm{ \WW \WW^\top \AA   }_{\calS_p}^p \geq \norm{  \AA_k }_{\calS_p}^p - \bigO{ \gamma p } \sum_{i \in [k]}  \sigma_{k+1}^2\Paren{\AA} \sigma_{i}^{p-2}\Paren{\AA}.
\end{equation*}
To use this bound, we require $\sigma_1(\AA)$ to be comparable to $\sigma_{k+1}(\AA)$ and thus we require an involved case analysis, which appears in the proof of Theorem \ref{thm:optimal_schatten_p_lra}.


\paragraph{Avoiding an exponential dependence on $p$.} Our main insight here is that we do not require a block size that induces a constant gap between singular values. Instead, we first observe that if the block size $b$ is large enough such that $\sigma_b \leq \sigma_{2}/(1 + 1/p)$, then $\bigO{\log(n/\epsilon)\sqrt{p}}$ iterations suffice to obtain a vector $z$ such that $\norm{\AA z}_2^2 \geq \sigma_1^2\Paren{\AA} - \Paren{\epsilon/n} \sigma_2^2\Paren{\AA}$. Therefore, we can trade-off the threshold for the Schatten norm of the tail with the number of iterations as follows: if $\norm{\AA - \AA_1}_{\calS_p}^p \leq \frac{1}{p^{1/3} \epsilon^{1/3}} \sigma_2^p\Paren{\AA}$, then setting $b = (1+1/p)^p/(\epsilon p)^{1/3} = \Theta(1/(\epsilon p)^{1/3})$  suffices to induce a gap of $1+1/p$ with block size $b$. The total number of matrix-vector products is $\bigO{b \cdot \log(n/\epsilon) \sqrt{p}} = \tilde{\mathcal{O}}(p^{1/6}/\epsilon^{1/3})$, since $p$ can be assumed to be at most $(\log n)/\epsilon$. 
Otherwise, $\norm{\AA - \AA_1}_{\calS_p}^p > \frac{1}{p^{1/3} \epsilon^{1/3}} \sigma_2^p\Paren{\AA}$, and we only require a  $(1+ \epsilon^{2/3}/p^{1/3})$-approximate solution instead (compare with Equation \eqref{eqn:weaker-error-guarantee}). Using gap-independent bounds (see Lemma \ref{lem:gap_independent}), it suffices to start with block size $1$ and run $\bigO{\log(n/\epsilon) p^{1/6}/\epsilon^{1/3} }$ iterations to obtain a $(1 + \epsilon^{2/3}/p^{1/3})$-approximate solution.

\paragraph{Avoiding a Gap-Dependent Bound.}
We note that even when there is a constant gap between the first and second singular values, and the per vector guarantee is highly accurate, i.e., for all $i \in [k]$, 
    $\norm{\AA \ZZ_{*, i}}^2 \geq \sigma^2_i(\AA) - \poly\Paren{ \frac{\epsilon}{d}} \sigma_{k+1}^2(\AA)$, 
it is not clear how to lower bound $\norm{ \AA\ZZ }_{\calS_p}^p$ in Equation \ref{eqn:upper-bound-cost-sp-intro}. In general, the best bound we can obtain using the above equation is 
\begin{equation}
\label{eqn:bound-az-intro}
    \norm{\AA \ZZ}_{\calS_p}^p \geq \norm{\AA_k }_{\calS_p}^p - \bigO{\frac{\epsilon}{\poly(d)}} \sigma_{k+1}^2 \cdot \sum_{i \in [k]} \sigma_i^{p-2},
\end{equation}
which may be vacuous when the top $k$ singular values are significantly larger than $\sigma_{k+1}$
and $p > 2$. One could revert to a gap-dependent bound, where the error is in terms of the gap between $\sigma_1$ and $\sigma_{k+1}$, which one could account for by running an extra factor of $\bigO{\log(\sigma_1/\sigma_{k+1})}$ iterations. 

To avoid this gap-dependent bound, we split $\AA$ into a head part $\AA_H$ and a tail part $\AA_T$, such that $\AA_H$ has all singular values that are at least $\Paren{1+1/d}\sigma_{k+1}$ and $\AA_T$ has the remaining singular values. We then bound $\norm{\AA_H\Paren{\II - \ZZ \ZZ^\top} }_{\calS_p}$ and $\norm{\AA_T\Paren{\II - \ZZ \ZZ^\top}}_{\calS_p}$ separately. Repeating the above analysis, we can obtain Equation \eqref{eqn:bound-az-intro} for $\AA_T$ instead, and since all singular values larger than $\sigma_{k+1}$ in $\AA_T$ are bounded, we can obtain $\norm{\AA_T \Paren{\II - \ZZ\ZZ^\top} }_{\calS_p}^p \leq \bigO{\epsilon k/\poly(d)} \sigma_{k+1}^p$. To adapt the analysis for $\AA_T$ and obtain this
bound, we use Cauchy's interlacing theorem to relate the $j$-th singular value of 
$\AA_T \Paren{\II - \ZZ\ZZ^\top}$ to the $(i^*+j)$-th singular value of 
$\AA \Paren{\II - \ZZ\ZZ^\top}$, where $i^*$ is the rank of $\AA_H$. We lower bound the
$(i^*+j)$-th singular value of $\AA \Paren{\II - \ZZ\ZZ^\top}$ using the per vector guarantee of
\cite{musco2015randomized}. 

To bound $\norm{\AA_H\Paren{\II - \ZZ \ZZ^\top} }_{\calS_p}$, we observe it has rank at most $k$ and thus
\begin{equation*}
\norm{\AA_H\Paren{\II - \ZZ \ZZ^\top} }_{\calS_p} \leq \sqrt{k}\cdot \norm{\AA_H\Paren{\II - \ZZ \ZZ^\top} }_{F} = \sqrt{k} \cdot \sqrt{ \norm{\AA_H}_F^2  - \norm{\AA_H \ZZ}_F^2 },
\end{equation*}
and we show how to bound this term in Section~\ref{sec:upper_bound}. Intuitively, while the
$k$-dimensional subspace that we find can ``swap out" singular vectors corresponding to
singular values $\sigma_i$ for which $\sigma_i$ is very close to $\sigma_{k+1}$, since they serve
equally well for a Schatten-$p$ low-rank approximation, for singular values $\sigma_i$ that are
a bit larger than $\sigma_{k+1}$, the $k$-dimensional subspace we find cannot do this. More precisely,
if $y$ is a singular vector of $\AA_H$ with singular value $\sigma_i$, then the projection of $y$ onto 
the $k$-dimensional subspace that our algorithm finds (namely, $\ZZ$) must be at least  
$1 - \sigma_{k+1}^2/((\sigma_i^2 - \sigma_{k+1}^2)\poly(d))$, which suffices to bound the above 
since the additive error is inversely proportional to $\sigma_i^2$ when $\sigma_i^2 \gg \sigma_{k+1}^2$,
and so the very tiny additive error negates the effect of very large singular values. 

\subsection{ Query Lower Bounds.}
Our lower bounds rely on the hardness of estimating the smallest eigenvalue of a Wishart ensemble (see Definition \ref{def:wishart_ensemble}), as established in recent work of Braverman,  Hazan, Simchowitz and Woodworth~\cite{braverman2020gradient}. In particular, \cite{braverman2020gradient} show that for a $d \times d$ instance $\WW$ of a Wishart ensemble, estimating $\lambda_d(\WW)$ (minimum eigenvalue) to additive error $1/d^2$ requires $\Omega(d)$ adaptive matrix-vector product queries (see Theorem 3.1 in \cite{braverman2020gradient}). To obtain hardness for Schatten-$p$ low-rank approximation, we show that when $d=\Theta\Paren{1/\epsilon^{1/3}}$, any candidate unit vector $z$ that satisfies
    $\norm{ \Paren{\II - \WW/5}\Paren{\II - zz^\top} }_{\calS_p}^p \leq \Paren{1+\epsilon} \min_{\norm{u}_2=1 }\norm{ \Paren{\II - \WW/5}\Paren{\II - uu^\top} }_{\calS_p}^p,$
can be used to obtain an estimate $\hat{\lambda}_d = \frac{5}{p} \Paren{1- \norm{\Paren{\II - \WW/5}z}_2^p }$ such that $\hat{\lambda}_d = (1\pm 1/d^2) \lambda_d\Paren{\II - \WW/5}$. Let $\AA = \Paren{\II - \WW/5}$.  To show our query lower bound, in contrast to the analysis of our algorithm, the challenge is now to lower bound $\norm{ \AA\Paren{\II - zz^\top} }_{\calS_p}^p$ in terms of $\norm{ \AA }_{\calS_p}^p$ and $\norm{ \AA z }_{2}^p$ (contrast with Equation \eqref{eqn:upper-bound-cost-sp-intro}).


\paragraph{Projection Cost via Araki-Lieb-Thirring.} First, we note that the case of $p=2$ is easy given
the Pythagorean theorem. For $p \in [1,2)$, we can establish an inequality fairly straightforwardly: using the trace inner product definition of Schatten-$p$ (see Definition \ref{def:schatten} ) norms, we have, 
\begin{equation}
\label{eqn:lb_schatten_cost}
    \norm{ \AA \Paren{\II - zz^\top} }_{\calS_p}^p = \trace{  \Paren{\Paren{\II - zz^\top}^2  \AA^2 \Paren{\II - zz^\top}^2 }^{p/2} },
\end{equation}
Since $p/2 \in [1/2,1)$, we can use the reverse \emph{Araki-Lieb-Thirring} inequality (see Fact \ref{fact:alt_ineq}) to show that 
\begin{equation}
\label{eqn:alt-intro}
\begin{split}
    \trace{  \Paren{\Paren{\II - zz^\top}^2  \AA^2 \Paren{\II - zz^\top}^2 }^{p/2} } & \geq  \trace{  \Paren{\II - zz^\top} \AA^p \Paren{\II - zz^\top} }  \\
    & = \trace{    \AA^p  } - \trace{   \Paren{ zz^\top}^{p/2}  \Paren{ \AA^2}^{p/2} \Paren{ zz^\top}^{p/2} }\\
    & \geq  \norm{\AA}_{\calS_p}^p - \norm{  \AA z z^\top  }_{\calS_p}^p
\end{split}
\end{equation}
where we use the cyclicity of the trace and again use reverse \emph{Araki-Lieb-Thirring} (Fact \ref{fact:alt_ineq})  to show 
that 
\begin{equation*}
    \trace{   \Paren{ zz^\top}^{\frac{p}{2}}  \Paren{ \AA^2}^{\frac{p}{2}} \Paren{ zz^\top}^{\frac{p}{2}} } \leq \trace{ \Paren{    zz^\top \AA^2  zz^\top }^{p/2}  } = \norm{\AA zz^\top }_{\calS_p}^p.
\end{equation*}
Since we have $\norm{ \AA z z^\top  }_{\calS_p}^p= \norm{  \AA z  }_{2}^p$, we  conclude 
$    \norm{\AA \Paren{\II - zz^\top} }_{\calS_p}^p \geq \norm{\AA }_{\calS_p}^p - \norm{ \AA z z^\top  }_{2}^p.$
This approach only works for $p \in [1,2)$; for $p > 2$ the application of \emph{Araki-Lieb-Thirring} is  reversed in Equation \ref{eqn:alt-intro} (since $p/2 > 1$, see Fact \ref{fact:alt_ineq}) and we no longer get a lower bound on the cost in Equation \ref{eqn:lb_schatten_cost}. We therefore require a new approach. 

\paragraph{Projection Cost via Norm Compression.} Recall, $z$ is the unit vector output by our candidate low-rank approximation and let $y = \AA z/\norm{\AA z}_2$.
We yet again interpret the input matrix $\AA$ as a partitioned operator by considering the projection of $\AA$ onto $zz^\top$, $y y^\top$ and the projection away from these rank-$1$ subspaces. In particular, let $\II - y y^\top = \YY \YY^\top$, and $\II - z z^\top = \ZZ \ZZ^\top$, where $\YY$ and $\ZZ$ have orthonormal columns. Then, using a rotation argument, we show that
\begin{equation*}
    \norm{ \AA }_{\calS_p} = \norm{\begin{pmatrix}
        y^\top \AA z   &  y^\top \AA \ZZ \\\
         \YY^\top \AA z  & \YY^\top \AA \ZZ
\end{pmatrix}  }_{\calS_p}.
\end{equation*}
We define the $p$-compression of $\AA$, $\CC_{\AA, p}$: 
\begin{equation*}
   \CC_{\AA , p} = \begin{pmatrix}
        \norm{y^\top \AA z}_{\calS_p}  &  \norm{y^\top \AA \ZZ}_{\calS_p} \\\
         \norm{\YY^\top \AA z}_{\calS_p} & \norm{\YY^\top \AA \ZZ}_{\calS_p}
\end{pmatrix}.
\end{equation*}
To relate the norms of $\AA$ and $\CC_{\AA, p}$, we consider 
Audenaert's Norm Compression Conjecture~\cite{audenaert2008norm}, a question in functional analysis concerning operator inequalities (see also  ~\cite{audenaert2012problems}):

\begin{conjecture}[Schatten-$p$ Norm Compression]
\label{conj:norm_compression}
Let $\MM$ be a partitioned operator (block matrix) such that $\MM =\begin{pmatrix}
        \MM_1 & \MM_2 \\
        \MM_3 & \MM_4 \end{pmatrix}$. Let $\CC_{\MM,p} =\begin{pmatrix}
        \norm{ \MM_1}_{\calS_p} & \norm{ \MM_2}_{\calS_p} \\
        \norm{ \MM_3}_{\calS_p} & \norm{ \MM_4}_{\calS_p}\end{pmatrix}$ be a $2 \times 2$ matrix that denotes the Schatten-$p$ compression of $\MM$ for any $p \geq 1$. Then, 
    $\norm{ \MM}_{\calS_p} \geq \norm{ \CC_{\MM,p} }_{\calS_p}$ if $1\leq p \leq 2$,
and
    $\norm{ \MM}_{\calS_p} \leq \norm{ \CC_{\MM,p} }_{\calS_p}$ if $2\leq p < \infty.$
\end{conjecture}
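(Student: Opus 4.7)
My plan is to first settle the three canonical values $p\in\{1,2,\infty\}$, then try to extend to intermediate $p$ via interpolation, keeping careful track of the orientation flip at $p=2$. At $p=2$ the inequality is an equality, since both $\|\MM\|_{\calS_2}^2$ and $\|\CC_{\MM,p}\|_{\calS_2}^2$ reduce to $\sum_{i=1}^4\|\MM_i\|_{\calS_2}^2$. For $p=\infty$, writing $\|\MM\|_{\calS_\infty}=\sup_{\|u\|=\|v\|=1} u^\top \MM v$ and decomposing $u=(u_1,u_2)$, $v=(v_1,v_2)$ according to the block structure, the triangle inequality produces the bilinear form $\sum_{i,j}\|\MM_{ij}\|_{\calS_\infty}\|u_i\|\|v_j\|$, which is at most $\|\CC_{\MM,\infty}\|_{\calS_\infty}$ by applying the spectral characterization to the $2\times 2$ matrix $\CC_{\MM,\infty}$ with test vectors $(\|u_1\|,\|u_2\|)$ and $(\|v_1\|,\|v_2\|)$. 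The $p=1$ case is the most delicate endpoint; I would invoke (or re-derive) the Bhatia--Kittaneh singular-value majorization for $2\times 2$ block matrices, which combined with the fact that $\|\cdot\|_{\calS_1}$ is a symmetric gauge function on singular values yields $\|\MM\|_{\calS_1}\geq \|\CC_{\MM,1}\|_{\calS_1}$.

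With these three anchors, I would propagate the inequality to $p\in(1,2)$ and $p\in(2,\infty)$ using complex interpolation on the Schatten scale. For $p\in(2,\infty)$ I would apply Riesz--Thorin to the identity map between $\calS_2$ and $\calS_\infty$, carrying both $\|\MM\|_{\calS_p}$ and $\|\CC_{\MM,p}\|_{\calS_p}$ along the interpolation and attempting to transport the $p=\infty$ inequality to intermediate $p$. For $p\in(1,2)$ I would dualize via $\|\MM\|_{\calS_p}=\sup_{\|\XX\|_{\calS_{p'}}\leq 1}\mathrm{Tr}(\XX^\top \MM)$, convert the desired lower bound into an upper bound on a dual object, interpolate between $p=1$ and $p=2$, and dualize back.

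\textbf{The main obstacle} is that the inequality genuinely reverses orientation at $p=2$, so there is no monotone family linking the three endpoints, and the Riesz--Thorin convexity of $p\mapsto\log\|\MM\|_{\calS_p}$ does not by itself imply a comparison against $\|\CC_{\MM,p}\|_{\calS_p}$ at every intermediate $p$; indeed, the full conjecture is known to resist generic interpolation and remains open. A realistic fallback, matching the paper's own use of norm compression only for \emph{aligned} partitioned operators, is to assume the four blocks $\MM_1,\MM_2,\MM_3,\MM_4$ share compatible left and right singular vectors; then $\MM$ simultaneously block-diagonalizes into a family of scalar $2\times 2$ blocks, and the problem reduces to a Schur-type inequality between $\|(\sigma_{ij})\|_{\calS_p}$ and a diagonal compression on commuting data. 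This scalar version can be handled by majorization together with Araki--Lieb--Thirring (Fact \ref{fact:alt_ineq}), the same tool already exploited in the lower-bound section, and this aligned case is the one that actually suffices for the paper's downstream arguments.
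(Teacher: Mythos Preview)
The statement you are attempting to prove is labeled a \emph{Conjecture} in the paper, not a theorem: it is Audenaert's Norm Compression Conjecture, and the paper does \emph{not} prove it. The paper only states it as motivation and then appeals to the one special case that is actually known, Fact~\ref{fact:aligned-norm-compression} (aligned blocks), which is all that is needed downstream in Lemma~\ref{lem:orthogonal_proj_block_matrices}. So there is no ``paper's own proof'' to compare against.

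You clearly already suspect this, since you write that ``the full conjecture is known to resist generic interpolation and remains open'' and then retreat to the aligned case. That instinct is exactly right, and your fallback coincides with the paper's actual strategy. But the main body of your proposal should be read as an explanation of \emph{why} the conjecture is hard, not as a proof sketch: your interpolation plan has a genuine gap that you yourself name. Riesz--Thorin gives log-convexity of $p\mapsto \|\MM\|_{\calS_p}$ and of $p\mapsto \|\CC_{\MM,p}\|_{\calS_p}$ separately, but it does not transport an inequality between two such quantities from the endpoints to the interior; two log-convex curves that touch at $p=2$ and satisfy the desired ordering at $p=1$ and $p=\infty$ can still cross in between. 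The orientation flip at $p=2$ makes this worse, not better. Your $p=1$ endpoint is also not secured: the Bhatia--Kittaneh-type majorizations for block matrices that I am aware of go in the pinching direction (diagonal blocks only), not for the full $2\times2$ compression with off-diagonal norms included.

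In short: no proof is expected here. If you want to write anything, the honest move is to state the conjecture, note that it is open, and then prove only the aligned special case as the paper does.
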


We could simply appeal to this conjecture to obtain that for all $p >2$,
\begin{equation}
\label{eqn:norm-compress-intro}
    \norm{\AA }_{\calS_p}\leq \norm{ \CC_{\AA, p} }_{\calS_p} = \norm{ \begin{pmatrix}
        \norm{y y^\top \AA zz^\top}_{\calS_p}  &  \norm{yy^\top \AA \Paren{I-zz^\top}}_{\calS_p} \\\
         \norm{\Paren{\II - yy^\top} \AA z z^\top}_{\calS_p} & \norm{\Paren{\II- yy^\top} \AA \Paren{\II- zz^\top} }_{\calS_p}\end{pmatrix}}_{\calS_p}.
\end{equation}
However, for our choice of $y$, $\norm{yy^\top \AA \Paren{I-zz^\top}}_{\calS_p}=0$. With padding and rotation arguments, we can then reduce our problem to a block matrix where the blocks in each row are aligned, i.e., each row is a scalar multiple of a fixed matrix (see Lemma \ref{lem:orthogonal_proj_block_matrices}). Then, we can use one of the few special cases of Conjecture \ref{conj:norm_compression} for aligned operators which has actually been proved, and appears in Fact \ref{fact:aligned-norm-compression}. We can thus unconditionally obtain the inequality in Equation \eqref{eqn:norm-compress-intro}.


Now that we have reduced to the case where we have a $2 \times 2$ matrix with $3$ non-zero entries, we would like to bound its Schatten-$p$ norm. We explicitly compute the singular values of $\CC_{\AA,p}$ (see Fact \ref{fact:singular_vals_2x2} ), and then use the structure of the instance to directly lower bound $\norm{\AA z}_2^p$ as follows:
\begin{equation}
\label{eqn:combined-lowerbound-intro}
    \norm{\AA z }_2^p  +  \Paren{ 1+ \bigO{ \epsilon^{2p/3}}  } \norm{ \AA - \AA_1 }_{\calS_p}^p  \geq \norm{\CC_{\AA,p}}_{\calS_p}^p \geq  \norm{\AA }_{\calS_p}^p ,
\end{equation}
where the last inequality follows from Equation \eqref{eqn:norm-compress-intro}. Since we understand the spectrum of the matrix $\AA$, we can explicitly compute all the terms in \eqref{eqn:combined-lowerbound-intro}  above and show that we can obtain an accurate estimate of the minimum singular value of $\AA$ from $\norm{\AA z}_2^p$. 
See details in Section \ref{sec:lower_bound_p_large}.

\section{Additional Related Work}

Existing approaches to solve low-rank approximation problems under several norms fall into two broad categories: iterative methods and linear sketching. Iterative methods, such as Krylov subspace based methods, are captured by the matrix-vector product framework, whereas linear sketching allows for the choice of a matrix $\SS \in \mathbb{R}^{t \times n}$, where $t$ is the number of ``queries'', and then observes the product $\SS \cdot \AA$ and so on (see~\cite{woodruff2014sketching} and references therein).  The model has important applications to streaming and distributed algorithms and several recent works have focused on estimating spectral norms and the top singular values~\cite{andoni2013eigenvalues,li2014sketching,li2016tight,bakshi2021learning}, estimating Schatten and Ky-Fan norms~\cite{li2016tight, li2017embeddings,li2016approximating, braverman2019schatten} and low-rank approximation~\cite{clarkson2013low,mm13,NN13,BDN15,cohen2016nearly}. 

In addition to studying unitarily invariant norms, such as the Schatten norm, there also has been significant amount of work on studying low-rank approximation under matrix $\ell_p$ norms~\cite{song2017low,ban2019ptas,song2020average,mahankali2021optimal} and weighted low-rank approximation~\cite{srebro2003weighted,razenshteyn2016weighted,ban2019regularized}, settings in which the problem is known to be NP-Hard. Finally, there has been a recent flurry of work on sublinear time algorithms for low-rank approximation under various structural assumptions on the input~\cite{mw17,  bakshi2018sublinear,indyk2019sample, sw19,bakshi2020robust}  and in quantum-inspired models~\cite{kerenidis2016quantum,  chia2018quantum, tang2019quantum, rebentrost2018quantum, gilyen2018quantum,   gilyen2019quantum,chepurko2020quantum}. 

\section{Preliminaries}
Given an $n \times d$ matrix $\AA$ with rank $r$, and $n\geq d$, we can compute its 
singular value decomposition, denoted by ${SVD}(\AA) = \UU 
\mathbf{\Sigma} \VV^{\top}$, such that $\UU$ is an $n \times r$ matrix with 
orthonormal columns, $\VV^{\top}$ is an $r \times d$ matrix with orthonormal 
rows and $\mathbf{\Sigma}$ is an $r \times r$ diagonal matrix. The entries 
along the diagonal are the singular values of $\AA$, denoted by 
$\sigma_1, \sigma_2 \ldots \sigma_r$. Given an integer $k \leq r$, we 
define the truncated singular value decomposition of $\AA$ that zeros out 
all but the top $k$ singular values of $\AA$, i.e.,  $\AA_k = \UU 
\mathbf{\Sigma}_k \VV^{\top}$, where $\mathbf{\Sigma}_k$ has only $k$ non-zero 
entries along the diagonal. It is well-known that the truncated SVD 
computes the best rank-$k$ approximation to $\AA$ under any unitarily invariant norm, but in particular for any Schatten-$p$ norm (defined below), we have $\AA_k = \min_{ \rank(\XX)=k  } \| \AA -\XX \|_{\calS_p}$.
More generally, for any matrix $\MM$, we use the notation $\MM_k$ and 
$\MM_{\setminus k}$ to denote the first $k$ components and all but the 
first $k$ components respectively.
We use $\MM_{i,*}$ and $\MM_{*,j}$ to 
refer to the $i^{th}$ row and $j^{th}$ column of $\MM$ respectively. 

We use the notation $\II_k$ to denote a \emph{truncated identity matrix}, that is, a square matrix with its top $k$ diagonal entries equal to one, and all other entries zero. The dimension of $\II_k$ will be determined by context.

\begin{definition}[Orthogonal Projection Matrices]
\label{def:orthogonal_projection}
Given a $d \times d$ symmetric matrix $\PP$ and $k \in [d]$, $\PP$ is a rank-$k$ orthogonal projection matrix if $\rank(\PP) =k$ and $\PP^2 = \PP$.
\end{definition}
It follows from the above definition that $\PP$ has eigenvalues that are either $0$ or $1$ and admits a singular value decomposition of the form $\UU \UU^\top$ where $\UU$ has $k$ orthonormal columns. 

\begin{definition}[Unitary Matrices]
\label{def:unitary-matrices}
Given a symmetric matrix $\UU \in \mathbb{R}^{d\times d}$ we say  $\UU$ is a unitary matrix  if $\UU^\top \UU = \UU \UU^\top = \II$.
\end{definition}

\begin{definition}[Rotation Matrices]
\label{def:rotation-matrices}
Given a symmetric matrix $\RR \in \mathbb{R}^{d\times d}$ we say  $\RR$ is a rotation matrix  if $\RR$ is unitary and $\textrm{det}\Paren{\RR} = 1$. 
\end{definition}


\begin{fact}[Courant-Fischer for Singular Values]
\label{fact:courant-fischer}
Given an $n \times d$ matrix $\AA$ with singular values $\sigma_1 \geq \sigma_2 \geq \ldots \geq  \sigma_d$, the following holds:  for all $i \in [d]$,

\begin{equation*}
    \sigma_i  = \max_{S: \hspace{0.05in} \textrm{dim}(S) = i } \hspace{0.1in} \min_{x \in S:  \hspace{0.05in} \norm{x}_2 = 1} \hspace{0.1in} \norm{ x^\top \AA  }_2.
\end{equation*}
\end{fact}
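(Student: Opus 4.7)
The plan is to reduce the claim to the classical Courant-Fischer variational characterization for eigenvalues of a symmetric PSD matrix, applied to $\AA\AA^\top$. Observe that for any vector $x \in \mathbb{R}^n$, $\|x^\top \AA\|_2^2 = x^\top \AA \AA^\top x$, so writing $\AA = \UU \Sigma \VV^\top$ via SVD, the eigenvalues of $\AA\AA^\top$ are exactly $\sigma_1^2 \geq \cdots \geq \sigma_d^2$ (padded with zeros if $n > d$), with eigenvectors given by the columns $u_1, \ldots, u_n$ of $\UU$ (extended to an orthonormal basis of $\mathbb{R}^n$ if necessary). Thus it suffices to show that
\[
\sigma_i^2 \;=\; \max_{S : \dim(S)=i} \; \min_{x \in S,\, \|x\|_2 = 1} \; x^\top \AA \AA^\top x.
\]

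For the achievability direction ($\geq$), I would exhibit the candidate subspace $S^\star := \mathrm{span}(u_1, \ldots, u_i)$. For any unit $x \in S^\star$, expand $x = \sum_{j=1}^i c_j u_j$ with $\sum_{j=1}^i c_j^2 = 1$; then orthonormality of the $u_j$ gives
\[
x^\top \AA\AA^\top x \;=\; \sum_{j=1}^i c_j^2 \sigma_j^2 \;\geq\; \sigma_i^2 \sum_{j=1}^i c_j^2 \;=\; \sigma_i^2,
\]
since $\sigma_j \geq \sigma_i$ for $j \leq i$. This shows the max-min is at least $\sigma_i^2$.

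For the matching upper bound ($\leq$), I would use a dimension-counting argument. Fix any $i$-dimensional subspace $S \subseteq \mathbb{R}^n$. The subspace $T := \mathrm{span}(u_i, u_{i+1}, \ldots, u_n)$ has dimension $n - i + 1$, so $\dim(S) + \dim(T) = n+1 > n$, and therefore $S \cap T$ contains a unit vector $x^\star$. Expanding $x^\star = \sum_{j \geq i} c_j u_j$ with $\sum_{j \geq i} c_j^2 = 1$, the same computation yields
\[
(x^\star)^\top \AA\AA^\top x^\star \;=\; \sum_{j \geq i} c_j^2 \sigma_j^2 \;\leq\; \sigma_i^2,
\]
so the inner minimum over $S$ is at most $\sigma_i^2$. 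Taking the max over $S$ preserves this bound, completing the proof.

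There is no real obstacle here; the only mild subtlety is handling the case $n > d$, where $\AA\AA^\top$ has $n - d$ extra zero eigenvalues, but this is harmless since we only query singular indices $i \in [d]$ and the inequalities above are unchanged when the extra $u_j$'s correspond to $\sigma_j = 0$. Taking square roots at the end recovers the stated identity for $\sigma_i$ itself.
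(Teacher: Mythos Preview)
Your proof is correct and follows the standard derivation of the Courant--Fischer theorem for singular values via the eigenvalue version applied to $\AA\AA^\top$. The paper does not provide its own proof of this statement---it is stated as a background \emph{Fact} without argument---so there is nothing to compare against; your write-up would serve perfectly well as a self-contained justification.
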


\begin{fact}[Weyl's Inequality for Singular Values (see Exercise 22~\cite{tao2020notes})]
\label{fact:cauchy}
Given $n \times d$ matrices $\XX, \YY$, for any $i, (j-1) \in [d]$ such that $i +j \leq d$ , 
\[
\sigma_{i+j}\Paren{\XX + \YY } \leq  \sigma_{i}(\XX )  +   \sigma_{j+1}(\YY).
\]
\end{fact}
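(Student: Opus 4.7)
The plan is to prove the equivalent symmetric form $\sigma_{a+b-1}(X+Y) \leq \sigma_a(X) + \sigma_b(Y)$, which matches the stated inequality under the substitution $a = i$, $b = j+1$. I would invoke the Eckart--Young--Mirsky characterization of singular values as operator-norm distances to low-rank matrices, namely $\sigma_{k+1}(A) = \min_{\mathrm{rank}(M) \leq k}\|A - M\|_{2}$, which follows directly from the truncated SVD discussion in the preliminaries together with unitary invariance of the operator norm.

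First I would let $X_{a-1}$ and $Y_{b-1}$ denote the truncated SVDs of rank $a-1$ and $b-1$ respectively, so that Eckart--Young yields $\|X - X_{a-1}\|_{2} = \sigma_a(X)$ and $\|Y - Y_{b-1}\|_{2} = \sigma_b(Y)$. The candidate $M := X_{a-1} + Y_{b-1}$ has rank at most $(a-1)+(b-1) = a+b-2$, so applying Eckart--Young to $X+Y$ against this candidate together with the triangle inequality for the operator norm immediately gives
\[
\sigma_{a+b-1}(X+Y) \;\leq\; \|(X+Y) - M\|_{2} \;\leq\; \|X - X_{a-1}\|_{2} + \|Y - Y_{b-1}\|_{2} \;=\; \sigma_a(X) + \sigma_b(Y),
\]
which is the claim.

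Since Eckart--Young is not itself formally stated in the excerpt, a self-contained alternative uses only Courant--Fischer (\cref{fact:courant-fischer}): derive the dual min--max form $\sigma_k(A) = \min_{\dim(T) = n-k+1}\max_{x \in T,\ \|x\|_2=1} \|x^\top A\|_2$ by taking orthogonal complements of the maximizing subspaces in the max--min definition, and then take $T$ to be the intersection of the orthogonal complements of the top $a-1$ left singular vectors of $X$ and the top $b-1$ left singular vectors of $Y$. By standard dimension counting this $T$ has dimension at least $n - (a+b-1) + 1$, and every unit vector $x \in T$ satisfies $\|x^\top X\|_2 \leq \sigma_a(X)$ and $\|x^\top Y\|_2 \leq \sigma_b(Y)$ simultaneously, so the triangle inequality in $\mathbb{R}^d$ yields $\|x^\top(X+Y)\|_2 \leq \sigma_a(X)+\sigma_b(Y)$ uniformly on $T$, and the min--max characterization gives the bound. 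The only real obstacle is index bookkeeping and the degenerate case where $X$ or $Y$ has rank smaller than the truncation level; in that case the relevant singular value is zero and the inequality is trivial.
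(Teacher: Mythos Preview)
Your proof is correct. Both routes you give are standard and complete: the Eckart--Young argument is the cleanest, and your Courant--Fischer alternative is also sound once the dual min--max form is in hand (the dimension count and the orthogonality-to-top-singular-directions bound are exactly right).

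There is no comparison to make with the paper, because the paper does not prove this statement. It is stated as a \emph{Fact} in the preliminaries with an external citation (Exercise~22 of Tao's notes) and is invoked later without argument. Your writeup therefore supplies a proof where the paper supplies none; either of your two approaches would be an acceptable justification if one wanted the paper to be self-contained on this point.
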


\begin{fact}[Bernoulli's Inequality]
\label{fact:bernoulli}
For any $x ,  p \in \mathbb{R}$ such that $x\geq -1$ and $p \geq 1$, $\Paren{1+x}^p \geq 1 + px$. 
\end{fact}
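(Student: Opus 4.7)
The plan is to reduce to elementary calculus on the function $f(x) = (1+x)^p - 1 - px$ over the domain $[-1, \infty)$, and show $f(x) \geq 0$ throughout with equality at $x = 0$.

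First I would differentiate: for $x > -1$, $f'(x) = p(1+x)^{p-1} - p$. Since $p \geq 1$, the exponent $p - 1$ is nonnegative, so the map $x \mapsto (1+x)^{p-1}$ is nondecreasing on $(-1, \infty)$ and equals $1$ at $x = 0$. Hence $f'(x) \leq 0$ on $(-1, 0]$ and $f'(x) \geq 0$ on $[0, \infty)$, so $f$ attains its global minimum on $(-1, \infty)$ at $x = 0$, where $f(0) = 0$. This yields the claimed inequality on the open interval $(-1, \infty)$. The boundary case $x = -1$ follows by direct substitution: $(1+x)^p = 0$ while $1 + px = 1 - p \leq 0$ since $p \geq 1$, so the inequality holds there as well.

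A slightly slicker alternative uses convexity: for $p \geq 1$, the function $g(x) = (1+x)^p$ satisfies $g''(x) = p(p-1)(1+x)^{p-2} \geq 0$ on $(-1, \infty)$, so $g$ is convex there, and a convex differentiable function lies weakly above each of its tangent lines; the tangent to $g$ at $x = 0$ is precisely $y = 1 + px$, which immediately gives the inequality.

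There is no substantive obstacle here. The only minor points of care are the endpoint $x = -1$ (handled by direct substitution) and the degenerate case $p = 1$, where the two sides coincide identically. Both arguments extend to the strict-inequality refinement $(1+x)^p > 1 + px$ when $p > 1$ and $x \neq 0$, via strict convexity of $g$ on $(-1, \infty)$ when $p > 1$.
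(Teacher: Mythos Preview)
Your proof is correct and entirely standard; both the direct derivative argument and the convexity variant are valid, and you have handled the endpoint $x=-1$ and the degenerate case $p=1$ cleanly. The paper itself states Bernoulli's inequality as a \emph{Fact} without proof, treating it as a well-known classical result, so there is no proof in the paper to compare against.
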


\paragraph{Schatten Norms and Trace Inequalities.} We recall some basic facts for Schatten-$p$ norms. We also require the following trace and operator inequalities.

\begin{definition}[Schatten-$p$ Norm]
\label{def:schatten}
Given a matrix $\AA \in \mathbb{R}^{n \times d}$, let $\sigma_1 \geq \sigma_2 \geq \ldots \geq \sigma_d $ be the singular values of $\AA$. Then, for any $p \in [0, \infty)$, the Schatten-$p$ norm of $\AA$ is defined as 
\[
\norm{\AA }_{\calS_p} =\trace{ \Paren{\AA^\top \AA}^{p/2} }^{1/p} = \Paren{ \sum_{i \in [d]} \sigma_i^p(\AA) }^{1/p}.
\]
\end{definition}

\begin{fact}[Schatten-$p$ norms are Unitarily Invariant]
\label{fact:unitary_inv}
Given an $n\times d$ matrix $\MM$, for any $m\times n$ matrix $\UU$ with orthonormal columns, a norm $\|\cdot \|_X$ is defined to be unitarily invariant if $\|\UU\MM \|_X = \|\MM \|_X$. The Schatten-$p$ norm is unitarily invariant for all $p\geq1$. 
\end{fact}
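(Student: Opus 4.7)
The plan is to reduce the claim to showing that left-multiplication by $\UU$ preserves the singular values of $\MM$, after which equality of the Schatten-$p$ norms is immediate from Definition \ref{def:schatten}, since that definition depends on $\MM$ only through its singular values (equivalently, through $\MM^\top \MM$).

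Concretely, the first step is to compute $(\UU\MM)^\top(\UU\MM) = \MM^\top \UU^\top \UU \MM$. Because $\UU$ has orthonormal columns, $\UU^\top \UU = \II_n$, so this product collapses to $\MM^\top \MM$. Raising to the $(p/2)$-th power and taking the trace then gives
\[
\trace{\bigl((\UU\MM)^\top(\UU\MM)\bigr)^{p/2}} = \trace{(\MM^\top \MM)^{p/2}},
\]
and taking $p$-th roots yields $\|\UU\MM\|_{\calS_p} = \|\MM\|_{\calS_p}$, which is exactly the unitary invariance claim for all $p \geq 1$.

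There is no real obstacle here; the only subtlety worth flagging is that $\UU$ is not assumed to be square, only to have orthonormal columns, so $\UU\UU^\top$ need not equal the identity. What matters is the one-sided identity $\UU^\top \UU = \II_n$, which is precisely what makes the $\UU^\top \UU$ factor disappear inside the trace. The same argument also shows invariance under right-multiplication by a matrix with orthonormal rows, by applying the identity to $\MM^\top$ and using that $\|\MM^\top\|_{\calS_p} = \|\MM\|_{\calS_p}$ (which follows from the symmetry of the singular value definition).
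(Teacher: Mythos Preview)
Your argument is correct: the identity $\UU^\top\UU=\II_n$ forces $(\UU\MM)^\top(\UU\MM)=\MM^\top\MM$, so the singular values and hence the Schatten-$p$ norm are preserved. The paper itself states this as a fact without proof, so there is no approach to compare against; your proposal is exactly the standard verification one would supply.
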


There exists a closed-form expression for the low-rank approximation problem under Schatten-$p$ norms:

\begin{fact}[Schatten-$p$ Low-Rank Approximation]
\label{fact:schtatten_lra}
Given a matrix $\AA \in \mathbb{R}^{n \times d}$ and an integer $k \in \mathbb{N}$, 
\begin{equation*}
    \AA_k = \arg\min_{\rank(\XX) \leq k} \norm{\AA -\XX}_{\calS_p},
\end{equation*}
where $\AA_k$ is the truncated SVD of $\AA$.
\end{fact}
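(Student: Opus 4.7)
The plan is to invoke Weyl's inequality for singular values (Fact \ref{fact:cauchy}) to lower bound each singular value of $\AA - \XX$, for any rank-$k$ competitor $\XX$, by a corresponding tail singular value of $\AA$. Summing the $p$-th powers then immediately gives the Schatten-$p$ optimality of the truncated SVD $\AA_k$.

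More concretely, fix any $\XX \in \mathbb{R}^{n \times d}$ with $\rank(\XX) \leq k$, so that $\sigma_{k+1}(\XX) = 0$. First I would apply Fact \ref{fact:cauchy} to the decomposition $\AA = (\AA-\XX) + \XX$ with the roles in the inequality being $(\AA-\XX)$ playing the part of the first summand and $\XX$ the second. Setting the shift parameter to $j = k$ yields, for every $i \geq 1$,
\begin{equation*}
\sigma_{i+k}(\AA) \;\leq\; \sigma_{i}(\AA - \XX) \,+\, \sigma_{k+1}(\XX) \;=\; \sigma_{i}(\AA - \XX).
\end{equation*}
Thus the $i$-th singular value of the residual $\AA - \XX$ is at least the $(i+k)$-th singular value of $\AA$ itself.

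Next I would raise this pointwise inequality to the $p$-th power (valid since singular values and $p$ are nonnegative) and sum over $i$. Using Definition \ref{def:schatten} to write the Schatten-$p$ norm as an $\ell_p$ sum of singular values,
\begin{equation*}
\norm{\AA - \XX}_{\calS_p}^p \;=\; \sum_{i \geq 1} \sigma_i^p(\AA - \XX) \;\geq\; \sum_{i \geq 1} \sigma_{i+k}^p(\AA) \;=\; \sum_{j > k} \sigma_j^p(\AA) \;=\; \norm{\AA - \AA_k}_{\calS_p}^p,
\end{equation*}
where in the last equality I use that $\AA_k$ agrees with $\AA$ on the top $k$ singular components and zeros out the rest, so $\AA - \AA_k$ has singular values exactly $\sigma_{k+1}(\AA), \sigma_{k+2}(\AA), \dots$. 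Taking $p$-th roots gives $\norm{\AA - \XX}_{\calS_p} \geq \norm{\AA - \AA_k}_{\calS_p}$. Finally, since $\rank(\AA_k) \leq k$ by construction, $\AA_k$ is a feasible minimizer and hence achieves the optimum.

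The only mild obstacle is carefully matching the indexing in Fact \ref{fact:cauchy}, which is stated with the shift on the second summand; once one chooses the correct assignment of $\XX$ and $\YY$ in that inequality and specializes $j = k$, the rest is a one-line calculation. No additional structural assumptions on $\AA$ (symmetry, square shape, etc.) are needed, and the argument goes through for all $p \in [1,\infty)$ with no change.
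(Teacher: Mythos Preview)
Your argument is correct: this is exactly the standard Mirsky-type proof of Eckart--Young for unitarily invariant norms, specialized to the Schatten-$p$ case. Applying Weyl's inequality (Fact~\ref{fact:cauchy}) with the residual $\AA-\XX$ as the first summand and the rank-$k$ matrix $\XX$ as the second, the shift $j=k$ kills $\sigma_{k+1}(\XX)$ and yields the pointwise bound $\sigma_i(\AA-\XX)\ge\sigma_{i+k}(\AA)$; summing $p$-th powers gives the result. The only thing to be slightly careful about is the index range in Fact~\ref{fact:cauchy} as stated (it requires $i+j\le d$), but for the remaining indices $i>d-k$ one has $\sigma_{i+k}(\AA)=0$ and the inequality is trivial.

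As for comparison with the paper: there is nothing to compare. The paper records this as a background fact in the preliminaries and gives no proof, treating it as the well-known Eckart--Young--Mirsky theorem. Your proof is the standard one and would be entirely appropriate to include if a proof were desired.
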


\begin{fact}[Araki–Lieb–Thirring Inequality~\cite{araki1990inequality}]
\label{fact:alt_ineq}
Given PSD matrices $\AA , \BB \in \mathbb{R}^{d\times d}$, for any $r\geq1$, the following inequality holds: 
\[
\trace{ \Paren{ \BB \AA \BB}^{r} } \leq \trace{\BB^r \AA^r \BB^r   }. 
\]
Further, for $0< r < 1$, the reverse holds
\[
\trace{ \Paren{ \BB \AA \BB}^{r} } \geq \trace{\BB^r \AA^r \BB^r   }. 
\]
\end{fact}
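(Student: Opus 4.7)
The plan is to prove the $r \geq 1$ direction first via the stronger \emph{log-majorization}
\begin{equation*}
\prod_{i=1}^{k} \lambda_i\bigl((BAB)^r\bigr) \leq \prod_{i=1}^{k} \lambda_i\bigl(B^r A^r B^r\bigr), \qquad 1 \leq k \leq d,
\end{equation*}
and then derive the $0 < r < 1$ direction by a substitution argument. A standard perturbation argument lets us assume $A, B \succ 0$. Once log-majorization is established, the trace inequality $\trace{(BAB)^r} \leq \trace{B^r A^r B^r}$ follows immediately, since log-majorization between sequences of positive reals implies ordinary weak majorization (apply convexity of $\exp$ to the logs of the eigenvalues), and summing all eigenvalues yields the trace.

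The log-majorization in turn reduces, via the antisymmetric tensor power trick, to a single top-eigenvalue inequality. The $k$-th exterior power $\wedge^k$ is multiplicative, preserves positivity, and satisfies $\lambda_1(\wedge^k M) = \prod_{i=1}^k \lambda_i(M)$ for PSD $M$; applying $\wedge^k$ to both sides of the desired log-majorization shows that it suffices to prove, for all PSD $A, B$,
\begin{equation*}
\lambda_1(BAB)^r \leq \lambda_1(B^r A^r B^r), \qquad r \geq 1,
\end{equation*}
and then specialize to $\wedge^k A, \wedge^k B$. Using $\lambda_1(X^* X) = \norm{X}_{op}^2$ twice, this is equivalent to the operator-norm inequality $\norm{A^{1/2} B}_{op}^r \leq \norm{A^{r/2} B^r}_{op}$.

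The main obstacle is this top-eigenvalue inequality, which is the technical heart of Araki's original contribution; I would prove it by complex interpolation. Fix unit vectors $u, v$ and consider a bounded analytic function $F(z)$ on a vertical strip, built from the holomorphic operator families $A^z$ and $B^z$ (the standard choice being a suitable variant of $\iprod{u,\,A^{rz/2} B^{rz} v}$ or a Stein-type interpolant of Schatten-valued functions). On the imaginary boundary one uses that $A^{it}, B^{it}$ are unitary for real $t$ and PSD $A, B$, which yields $|F(it)| \leq 1$; on the opposite boundary one obtains a bound in terms of $\norm{A^{r/2} B^r}_{op}$. Hadamard's three-lines theorem then interpolates at an interior point, and after supremizing over $u, v$ one rearranges to $\norm{A^{1/2} B}_{op} \leq \norm{A^{r/2} B^r}_{op}^{1/r}$. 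The subtle step is choosing $F$ so that the boundary data align exactly with the desired endpoint, since a naive choice gives a trivial bound of the form $\norm{A^{r/2}}\norm{B^r}$ instead of $\norm{A^{r/2} B^r}_{op}$.

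Finally, for $0 < r < 1$, apply the log-majorization just established to the PSD matrices $A^r, B^r$ with the larger exponent $1/r \geq 1$: this yields $\lambda\bigl((B^r A^r B^r)^{1/r}\bigr) \prec_{\log} \lambda(BAB)$. Raising both sides to the $r$-th power preserves log-majorization between positive sequences (it simply scales the logs by $r$), producing $\lambda(B^r A^r B^r) \prec_{\log} \lambda\bigl((BAB)^r\bigr)$; summing these positive eigenvalues yields the reversed trace inequality $\trace{B^r A^r B^r} \leq \trace{(BAB)^r}$, completing the argument.
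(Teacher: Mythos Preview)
The paper does not prove this statement: it is recorded as a cited Fact (from Araki's 1990 paper) and then used as a black box in the lower-bound arguments. So there is no in-paper proof to compare against.

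Your outline is essentially Araki's own argument and is correct: reduce the trace inequality to log-majorization of eigenvalues, reduce log-majorization to a single top-eigenvalue inequality via the antisymmetric tensor power trick, and recover the $0<r<1$ case by the substitution $(A,B,r)\mapsto(A^r,B^r,1/r)$ followed by raising to the $r$-th power. The passage from log-majorization to the trace inequality via weak majorization (convexity of $\exp$) is also right.

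The one step you leave imprecise is the top-eigenvalue inequality $\norm{A^{1/2}B}_{\textrm{op}}^{\,r}\le\norm{A^{r/2}B^{r}}_{\textrm{op}}$. Your complex-interpolation sketch can be made to work, but there is a much shorter route that avoids the ``subtle step'' you flag: normalize so that $\norm{A^{r/2}B^{r}}_{\textrm{op}}=1$, i.e.\ $B^{r}A^{r}B^{r}\preceq I$, equivalently (for invertible $B$) $A^{r}\preceq B^{-2r}$; the L\"owner--Heinz theorem (operator monotonicity of $t\mapsto t^{1/r}$ for $1/r\in(0,1]$) then gives $A\preceq B^{-2}$, hence $BAB\preceq I$ and $\norm{A^{1/2}B}_{\textrm{op}}\le 1$. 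With this in place your plan is complete.
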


\begin{fact}[Mahler's Orthogonal Operator Inequality, Theorem 1.7 in \cite{maher1990some}]
\label{fact:mahler_ortho_ineq}
Given $p \geq 2$, and matrices $\PP$ and $\QQ$ such that the row (column) span of $\PP$ is orthogonal to the row (column) span of $\QQ$, the following inequality holds: 
\[ \norm{\PP }^p_{\calS_p}+\norm{\QQ }^p_{\calS_p} \leq \norm{\PP +\QQ}^p_{\calS_p}. \]
\end{fact}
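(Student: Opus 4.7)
The plan is to exploit the row-span orthogonality to cancel cross terms in $(\PP+\QQ)(\PP+\QQ)^\top$ and thereby reduce Mahler's inequality to a standard super-additivity statement for $\trace{\XX^q}$ on PSD matrices. Orthogonality of the row spans means $\PP\QQ^\top = \mathbf{0}$, hence
\[
(\PP+\QQ)(\PP+\QQ)^\top = \PP\PP^\top + \PP\QQ^\top + \QQ\PP^\top + \QQ\QQ^\top = \PP\PP^\top + \QQ\QQ^\top.
\]
By the trace formula for Schatten norms (Definition~\ref{def:schatten}), $\norm{\MM}_{\calS_p}^p = \trace{(\MM\MM^\top)^{p/2}}$, so the claim is equivalent to
\[
\trace{(\PP\PP^\top + \QQ\QQ^\top)^{p/2}} \;\geq\; \trace{(\PP\PP^\top)^{p/2}} + \trace{(\QQ\QQ^\top)^{p/2}}.
\]
Setting $\AA \defeq \PP\PP^\top \succeq 0$, $\BB \defeq \QQ\QQ^\top \succeq 0$, and $q \defeq p/2 \geq 1$, this reduces the lemma to the PSD trace super-additivity
\[
\trace{(\AA+\BB)^q} \;\geq\; \trace{\AA^q} + \trace{\BB^q}, \qquad q \geq 1.
\]
The column-span variant follows by applying the same argument to $\PP^\top$ and $\QQ^\top$ and using unitary invariance of the Schatten norm (Fact~\ref{fact:unitary_inv}).

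To establish this trace super-additivity I would invoke Rotfel'd's classical theorem: for every convex $f:[0,\infty)\to\mathbb{R}$ with $f(0)\leq 0$ and PSD matrices $\AA,\BB$, one has $\trace{f(\AA+\BB)} \geq \trace{f(\AA)}+\trace{f(\BB)}$. Specializing to $f(t)=t^q$, which is convex on $[0,\infty)$ with $f(0)=0$ for every $q\geq 1$, yields the inequality. For a self-contained derivation, one first handles integer $q$ directly by expanding $(\AA+\BB)^q$ into $2^q$ words over $\{\AA,\BB\}$: the two pure words contribute $\trace{\AA^q} + \trace{\BB^q}$ while each mixed word has nonnegative trace (since a product of PSD matrices has nonnegative nonzero eigenvalues, and the trace equals the sum of these). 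For non-integer $q\in[1,\infty)$ one lifts the scalar inequality $(a+b)^q \geq a^q + b^q$ to the matrix setting through an integral representation of $t\mapsto t^q$ combined with the Loewner--Heinz operator monotonicity of the resolvent $t\mapsto t(s+t)^{-1}$, then integrates against the spectral measures of $\AA,\BB,\AA+\BB$.

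The main obstacle is the non-integer case of the trace super-additivity: the naive Löwner-order inequality $(\AA+\BB)^q \succeq \AA^q+\BB^q$ is \emph{false} for $q>1$, so the trace version cannot be read off from a pointwise operator inequality and instead relies on more delicate matrix-analytic tools (operator-monotone integral representations, or a direct appeal to Rotfel'd's theorem). All remaining ingredients --- the cancellation $\PP\QQ^\top = \mathbf{0}$, cyclicity of the trace, and the reduction of the column-span case via unitary invariance --- are immediate from the hypotheses and the Schatten norm definition.
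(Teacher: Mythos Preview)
The paper does not prove this statement: Fact~\ref{fact:mahler_ortho_ineq} is quoted as a black-box citation to \cite{maher1990some} with no argument given, so there is no ``paper's own proof'' to compare against.

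On its own merits your reduction is correct and is the standard route. Row-span orthogonality gives $\PP\QQ^\top=\mathbf{0}$, the cross terms in $(\PP+\QQ)(\PP+\QQ)^\top$ vanish, and the claim becomes the PSD trace super-additivity $\trace{(\AA+\BB)^{q}}\geq\trace{\AA^{q}}+\trace{\BB^{q}}$ for $q=p/2\geq 1$. That inequality is exactly the convex direction of Rotfel'd's theorem (the power case is often attributed to McCarthy), so your appeal to it closes the argument; the column-span case via transposition and Fact~\ref{fact:unitary_inv} is fine.

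One caveat on your ``self-contained'' integer-$q$ sketch: the parenthetical justification that ``a product of PSD matrices has nonnegative nonzero eigenvalues'' is false for products of three or more \emph{distinct} PSD factors. It \emph{is} true that every word in two fixed positive-definite letters has positive real eigenvalues, but this is a nontrivial theorem (Hillar--Johnson), not the one-line fact your parenthetical suggests. Since your primary route through Rotfel'd already handles all $q\geq 1$ uniformly, I would drop the word-by-word expansion rather than try to shore it up.
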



\begin{fact}[H{\"o}lder's Inequality for Schatten-$p$ Norms, Corollary 4.2.6 ~\cite{bhatia2013matrix}]
\label{fact:holder_schatten_p}
Given matrices $\AA , \BB^\top \in \mathbb{R}^{n \times d}$ and $p \in [1, \infty)$, the following holds 
\[
\norm{\AA \BB }_{\calS_p} \leq \norm{\AA }_{\calS_q} \cdot \norm{\BB }_{\calS_r},
\]
for any $q,r$ such that $\frac{1}{p} = \frac{1}{q} + \frac{1}{r}$.
\end{fact}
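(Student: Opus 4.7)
The plan is to reduce Schatten-$p$ H\"older to the classical H\"older inequality for real sequences, via a majorization argument on singular values. First, I would establish Horn's log-majorization inequality: for every $k \geq 1$,
\begin{equation*}
\prod_{i=1}^{k} \sigma_i(\AA\BB) \;\leq\; \prod_{i=1}^{k} \sigma_i(\AA)\,\sigma_i(\BB).
\end{equation*}
The cleanest route is through $k$-th compound (exterior power) matrices: the identity $\sigma_1(\MM^{\wedge k}) = \prod_{i=1}^k \sigma_i(\MM)$ combined with the multiplicativity $(\AA\BB)^{\wedge k} = \AA^{\wedge k}\BB^{\wedge k}$ and the submultiplicativity of the operator norm reduces the claim to the trivial $\sigma_1(\XX\YY) \leq \sigma_1(\XX)\sigma_1(\YY)$ at the level of the compounds.

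Second, I would lift log-majorization to weak majorization of the $p$-th powers. Since applying an increasing convex function entrywise preserves weak majorization (here $t \mapsto e^{pt}$ applied to the logarithms of the singular values), the previous step yields, for every $k$,
\begin{equation*}
\sum_{i=1}^{k} \sigma_i(\AA\BB)^p \;\leq\; \sum_{i=1}^{k} \sigma_i(\AA)^p \sigma_i(\BB)^p,
\end{equation*}
and in particular this holds when $k$ is the total number of singular values.

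Third, with $u_i \defeq \sigma_i(\AA)^p$ and $v_i \defeq \sigma_i(\BB)^p$, I would apply the classical $\ell_p$ H\"older inequality with conjugate exponents $q/p$ and $r/p$; these are conjugate precisely because the hypothesis $\tfrac{1}{p} = \tfrac{1}{q} + \tfrac{1}{r}$ rearranges to $\tfrac{p}{q} + \tfrac{p}{r} = 1$:
\begin{equation*}
\sum_i u_i v_i \;\leq\; \Paren{\sum_i u_i^{q/p}}^{p/q} \Paren{\sum_i v_i^{r/p}}^{p/r} \;=\; \norm{\AA}_{\calS_q}^p \cdot \norm{\BB}_{\calS_r}^p.
\end{equation*}
Chaining the three steps and taking $p$-th roots gives $\norm{\AA\BB}_{\calS_p} \leq \norm{\AA}_{\calS_q} \norm{\BB}_{\calS_r}$.

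The main obstacle is the first step, Horn's log-majorization: it is the only place genuine matrix (rather than sequence) information enters. A compound-matrix proof is clean but requires a verification of the compound-matrix identities; an alternative that stays inside elementary linear algebra is Ky Fan's variational principle $\sum_{i=1}^k \sigma_i(\MM) = \max_{\UU,\VV} \mathrm{Tr}(\UU^\top \MM \VV)$ (maximum over $\UU, \VV$ with $k$ orthonormal columns), combined with an SVD-aligned choice of test matrices to get the analogous additive majorization for products, and then exponentiating. Either route is standard; once log-majorization is in hand, steps two and three are routine.
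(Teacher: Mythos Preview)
The paper does not prove this statement; it is cited as Corollary~4.2.6 of Bhatia's \emph{Matrix Analysis} and stated as a black-box fact. Your three-step argument (Horn's log-majorization via compound matrices, then weak majorization of $p$-th powers via the increasing convex map $t\mapsto e^{pt}$, then scalar H\"older with exponents $q/p$ and $r/p$) is exactly the standard textbook proof, and it is correct as written.

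One small caveat: the alternative route you sketch at the end, via Ky Fan's variational principle ``and then exponentiating,'' is muddled. Ky Fan's principle yields additive partial-sum inequalities, and you cannot exponentiate an additive majorization into a multiplicative one; log-majorization genuinely requires a determinant or compound-matrix argument. Since your primary route through compounds is sound, this does not affect the validity of the proposal, but I would drop the alternative rather than leave it in a vague form.
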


We also require \emph{pinching inequalities} that were originally introduced to relate norms for partitioned operators over direct sums of Hilbert spaces. In our context, these inequalities simplify to norm inequalities for block matrices: 



\begin{fact}[Pinching Inequalities for Schatten-$p$ Norms, \cite{bhatia2002pinchings}]
\label{fact:pinching}
Let $\MM  \in \mathbb{R}^{td \times td}$ be the following block matrix
\begin{equation*}
    \MM = \left[\begin{array}{cccc}
\MM_{(1,1)}&\MM_{(1,2)}&\cdots &\MM_{(1,t)}\\
\MM_{(2,1)}&\MM_{(2,2)}&\cdots &\MM_{(1,t)}\\
\vdots & &\ddots &\vdots \\
\MM_{(t,1)}&\MM_{(t,2)}&\cdots &\MM_{(t,t)}\\
\end{array}\right] ,
\end{equation*}
where for all $i, j \in [t]$, $\MM_{(i,j)}  \in \mathbb{R}^{d \times d}$. 
For all $p \geq 1$, the following inequality holds:
\begin{equation*}
    \Paren{ \sum_{i \in [t]} \norm{ \MM_{(i,i)}}_{\calS_p}^p   }^{1/p} \leq \norm{\MM }_{\calS_p}.
\end{equation*}
\end{fact}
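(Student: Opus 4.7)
The plan is to prove the pinching inequality by exhibiting the block-diagonal part of $\MM$ as an average of unitary conjugates of $\MM$, and then invoking unitary invariance together with the triangle inequality for the Schatten-$p$ norm. This is the classical ``pinching as averaging'' argument.

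First I would set up the averaging identity. Let $\omega = e^{2\pi \mathrm{i}/t}$ be a primitive $t$-th root of unity, and for each $k \in \{0,1,\ldots,t-1\}$ define the block-diagonal unitary
\[
\UU_k \;=\; \mathrm{diag}\Paren{\omega^{k}\II,\; \omega^{2k}\II,\; \ldots,\; \omega^{tk}\II},
\]
where each $\II$ denotes the $d \times d$ identity. A direct computation shows that the $(i,j)$ block of $\UU_k \MM \UU_k^{*}$ is $\omega^{k(i-j)}\MM_{(i,j)}$. Since $\sum_{k=0}^{t-1} \omega^{k(i-j)}$ equals $t$ when $i=j$ and $0$ otherwise (the standard geometric-series identity for roots of unity), averaging over $k$ yields the key identity
\[
\frac{1}{t}\sum_{k=0}^{t-1} \UU_k \MM \UU_k^{*} \;=\; \mathrm{diag}\Paren{\MM_{(1,1)},\, \MM_{(2,2)},\, \ldots,\, \MM_{(t,t)}}.
\]

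Second, I would combine the triangle inequality for $\norm{\cdot}_{\calS_p}$ (valid for every $p \geq 1$) with unitary invariance of the Schatten norms (\cref{fact:unitary_inv}) to obtain
\[
\norm{\mathrm{diag}\Paren{\MM_{(1,1)},\ldots,\MM_{(t,t)}}}_{\calS_p} \;\leq\; \frac{1}{t}\sum_{k=0}^{t-1} \norm{\UU_k \MM \UU_k^{*}}_{\calS_p} \;=\; \norm{\MM}_{\calS_p}.
\]
Since the singular values of a block-diagonal matrix are the union, with multiplicity, of the singular values of its diagonal blocks, the left-hand side equals $\Paren{\sum_{i \in [t]} \norm{\MM_{(i,i)}}_{\calS_p}^p}^{1/p}$, and the claim follows after taking $p$-th roots.

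The one mild subtlety is that the $\UU_k$ are complex-valued for $t > 2$, even though the hypothesis assumes $\MM$ is real. This is not a real obstacle: Schatten-$p$ norms, unitary invariance, and the triangle inequality all extend verbatim to complex matrices, and the averaged matrix is itself real. If one prefers to stay entirely inside the reals, the $t=2$ case can be handled using the real unitary $\mathrm{diag}(\II, -\II)$, and general $t$ then follows by padding to the next power of two (by zero blocks, which preserves both sides of the inequality) and iterating the binary pinching $\log_2 t$ times. There is no genuinely hard step; the only conceptual move is recognizing that diagonal unitaries with roots-of-unity entries implement the off-diagonal cancellation needed to extract the diagonal blocks.
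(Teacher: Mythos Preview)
Your proof is correct; it is the classical ``pinching as averaging'' argument and is exactly the proof behind the cited reference \cite{bhatia2002pinchings}. Note that the paper itself does not prove this statement: it is recorded as a \emph{Fact} with a citation, so there is no in-paper argument to compare against. Your handling of the real-versus-complex issue is fine; indeed the complex unitaries cause no trouble since the averaged matrix is real and all the norm inequalities hold over $\mathbb{C}$ as well. One cosmetic quibble: your final clause ``the claim follows after taking $p$-th roots'' is superfluous, since once you identify the left-hand side with $\Paren{\sum_{i}\norm{\MM_{(i,i)}}_{\calS_p}^p}^{1/p}$ you already have exactly the stated inequality.
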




We also require a norm compression inequality that is a special case of Conjecture \ref{conj:norm_compression} (and known to be true), when each block is aligned in the following sense: 

\begin{fact}[Aligned Norm Compression Inequality, Section 4.3 in \cite{audenaert2008norm}]
\label{fact:aligned-norm-compression}
Let $\MM = \begin{pmatrix}\MM_1 & \MM_2 \\
\MM_3 & \MM_4\end{pmatrix}$ such that there exist scalars $\alpha_1, \alpha_2,\beta_1 , \beta_2$ such that $\MM_1 = \alpha_1 \XX$, $\MM_2 = \alpha_2 \XX$, $\MM_3 = \beta_1 \YY $and $\MM_4 = \beta_2 \YY$. Then, for any $p \geq 2$,
\begin{equation*}
    \norm{\MM}_{\calS_p} \leq \norm{\begin{pmatrix} \norm{ \MM_1}_{\calS_p} & \norm{ \MM_2 }_{\calS_p} \\
\norm{\MM_3}_{\calS_p} & \norm{\MM_4}_{\calS_p}\end{pmatrix}}_{\calS_p}.
\end{equation*}
\end{fact}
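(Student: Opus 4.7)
The plan is to follow Audenaert's strategy for aligned partitioned operators~\cite{audenaert2008norm}. The core structural observation is that the row-aligned form of $\MM$ admits the clean Kronecker-product factorization
\begin{equation*}
    \MM \;=\; \DD \cdot (A \otimes \II_n), \qquad \DD \defeq \XX \oplus \YY, \qquad A \defeq \begin{pmatrix} \alpha_1 & \alpha_2 \\ \beta_1 & \beta_2 \end{pmatrix}.
\end{equation*}
Squaring, $\MM \MM^\top = \DD \,(A A^\top \otimes \II_n)\, \DD^\top$ is a $2 \times 2$ block matrix with diagonal blocks $a\,\XX \XX^\top,\ b\,\YY \YY^\top$ and off-diagonal blocks $c\,\XX \YY^\top,\ c\,\YY \XX^\top$, where $a \defeq \alpha_1^2 + \alpha_2^2$, $b \defeq \beta_1^2 + \beta_2^2$, and $c \defeq \alpha_1 \beta_1 + \alpha_2 \beta_2$. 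The compression $\CC \CC^\top$ is the scalar $2 \times 2$ matrix obtained by substituting $\norm{\XX}_{\calS_p}^2,\ \norm{\YY}_{\calS_p}^2,\ \norm{\XX}_{\calS_p}\norm{\YY}_{\calS_p}$ for $\XX \XX^\top,\ \YY \YY^\top,\ \XX \YY^\top$, and $c' \defeq |\alpha_1||\beta_1| + |\alpha_2||\beta_2|$ for $c$. Raising to the $(p/2)$-th power and tracing, the claim reduces to $\trace{(\MM \MM^\top)^{p/2}} \leq \trace{(\CC \CC^\top)^{p/2}}$.

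First I would simplify by taking the SVD $A = P \Sig_A Q^\top$ and absorbing the right unitary $Q^\top \otimes \II_n$ via unitary invariance of Schatten norms (Fact \ref{fact:unitary_inv}), reducing to the case where the coupling $2 \times 2$ matrix has orthogonal columns. I would then split the comparison into two steps. Step (i): pass from the operator block matrix to a scalar $2 \times 2$ matrix that still uses the signed coefficient $c$, bounding the off-diagonal via Hölder's inequality for Schatten norms (Fact \ref{fact:holder_schatten_p}) in the form $\norm{\XX \YY^\top}_{\calS_{p/2}} \leq \norm{\XX}_{\calS_p} \norm{\YY}_{\calS_p}$ and compressing the diagonal blocks by a suitable operator trace inequality. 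Step (ii): observe that since $|c| \leq c'$ by the triangle inequality, replacing $|c|$ by $c'$ in the $2 \times 2$ scalar PSD matrix spreads its two eigenvalues apart while preserving their sum, and since $x \mapsto x^{p/2}$ is convex on $\mathbb{R}_{\geq 0}$ for $p \geq 2$, the sum of $(p/2)$-th powers can only increase. Once both steps are in place, chaining them yields $\trace{(\MM\MM^\top)^{p/2}} \leq \trace{(\CC\CC^\top)^{p/2}}$ as required.

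The main obstacle is step (i), the operator-to-scalar passage. One must simultaneously control the non-commutativity of $\XX \XX^\top$, $\YY \YY^\top$, and the cross term $\XX \YY^\top$ under the $t^{p/2}$ functional calculus; this is where the assumption $p \geq 2$ is essential, since for $p < 2$ the direction of the inequality reverses (as anticipated by Conjecture \ref{conj:norm_compression}). The cleanest resolution, following Audenaert, applies a Rotfel'd / Bhatia-Kittaneh type trace inequality for aligned $2 \times 2$ block matrices, combined with an interpolation between the $p = 2$ endpoint, where equality holds by a direct Frobenius computation of $\trace{\MM^\top \MM}$, and the $p = \infty$ endpoint, where the inequality follows from the variational formula $\norm{\MM}_\infty = \sup_{\norm{u}_2 = \norm{v}_2 = 1} |u^\top \MM v|$ applied to test vectors split according to the outer $2 \times 2$ block structure. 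Verifying that the operator trace inequality holds with the correct sign over the entire range $p \in [2, \infty)$ is the crux of the argument.
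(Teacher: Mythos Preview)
The paper does not prove Fact~\ref{fact:aligned-norm-compression}; it is quoted as a known result from Section~4.3 of Audenaert~\cite{audenaert2008norm} and invoked as a black box in the proof of Lemma~\ref{lem:orthogonal_proj_block_matrices}. So there is no in-paper proof to compare against, only Audenaert's original argument.

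Your structural setup is correct and matches that source: the row-aligned block matrix factors as $\MM = (\XX \oplus \YY)(A \otimes \II)$, the computation of $\MM\MM^\top$ is right, and your step~(ii) --- replacing the signed cross-coefficient $c$ by $c' = |\alpha_1||\beta_1| + |\alpha_2||\beta_2|$ via the eigenvalue-spreading and convexity-of-$x^{p/2}$ argument --- is a clean and valid reduction once step~(i) is in hand.

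The gap is step~(i). You correctly flag it as ``the main obstacle'' and then describe its resolution only as ``a Rotfel'd / Bhatia--Kittaneh type trace inequality \ldots\ combined with an interpolation between the $p=2$ endpoint \ldots\ and the $p=\infty$ endpoint,'' closing with ``verifying that the operator trace inequality holds with the correct sign over the entire range $p\in[2,\infty)$ is the crux of the argument.'' That is an acknowledgment of the difficulty, not a resolution of it. Two concrete issues: first, the H\"older bound $\norm{\XX\YY^\top}_{\calS_{p/2}} \leq \norm{\XX}_{\calS_p}\norm{\YY}_{\calS_p}$ controls one off-diagonal block in isolation, but does not by itself control $\trace{(\MM\MM^\top)^{p/2}}$, because the $(p/2)$-th power of the $2\times2$ block operator mixes the non-commuting blocks $\XX\XX^\top$, $\YY\YY^\top$, $\XX\YY^\top$ in a way a single block estimate cannot capture. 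Second, the interpolation sketch does not obviously go through: complex interpolation for Schatten classes bounds $\norm{T}_{\calS_p}$ across $p$ for a \emph{fixed} operator or a holomorphic family, whereas here you want one $p$-dependent functional dominated by a different one, and equality at $p=2$ together with the inequality at $p=\infty$ does not yield the intermediate range without an analytic-family or log-convexity structure you have not supplied. As written, step~(i) is a plan for a plan; to make it a proof you need to either pin down the specific inequality Audenaert actually proves in Section~4.3 and check its hypotheses, or give a self-contained argument (for instance, an explicit singular-value majorization exploiting the aligned structure).
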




\paragraph{Random Matrix Theory.} Next, we recall some basic facts for Wishart ensembles from random matrix theory (we refer the reader to \cite{tao2012topics} for a comprehensive overview). 

\begin{definition}[Wishart Ensemble]
\label{def:wishart_ensemble}
An $n \times n$ matrix $\WW$ is sampled from a Wishart Ensemble, $\textsf{Wishart}(n)$, if $\WW = \XX \XX^{\top}$ such that for all $i,j \in[n]$ $\XX_{i,j} \sim \mathcal{N}\Paren{0, \frac{1}{n} \II}$. 
\end{definition}

\begin{fact}[Norms of a Wishart Ensemble]
\label{fact:norms_of_wishart}
Let $\WW\sim \textsf{Wishart}(n)$ such that $n = \Omega(1/\eps^3)$. Then, with probability $99/100$, $\norm{\WW}_{\textrm{op}} \leq 5$ and for any fixed constant $p$,  $\norm{\II - \frac{1}{5}\WW}^p_{\calS_p} = \Theta\Paren{ \frac{1}{\eps^{1/3}}}$.
\end{fact}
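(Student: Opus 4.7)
The plan is to handle the operator-norm bound and the Schatten-$p$ estimate separately, each with probability at least $1-1/200$, and then union-bound. Write $\WW = \XX\XX^\top$ for $\XX \in \mathbb{R}^{n\times n}$ with i.i.d.\ $\mathcal{N}(0, 1/n)$ entries, so the eigenvalues $\lambda_1 \ge \cdots \ge \lambda_n \ge 0$ of $\WW$ equal the squared singular values of $\XX$. Once the operator bound is in hand, every $\lambda_i \in [0,5]$ and $\II - \WW/5$ is positive semidefinite, so its singular values are exactly $1 - \lambda_i/5$ and
\[
\norm{\II - \tfrac{1}{5}\WW}_{\calS_p}^p \;=\; \sum_{i=1}^n \Paren{1 - \lambda_i/5}^p.
\]

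For the operator-norm step I would invoke the standard Davidson--Szarek concentration for Gaussian matrices: $\Pr[\sigma_1(\XX) \ge 2 + t] \le \exp(-nt^2/2)$. Choosing $t = \sqrt{5} - 2 > 0.2$ yields $\norm{\WW}_{\textrm{op}} = \sigma_1(\XX)^2 \le 5$ with probability $1 - e^{-\Omega(n)}$, which is far above $1 - 1/200$ in the regime $n = \Omega(1/\eps^3)$ assumed by the hypothesis.

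For the Schatten-$p$ quantity, the upper bound $\sum_i (1-\lambda_i/5)^p \le n$ is immediate on the operator-norm event because each summand lies in $[0,1]$. For the matching lower bound I would show a constant fraction of the $\lambda_i$ are separated from $5$ via a trace argument: $\mathbb{E}\big[\trace{\WW}\big] = n^2 \cdot (1/n) = n$, and $\trace{\WW}$ is a sum of $n^2$ independent, mean-$1/n$ scaled $\chi^2_1$ variables, so Bernstein's inequality yields $\trace{\WW} \le 2n$ with probability $1 - e^{-\Omega(n)}$. On that event at most $n/2$ eigenvalues can exceed $4$, so at least $n/2$ indices satisfy $1 - \lambda_i/5 \ge 1/5$, whence $\sum_i (1-\lambda_i/5)^p \ge (n/2)(1/5)^p = \Omega(n)$, with a suppressed constant depending only on the fixed $p$.

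A union bound over the two high-probability events gives $\norm{\II - \WW/5}_{\calS_p}^p = \Theta(n)$ with probability at least $99/100$, which matches the stated $\Theta(1/\eps^{1/3})$ once the value of $n$ used in the reduction is plugged in. I do not expect a serious obstacle; both concentration inequalities are textbook, and for any fixed constant $p$ the factor $(1/5)^p$ in the lower bound is absorbed into the $\Theta(\cdot)$. The one delicate point is keeping the Davidson--Szarek threshold strictly below $5$ rather than a generic $O(1)$, which is precisely why one commits to the explicit $t = \sqrt{5}-2$ above instead of a loose asymptotic constant.
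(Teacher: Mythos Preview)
The paper does not prove this statement: it is listed among the preliminaries as a ``Fact'' with a blanket pointer to standard random matrix references, so there is no paper proof to compare against. Your argument is correct and pleasantly self-contained. The Davidson--Szarek bound with the explicit choice $t=\sqrt 5-2$ gives $\|\WW\|_{\mathrm{op}}\le 5$ with exponentially small failure probability; the upper bound $\sum_i(1-\lambda_i/5)^p\le n$ is immediate on that event; and the trace-plus-Markov step (at most $n/2$ eigenvalues can exceed $4$ once $\trace{\WW}\le 2n$) cleanly yields the matching $\Omega(n)$ lower bound for any fixed constant $p$.

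One remark on the statement itself rather than your proof: as written the hypothesis $n=\Omega(1/\eps^3)$ cannot literally give a two-sided $\Theta(1/\eps^{1/3})$ conclusion, since your (correct) computation shows the quantity is $\Theta(n)$. You handled this the right way by deferring to ``the value of $n$ used in the reduction,'' which in the paper's lower-bound proofs is $n=\Theta(1/\eps^{1/3})$; that is the intended reading, and the $\Omega(1/\eps^3)$ in the Fact appears to be a typo carried over from the adjacent lemma on the minimum eigenvalue.
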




\section{Algorithms for Schatten-$p$ LRA}
\label{sec:upper_bound}

In this section, we focus on obtaining algorithms for low-rank approximation in Schatten-$p$ norm, simultaneously for all real, not necessarily constant,  $p \in [1, \bigO{\log(d)/\epsilon}]$. For the special case of $p \in \{ 2, \infty\}$, Musco and Musco~\cite{musco2015randomized} showed an algorithm with matrix-vector query complexity 
$\tilde{O}(k/\epsilon^{1/2})$, given below as Algorithm~\ref{algo:simul_power_iter}. 
We show that the number of  matrix-vector products we require scales proportional to $\tilde{O}\Paren{kp^{1/6}/\epsilon^{1/3}}$ instead. Finally, recall when $p > \log(d)/\epsilon$, it suffices to run Block Krylov for $p =\infty$, which requires $\bigO{\log(d/\epsilon)k/\sqrt{\epsilon}}$ matrix-vector products.


\begin{theorem}[Optimal Schatten-$p$ Low-Rank Approximation]
\label{thm:optimal_schatten_p_lra}
Given a matrix $\AA \in \mathbb{R}^{n \times d}$, a target rank $k \in [d]$, an accuracy parameter $\epsilon \in (0,1)$ and any $p \in [1, \bigO{\log(d)/\epsilon}]$, Algorithm \ref{algo:optimal_schatten_p_lra}
performs $\bigO{\frac{k p^{1/6} \log(d/\epsilon)}{\epsilon^{1/3}} + \log(d/\epsilon)k \sqrt{p}}$ matrix-vector products and outputs a $d \times k$ matrix $\ZZ$ with orthonormal columns such that with probability at least $9/10$,
\begin{equation*}
    \norm{ \AA \Paren{\II - \ZZ \ZZ^\top}  }_{\calS_p} \leq \Paren{ 1+ \epsilon } \min_{\UU : \hspace{0.05in} \UU^\top \UU = \II_k} \norm{ \AA \Paren{\II - \UU \UU^\top} \ }_{\calS_p}.
\end{equation*}
Further, in the RAM model, the algorithm runs in time $\mathcal{O}\Paren{\frac{\nnz(\AA) p^{1/6} k\log^2(d/\epsilon)}{\epsilon^{1/3}} + \frac{np^{(\omega-1)/6} k^{\omega-1}}{\epsilon^{(\omega-1)/3}} }$.
\end{theorem}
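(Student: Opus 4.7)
The plan is to combine two parallel executions of Block Krylov at different scales, and then analyze the resulting subspace via the orthogonal-projection pinching inequality together with a head/tail decomposition that avoids any gap dependence on $\sigma_1/\sigma_{k+1}$.

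First, I would run Block Krylov on $\AA^\top$ with two parameter settings in parallel: (i) block size $1$ and $O(k p^{1/6}\log(d/\epsilon)/\epsilon^{1/3})$ iterations, which by the gap-independent bound (Lemma \ref{lem:gap_independent}) produces $\WW_1$ with $\norm{\AA^\top \WW_{1,*,i}}_2^2 \geq \sigma_i^2(\AA) - O(\epsilon^{2/3}/p^{1/3})\sigma_{k+1}^2(\AA)$; and (ii) block size $b = \Theta(k/(\epsilon p)^{1/3})$ and $O(\sqrt{p}\log(d/\epsilon))$ iterations, which by the gap-dependent bound (Lemma \ref{lem:gap_dependent}), under the hypothesis $\sigma_b \leq \sigma_{k+1}/(1+1/p)$, produces $\WW_2$ with inverse-polynomial per-vector error. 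I form $\ZZ_i$ as an orthonormal basis for $\AA^\top \WW_i$ (obtained with $O(k)$ extra matrix-vector products) and return whichever of $\ZZ_1,\ZZ_2$ yields a smaller estimated Schatten-$p$ cost.

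The core analysis rests on the projection inequality \eqref{eqn:schatten-inequality-projectors-intro} applied with $\PP = \WW\WW^\top$ and $\QQ = \ZZ\ZZ^\top$, giving
\begin{equation*}
\norm{\AA(\II-\ZZ\ZZ^\top)}_{\calS_p}^p \;\leq\; \norm{\AA}_{\calS_p}^p - \norm{\WW\WW^\top \AA}_{\calS_p}^p,
\end{equation*}
combined with the correlated-projections lemma (Lemma \ref{lem:correlated_projections}) that turns per-vector guarantees into
\begin{equation*}
\norm{\WW\WW^\top \AA}_{\calS_p}^p \;\geq\; \norm{\AA_k}_{\calS_p}^p - O(\gamma p)\sum_{i \in [k]} \sigma_{k+1}^2(\AA)\,\sigma_i^{p-2}(\AA).
\end{equation*}
I then split into cases on the tail norm. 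If $\norm{\AA-\AA_k}_{\calS_p}^p \geq \sigma_{k+1}^p(\AA)/(p^{1/3}\epsilon^{1/3})$, the gap-independent run with $\gamma = O(\epsilon^{2/3}/p^{1/3})$ suffices, because the additive error is absorbed into $\epsilon \norm{\AA-\AA_k}_{\calS_p}^p$. Otherwise, a pigeonhole argument over the $b = \Theta(k/(\epsilon p)^{1/3})$ largest tail singular values forces $\sigma_b \leq \sigma_{k+1}/(1+1/p)$, so the gap-dependent run triggers with $\gamma = \poly(\epsilon/d)$ and yields a highly accurate solution; the total number of matrix-vector products is $O(b\sqrt{p}\log(d/\epsilon)) = \tilde{O}(k p^{1/6}/\epsilon^{1/3})$.

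The hardest step, and the reason the naive argument fails, is that the correlated-projections bound becomes vacuous when a few top singular values vastly exceed $\sigma_{k+1}$, since $\sigma_{k+1}^2 \sigma_i^{p-2}$ is then huge for $p>2$. To get around this without paying an extra $\log(\sigma_1/\sigma_{k+1})$ in iterations, I would decompose $\AA = \AA_H + \AA_T$, where $\AA_H$ contains all singular components with $\sigma_i \geq (1+1/d)\sigma_{k+1}$ and $\AA_T$ the rest, and bound $\norm{\AA_H(\II-\ZZ\ZZ^\top)}_{\calS_p}$ and $\norm{\AA_T(\II-\ZZ\ZZ^\top)}_{\calS_p}$ separately. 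For $\AA_T$ I apply the correlated-projections bound together with Cauchy's interlacing (Fact \ref{fact:cauchy}) to compare $\sigma_j(\AA_T(\II-\ZZ\ZZ^\top))$ to $\sigma_{i^\ast+j}(\AA(\II-\ZZ\ZZ^\top))$ where $i^\ast = \rank(\AA_H)$; since the surviving singular values are within $(1+1/d)$ of $\sigma_{k+1}$, the error terms collapse to $O(\epsilon k/\poly(d))\sigma_{k+1}^p$. For $\AA_H$, the matrix $\AA_H(\II-\ZZ\ZZ^\top)$ has rank at most $k$, so $\norm{\AA_H(\II-\ZZ\ZZ^\top)}_{\calS_p} \leq k^{1/p}\norm{\AA_H(\II-\ZZ\ZZ^\top)}_F$, and the Frobenius term is controlled by showing that any singular vector $y$ of $\AA_H$ with singular value $\sigma_i$ has $\norm{\ZZ^\top y}_2^2 \geq 1 - \sigma_{k+1}^2/\bigl((\sigma_i^2-\sigma_{k+1}^2)\poly(d)\bigr)$; the inverse dependence on $\sigma_i^2-\sigma_{k+1}^2$ precisely cancels the $\sigma_i^2$ that the vector contributes, leaving a $\poly(\epsilon/d)$ overall error. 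Combining the two bounds delivers the $(1+\epsilon)$ guarantee, and the RAM runtime follows by multiplying the query count by $\nnz(\AA)$ and adding the $n(k p^{1/6}/\epsilon^{1/3})^{\omega-1}$ cost of post-processing within the $O(k p^{1/6}/\epsilon^{1/3})$-dimensional Krylov subspace.
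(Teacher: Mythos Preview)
Your approach is essentially the paper's, and the core ingredients---the pinching inequality with $\PP=\WW\WW^\top$, $\QQ=\ZZ\ZZ^\top$, the correlated-projections lemma, the head/tail split $\AA=\AA_H+\AA_T$, and the interlacing argument for $\AA_T$---are all correct and match the paper's proof. However, two algorithmic points that you gloss over are handled with more care in the paper and do real work.

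First, your selection rule ``return whichever of $\ZZ_1,\ZZ_2$ yields a smaller estimated Schatten-$p$ cost'' is not implementable within the query budget: estimating $\norm{\AA(\II-\ZZ\ZZ^\top)}_{\calS_p}$ to relative error $\epsilon$ is itself as hard as the original problem. The paper instead computes rough multiplicative estimates $\hat\sigma_1,\hat\sigma_{k+1},\hat\sigma_s$ (with $s=\Theta(k/(\epsilon p)^{1/3})$) via an extra Block Krylov run of $O(\sqrt{p}\log(d/\epsilon))$ iterations, and branches on the observable tests $\hat\sigma_1^2 \ge (1+0.5/p)\hat\sigma_{k+1}^2$ and $\hat\sigma_s^2 \le \hat\sigma_{k+1}^2/(1+0.5/p)$. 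The case analysis then certifies that whichever branch fires, the corresponding $\WW_i$ has the per-vector accuracy needed for that case. Without such a mechanism you do not actually have an algorithm.

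Second, your head threshold $\sigma_i\ge(1+1/d)\sigma_{k+1}$ does not by itself guarantee a gap between the smallest head singular value and the largest tail one, which is exactly what your ``$\norm{\ZZ^\top y}_2^2\ge 1-\sigma_{k+1}^2/((\sigma_i^2-\sigma_{k+1}^2)\poly(d))$'' step needs (the denominator could vanish). The paper takes $\ell$ to be the largest index with both $\sigma_\ell>(1+0.5/p)\sigma_{k+1}$ \emph{and} $\sigma_{\ell+1}<(1-\epsilon/d)\sigma_\ell$, so that Corollary~\ref{cor:change-of-basis-high-acc} applies and gives $\norm{\AA_\ell(\II-\ZZ\ZZ^\top)}_F^2\le\poly(\epsilon/d)\,\sigma_{k+1}^2$. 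Finally, two small slips: your first run should have block size $k$ (not $1$), since Lemma~\ref{lem:gap_independent} is stated for block size equal to the target rank; and the large-tail threshold should be $k\sigma_{k+1}^p/(p^{1/3}\epsilon^{1/3})$ (with the factor $k$) for the pigeonhole to produce block size $\Theta(k/(\epsilon p)^{1/3})$.
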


We first introduce the following lemmas from Musco and Musco \cite{musco2015randomized} that provide convergence bounds for the performance of Block Krylov Iteration (Algorithm \ref{algo:simul_power_iter}) :

\begin{lemma}[Gap Independent Block Krylov with Arbitrary Accuracy]
\label{lem:gap_independent}
Let $\AA$ be an $n \times d$ matrix, $k$ be the target rank and $\gamma>0$ be an accuracy parameter.  Then, initializing Algorithm \ref{algo:simul_power_iter} with block size $k$ 
and running for  $q = \Omega\Paren{\log(d/\gamma)/\sqrt{\gamma}}$ iterations outputs a $d \times k$ matrix $\ZZ$ such that with probability $99/100$, for all $i \in [k]$,
\begin{equation*}
    \norm{\AA \ZZ_{*,i} }^2_2 = \sigma_i^2 \pm \gamma \sigma^2_{k+1}.
\end{equation*}
Further, the total number of matrix-vector products is $\bigO{kq}$ and the running time in the RAM model is $\bigO{\nnz(\AA)kq + n \Paren{kq}^2 + \Paren{kq}^{\omega} }$. 
\end{lemma}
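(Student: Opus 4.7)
The plan is to leverage the structure of the block Krylov subspace $\KK_q = [\GG, \AA^\top\AA \GG, (\AA^\top\AA)^2 \GG, \ldots, (\AA^\top\AA)^{q-1} \GG]$ together with a carefully chosen polynomial approximation to the singular value step function. First I would observe that for any univariate polynomial $p$ of degree at most $q-1$, the matrix $p(\AA^\top\AA)\GG$ lies in the column span of $\KK_q$, so the quality of the Krylov subspace is governed by how well a degree-$q$ polynomial can separate the top-$k$ squared singular values from the tail starting at $\sigma_{k+1}^2$.

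Next, I would construct a scaled Chebyshev polynomial $p$ of degree $q = \Theta(\log(d/\gamma)/\sqrt{\gamma})$ that evaluates to approximately $1$ on $[\sigma_{k+1}^2(1+\gamma/2), \sigma_1^2]$ and to magnitude at most $\poly(\gamma/d)$ on $[0, \sigma_{k+1}^2]$. The quadratic degree savings over naive power iteration is the standard Chebyshev acceleration, arising because Chebyshev polynomials grow like $\exp(\Omega(q\sqrt{\gamma}))$ just outside a rescaled $[-1,1]$ interval. Applying this polynomial, the columns of $p(\AA^\top\AA)\GG$ concentrate heavily in the top-$k$ right singular subspace of $\AA$. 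Using rotational invariance of the Gaussian starting matrix $\GG$, I would then argue that $\VV_k^\top \GG$ is well conditioned with probability at least $99/100$, so that $p(\AA^\top\AA)\GG$ retains a non-degenerate component along each of the top $k$ right singular directions.

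With this in hand, the column span of $\KK_q$ contains a $k$-dimensional subspace whose principal angles with the span of $v_1, \ldots, v_k$ are tiny. Algorithm \ref{algo:simul_power_iter} then outputs $\ZZ$ formed from the top-$k$ right singular vectors of $\AA$ restricted to this span, and the per-vector guarantee follows by combining Fact \ref{fact:courant-fischer} (Courant--Fischer applied both to $\AA$ and to its restriction to the Krylov span) with the polynomial suppression of the tail: since the subspace contains approximate copies of each $v_i$ up to an error controlled by $\poly(\gamma/d)\cdot \sigma_{k+1}^2$, one obtains $\|\AA \ZZ_{*,i}\|_2^2 \geq \sigma_i^2 - \gamma \sigma_{k+1}^2$; the matching upper bound $\sigma_i^2 + \gamma \sigma_{k+1}^2$ follows from an analogous argument together with the fact that $\ZZ_{*,i}$ is a unit vector, so $\|\AA \ZZ_{*,i}\|_2^2$ cannot substantially exceed the $i$-th singular value squared of $\AA$.

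The main obstacle is lifting the subspace-level Chebyshev guarantee into a sharp per-vector bound: a naive analysis only yields an aggregate guarantee such as a bound on $\|\AA(\II-\ZZ\ZZ^\top)\|_F^2$, but here one must track the action of $p(\AA^\top\AA)$ on each singular direction individually and run an inductive min-max argument over $i = 1, \ldots, k$. The matrix-vector count $\bigO{kq}$ is immediate since each of the $q$ blocks costs $k$ matrix-vector products, and the RAM-model running time follows from standard QR factorization plus a small SVD on an $n \times kq$ matrix, with the dominant cost being the $\Paren{kq}^\omega$ term for the SVD of the projected operator.
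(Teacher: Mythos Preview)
Your sketch is essentially a reconstruction of the Musco--Musco argument and is correct in outline, but the paper does not prove this lemma at all: immediately after stating it, the authors simply write that it ``follows directly from Theorem~1 in \cite{musco2015randomized}, using the per-vector error guarantee~(3).'' In other words, the paper treats Lemma~\ref{lem:gap_independent} as a black-box citation, whereas you are redoing the Chebyshev-plus-Gaussian-conditioning analysis from scratch. What you wrote is a reasonable (if somewhat informal) summary of how that cited theorem is proved, so nothing is wrong; it is just far more than the paper itself supplies, and for the purposes of this paper a one-line appeal to \cite{musco2015randomized} suffices.
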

The aforementioned lemma follows directly from Theorem 1 in \cite{musco2015randomized},  using the per-vector error guarantee (3).

\begin{lemma}[Gap Dependent Block Krylov, Theorem 13 \cite{musco2015randomized}]
\label{lem:gap_dependent}
Let $\AA$ be an $n \times d$ matrix and $\gamma>0$, be an accuracy parameter and $p, k\in \mathcal{N}$ be such that $b\geq k$.
Let $\sigma_1,\sigma_2 \ldots \sigma_d$ be the singular values of $\AA$. Then, initializing Algorithm \ref{algo:simul_power_iter} with block size $b$ and running for $q = \Omega\Paren{\log(n/\gamma)\sqrt{\sigma_k} /\sqrt{\sigma_k - \sigma_b}} $ iterations outputs a $d \times k$ matrix $\ZZ$ such that with probability $99/100$, for all $i \in [k]$
\begin{equation*}
    \norm{\AA \ZZ_{*,i} }^2_2 = \sigma^2_i \pm \gamma \sigma^2_{k+1}.
\end{equation*}
Further, the total number of  matrix-vector products is $\bigO{p q}$ and the running time in the RAM model is $\bigO{\nnz(\AA) b q + n\Paren{bq}^2 + \Paren{bq}^{\omega} }$. 
\end{lemma}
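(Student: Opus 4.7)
The plan is to follow the Musco--Musco gap-dependent analysis (Theorem 13 of \cite{musco2015randomized}), adapted to give the per-vector guarantee. The core idea is that with a starting block of size $b \geq k$, one only needs a polynomial of $\AA^\top \AA$ that separates the squared singular values $\sigma_k^2$ from $\sigma_b^2$, and this gap $\sigma_k^2 - \sigma_b^2$ (rather than the potentially tiny $\sigma_k^2 - \sigma_{k+1}^2$) controls the required degree of the Chebyshev polynomial. The initial block of size $b$ plays two roles: it supplies enough starting randomness so that $\UU_k^\top \GG$ is well-conditioned, and it enlarges the Krylov subspace so that we can approximate any polynomial-in-$\AA^\top\AA$ times $\AA^\top \GG$.

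First I would instantiate the Krylov subspace. Let $\GG \in \mathbb{R}^{n \times b}$ be a Gaussian starting block, form $\KK = [\AA^\top \GG, (\AA^\top \AA)\AA^\top \GG, \ldots, (\AA^\top \AA)^{q-1}\AA^\top \GG]$, and let $\QQ$ be an orthonormal basis for its column span. The output $\ZZ$ is obtained by post-processing: compute the top-$k$ right singular vectors of $\AA \QQ$ and lift them through $\QQ$. A standard Gaussian anti-concentration bound shows that $\UU_k^\top \GG$ has smallest singular value $\geq 1/\poly(n/\gamma)$ with probability $99/100$, where $\UU_k$ is the top-$k$ left singular matrix of $\AA$; this is where we need $b \geq k$.

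Second, I would construct the key polynomial. Scale so that $\sigma_1 = 1$, and apply a degree-$q$ shifted Chebyshev polynomial $T_q$ mapped linearly so that $[0, \sigma_b^2]$ goes into $[-1,1]$ and $\sigma_k^2$ lies outside. Using the exponential growth of $T_q$ outside $[-1,1]$, degree $q = \Omega(\log(n/\gamma)\sqrt{\sigma_k/(\sigma_k - \sigma_b)})$ suffices to produce a polynomial $r$ with $r(\sigma_i^2) \geq 1/\gamma'$ for $i \leq k$ and $|r(\sigma_j^2)| \leq 1$ for $j$ such that $\sigma_j \leq \sigma_b$, where $\gamma' = \gamma \sigma_{k+1}^2 / (\poly(n) \sigma_k^2)$. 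Because $r(\AA^\top \AA)\AA^\top \GG$ lies in the column span of $\KK$, the optimum over the Krylov subspace beats the candidate subspace obtained by orthonormalizing $r(\AA^\top \AA)\AA^\top \GG$.

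Third, I would convert the above into the per-vector bound. Using the analysis machinery of Musco--Musco (in particular Lemma 6 and Lemma 14 of \cite{musco2015randomized}, which bound the per-direction loss in terms of how well the candidate subspace captures each top singular direction relative to the tail), the well-conditioning of $\UU_k^\top \GG$ plus the amplification provided by $r$ implies that $\VV_k$ is captured within the candidate subspace up to residual mass at most $\gamma \sigma_{k+1}^2 / \sigma_i^2$ along direction $i$; extracting top singular vectors of $\AA \QQ$ then yields $\norm{\AA \ZZ_{*,i}}_2^2 = \sigma_i^2 \pm \gamma \sigma_{k+1}^2$. The total number of matrix-vector products is $bq$ (one per column of $\GG$ per iteration), and the RAM-model cost is dominated by forming $\KK$ ($\nnz(\AA) b q$), orthonormalizing it ($n(bq)^2$), and computing the SVD of $\QQ^\top \AA^\top \AA \QQ$ ($(bq)^\omega$).

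The main obstacle is the Chebyshev construction and its interplay with block size $b$: one must carefully normalize the polynomial so that its values on $[\sigma_b^2, \sigma_k^2]$ are benign (this interval is \emph{between} the separated regions and is not directly controlled), and one must track the $\poly(n)$ factors lost from $\UU_k^\top \GG$ so that they are absorbed by the $\log(n/\gamma)$ factor in $q$. This is exactly the role played by the $\log(n/\gamma)$ factor: it provides enough amplification to overwhelm the polynomial-in-$n$ conditioning losses while keeping the dependence on the spectral gap $\sqrt{\sigma_k/(\sigma_k - \sigma_b)}$ tight.
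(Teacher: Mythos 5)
The paper gives no proof of this lemma: it is imported verbatim as Theorem 13 of Musco--Musco (with $\bigO{pq}$ being a typo for $\bigO{bq}$, which you correctly read), so the "paper's proof" is just the citation. Your sketch faithfully reconstructs the standard argument behind that cited theorem --- Gaussian-block conditioning, a shifted Chebyshev polynomial whose degree scales as $\sqrt{\sigma_k/(\sigma_k-\sigma_b)}$ because $\sigma_k^2-\sigma_b^2\geq(\sigma_k-\sigma_b)\sigma_k$, and the per-vector extraction lemmas --- so it is consistent with, not divergent from, what the paper relies on.
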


\begin{mdframed}
  \begin{algorithm}[Optimal Schatten-$p$ Low-rank Approximation]
    \label{algo:optimal_schatten_p_lra}\mbox{}
    \begin{description}
    \item[Input:] 
    An $n \times d$ matrix $\AA$, target rank $k \leq d$, accuracy parameter $0<\eps<1$, and $p\geq1$.
    
    \begin{enumerate}
    
    \item   Let   $\gamma_1 = \eps^{2/3} / p^{1/3}$. 
    Run Block Krylov Iteration (Algorithm \ref{algo:simul_power_iter}) on $\AA^\top$ with block size $k$,
    and number of iterations $q =\bigO{ \log(d/\gamma_1)/\sqrt{\gamma_1} + \log(d/\epsilon) \sqrt{p}}$. Let $\WW_1 \in \mathbb{R}^{n \times k} $ be the corresponding output with orthonormal columns.
    \item Let   $\gamma_2 = \eps$ and let $s = \bigO{ p^{-1/3} k/\eps^{1/3} }$. Run Block Krylov Iteration (Algorithm \ref{algo:simul_power_iter}) on $\AA^\top$ with block size $s$, and number of iterations $q = \bigO{\log(d/\gamma_2) \sqrt{p}}$. Let $\WW_2 \in \mathbb{R}^{n \times k}$ be the corresponding output with orthonormal columns.
    \item  Run Block Krylov on $\AA$ with target rank $k+1$ and number of iterations $q = \bigO{ (\log(dp) + \log(d/\epsilon))\sqrt{p}  }$, and let $\hat{\ZZ}_1$ be the resulting $d\times (k+1)$ output matrix. Compute $\hat{\sigma}^2_{1} = \norm{\AA ( \hat{\ZZ}_1)_{*,1} }_2^2$ and  $\hat{\sigma}^2_{k+1} = \norm{\AA ( \hat{\ZZ}_1)_{*,k+1} }_2^2$, rough estimates of the $1$-st and $(k+1)$-st singular values of $\AA$. Run Block Krylov on $\AA$ with target rank $s$, where $s=\bigO{ p^{-1/3} k/\eps^{1/3}}$ and iterations $q=\bigO{\log(d/\epsilon)\sqrt{p}}$, and let  $\hat{\ZZ}_2$ be the resulting $d\times s$ output matrix. Compute $\hat{\sigma}^2_{s} = \norm{\AA ( \hat{\ZZ}_2)_{*,s} }_2^2$, an estimate to the $s$-th singular value of $\AA$.   
    \item 
    If $\hat{\sigma}_{1}^2 \geq (1+0.5/p) \hat{\sigma}_{k+1}^2$, set $\ZZ = \ZZ_1$. Else, 
    if $\hat \sigma^2_s\leq  \hat \sigma^2_{k+1} / \Paren{1+0.5/p}$, set $\ZZ$ to be an orthonormal basis for $\AA^\top \WW_2 \WW_2^\top$ and otherwise set $\ZZ$ to be an orthonormal basis for $\AA^\top \WW_1 \WW_1^\top$.  
    \end{enumerate}
    
    
    \item[Output:] A matrix $\ZZ \in \mathbb{R}^{d \times k}$  with orthonormal columns such that \[
    \norm{\AA \Paren{\II -\ZZ\ZZ^\top }}_{\calS_p}^p \leq \Paren{1+\epsilon } \min_{\UU : \hspace{0.05in} \UU^\top \UU =\II_k} \norm{\AA \Paren{\II - \UU \UU^\top}}_{\calS_p}^p .\] 
    \end{description}
  \end{algorithm}
\end{mdframed}

Next, we prove the following key lemma relating the Schatten-$p$ norm of row and column projections applied to a matrix $\AA$ to the Schatten-$p$ norm of the matrix itself. We can interpret this lemma as an extension of the Pythagorean Theorem to Schatten-$p$ spaces and believe this lemma is of independent interest. We note that we appeal to \emph{pinching inequality} for partitioned operators to obtain this lemma.

\begin{lemma}[Schatten-$p$ Norms for Orthogonal Projections]
\label{lem:schatten_p_orthogonal_projections}
Let $\AA$ be an $n \times d$ matrix, let $\PP$ be an $n \times n$ matrix, and let $\QQ$ be a $d\times d$ matrix such that both $\PP$ and $\QQ$ are orthogonal projection matrices of rank $k$ (see Definition \ref{def:orthogonal_projection}). Then, the following inequality holds for all $p \geq 1$:
\begin{equation*}
    \norm{\AA }_{\calS_p}^p \geq \norm{ \PP \AA \QQ }_{\calS_p}^p +  \norm{ \Paren{ \II -\PP }\AA \Paren{\II - \QQ }  }_{\calS_p}^p.
\end{equation*}
\end{lemma}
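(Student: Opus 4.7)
My plan is to rotate $\AA$ into bases adapted to $\PP$ and $\QQ$, which puts $\AA$ in $2{\times}2$ block form, and then to apply a pinching-type inequality to the resulting block matrix. First, since $\PP$ is a rank-$k$ orthogonal projection, write $\PP=\UU_1\UU_1^\top$ with $\UU_1\in\mathbb{R}^{n\times k}$ having orthonormal columns, and extend $\UU_1$ to a unitary matrix $\UU=[\UU_1\mid\UU_2]\in\mathbb{R}^{n\times n}$, so that $\II-\PP=\UU_2\UU_2^\top$. Analogously write $\QQ=\VV_1\VV_1^\top$ and extend to a unitary $\VV=[\VV_1\mid\VV_2]\in\mathbb{R}^{d\times d}$ with $\II-\QQ=\VV_2\VV_2^\top$. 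By the unitary invariance of the Schatten-$p$ norm (Fact~\ref{fact:unitary_inv}),
\[
\norm{\AA}_{\calS_p}^p=\norm{\UU^\top\AA\VV}_{\calS_p}^p=\norm{\begin{pmatrix}\UU_1^\top\AA\VV_1 & \UU_1^\top\AA\VV_2 \\ \UU_2^\top\AA\VV_1 & \UU_2^\top\AA\VV_2\end{pmatrix}}_{\calS_p}^p.
\]

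The next step is to lower-bound this block-matrix norm by the sum of the $p$-th Schatten-$p$ powers of the two diagonal blocks, which is essentially the content of the pinching inequality (Fact~\ref{fact:pinching}). Because the two diagonal blocks have different shapes, namely $k\times k$ and $(n-k)\times(d-k)$, a direct invocation of Fact~\ref{fact:pinching} is not quite valid; instead I give the standard sign-conjugation argument. Setting $\SS_r=\operatorname{diag}(\II_k,-\II_{n-k})$ and $\SS_c=\operatorname{diag}(\II_k,-\II_{d-k})$, both unitary, the averaged matrix
\[
\tfrac{1}{2}\bigl(\UU^\top\AA\VV+\SS_r(\UU^\top\AA\VV)\SS_c\bigr)=\begin{pmatrix}\UU_1^\top\AA\VV_1 & 0 \\ 0 & \UU_2^\top\AA\VV_2\end{pmatrix}
\]
has Schatten-$p$ norm at most $\norm{\UU^\top\AA\VV}_{\calS_p}$ by the triangle inequality together with unitary invariance applied to $\SS_r(\UU^\top\AA\VV)\SS_c$. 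Since the right-hand side is block diagonal, its singular values are precisely the union of those of the two blocks, and hence its Schatten-$p$ power equals $\norm{\UU_1^\top\AA\VV_1}_{\calS_p}^p+\norm{\UU_2^\top\AA\VV_2}_{\calS_p}^p$.

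Finally, I translate back using unitary invariance once more. Because $\UU_1$ and $\VV_1$ have orthonormal columns, multiplying on the left by $\UU_1$ and on the right by $\VV_1^\top$ does not alter the nonzero singular values, so
\[
\norm{\UU_1^\top\AA\VV_1}_{\calS_p}=\norm{\UU_1\UU_1^\top\AA\VV_1\VV_1^\top}_{\calS_p}=\norm{\PP\AA\QQ}_{\calS_p},
\]
and the symmetric argument yields $\norm{\UU_2^\top\AA\VV_2}_{\calS_p}=\norm{(\II-\PP)\AA(\II-\QQ)}_{\calS_p}$. Chaining the three steps gives the desired inequality. The only subtle point I anticipate is the block-size mismatch when applying the pinching inequality, and the sign-conjugation trick above sidesteps this uniformly without any need to pad with zeros.
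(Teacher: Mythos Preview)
Your proof is correct and follows essentially the same route as the paper: rotate by unitaries adapted to $\PP$ and $\QQ$, view the result as a $2\times 2$ block matrix, apply pinching, and translate back by unitary invariance. The one difference is that the paper directly cites Fact~\ref{fact:pinching}, whereas you observe that Fact~\ref{fact:pinching} as stated assumes a $td\times td$ matrix with square $d\times d$ blocks and so does not literally cover the $k\times k$ versus $(n-k)\times(d-k)$ diagonal blocks here; you therefore supply the standard sign-conjugation averaging argument inline, which is in fact the usual proof of pinching and handles rectangular blocks without padding. This makes your write-up slightly more self-contained on that point, but the underlying idea is identical.
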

\begin{proof}
Let $\AA = \UU \Sig \VV^\top$ be the SVD of $\AA$, where $\UU \in \mathbb{R}^{n \times d}$ and $\VV^\top \in \mathbb{R}^{d \times d}$ have orthonormal columns and rows respectively.
We construct unitary matrices $\RR$ and $\SS$, such that $\RR \in \mathbb{R}^{n \times n} $ and $\SS \in \mathbb{R}^{d\times d}$ 
that satisfy the following constraints: 
\begin{enumerate}
    \item $\RR^\top \II_k \RR \AA \SS^\top \II_k \SS = \PP \AA \QQ$, and
    \item $\RR^\top \Paren{\II - \II_k} \RR \AA \SS^\top \Paren{\II - \II_k} \SS = \Paren{\II - \PP} \AA \Paren{\II - \QQ}$,
\end{enumerate}
where the trunctated Identity matrix, $\II_k$, left multiplying $\AA$ is $n \times n$ and right multiplying $\AA$ is $d \times d$.

Recall, since $\PP$ is a rank-$k$ projection matrix, it admits a decomposition $\PP = \XX \XX^\top $ such that $\XX$ has $k$ orthonormal columns and similarly $\II - \PP = \YY \YY^\top$, where $\YY$ has $n -k$ orthonormal columns. Further, since $\XX$ and $\YY$ span disjoint subspaces, and the union of their span is $\mathbb{R}^n$, the matrix $\Paren{  \XX \mid \YY }$, obtained by concatenating their columns, is unitary. Then, it suffices to set $\RR = \Paren{  \XX \mid \YY }^\top$. To see this, observe,
\begin{equation*}
    \RR^\top \II_k \RR = \Paren{ \XX \mid 0 } \cdot \begin{pmatrix}
       \XX^\top \\
        0 \end{pmatrix}  = \XX \XX^\top=  \PP, 
\end{equation*}
and similarly, 
\begin{equation*}
    \RR^\top \Paren{\II - \II_k} \RR = \YY \YY^\top = \II - \PP.
\end{equation*}

We repeat the above argument for the projection matrix $\QQ$. Let $\QQ = \WW \WW^\top$, where $\WW$ is $d \times k$ and has orthonormal columns, and $\II - \QQ = \ZZ \ZZ^\top$, where $\ZZ$ is $d \times (d-k)$ and has orthonormal columns. Observe, it suffices to set $\SS = \Paren{\WW \mid \ZZ }^\top$, since $\SS$ is unitary and $\SS^\top \II_k \SS = \QQ $ and $\SS^\top \Paren{\II - \II_k} \SS =\II - \QQ$. Note, by construction, we satisfy the two aforementioned constraints.

Let $\hat{\AA} = \RR \AA \SS^\top$. Since $\RR$ and $\SS$ are unitary, it follows from unitary invariance of the Schatten-$p$ norm that 
\begin{equation}
\label{eqn:schatten_p-rotation}
    \norm{\hat{\AA}  }_{\calS_p} = \norm{ \RR \UU \Sig\VV^\top \SS^\top }_{\calS_p} = \norm{ \AA }_{\calS_p}
\end{equation}
Further, observe for any $n\times d$ matrix $\MM$, we have have the following block decomposition
\begin{equation*}
\begin{split}
    \MM & = \II_k\MM \II_k  + \II_k \MM \Paren{ \II - \II_k } + \Paren{ \II - \II_k }\MM \II_k  + \Paren{ \II - \II_k }\MM \Paren{ \II - \II_k }\\
    & = \begin{pmatrix}
    \MM_{1:k, 1:k} & \MM_{1:k, k+1 : d} \\
    \MM_{k+1:n, 1:k} & \MM_{k+1:n, k+1:d} 
    \end{pmatrix},
\end{split}
\end{equation*}
where the notation $\MM_{i:i',  j: j'}$ picks the $(i'-i+1) \times (j'-j+1)$ sized sub-matrix corresponding to the rows indices $[i,i']$ and column indices $[j,j']$.  
Since appending rows and columns of $0$'s does not change the singular values, we have $\norm{\II_k \MM \II_k }_{\calS_p} = \norm{ \MM_{1:k, 1:k} }_{\calS_p}$ and $\norm{\Paren{\II - \II_k} \MM \Paren{\II - \II_k} }_{\calS_p} = \norm{ \MM_{k+1:n, k+1:d} }_{\calS_p}$. Setting $\MM= \hat{\AA}$, we have

\begin{equation}
\label{eqn:block_matrix_pinching}
\begin{split}
    \norm{ \hat{\AA} }_{\calS_p}^p
    & = \norm{  \begin{pmatrix}
        \hat{\AA}_{1:k, 1:k} & \hat{\AA}_{1:k, k+1 : d} \\
    \hat{\AA}_{k+1:n, 1:k} & \hat{\AA}_{k+1:n, k+1:d}  
    \end{pmatrix}}_{\calS_p}^p\\
    & \geq \norm{\hat{\AA}_{1:k, 1:k}}_{\calS_p}^p + \norm{\hat{\AA}_{k+1:n, k+1:d}    }_{\calS_p}^p\\
    & = \norm{\II_k\hat{\AA}\II_k  }_{\calS_p}^p + \norm{\Paren{ \II - \II_k } \hat{\AA} \Paren{ \II - \II_k }   }_{\calS_p}^p,
\end{split}
\end{equation}
where the inequality follows from using the \emph{ pinching inequality} on the block matrix (see Fact \ref{fact:pinching}).  By the unitary invariance of the Schatten-$p$ norm, we have 
\begin{equation*}
    \norm{\II_k\hat{\AA}\II_k  }_{\calS_p}^p = \norm{\RR^\top  \II_k\hat{\AA}\II_k \SS  }_{\calS_p}^p = \norm{\PP \AA \QQ }_{\calS_p}^p, 
\end{equation*}
and similarly,
\begin{equation*}
   \norm{\Paren{ \II - \II_k } \hat{\AA} \Paren{ \II - \II_k }   }_{\calS_p}^p = \norm{\RR^\top \Paren{ \II - \II_k } \hat{\AA} \Paren{ \II - \II_k } \SS  }_{\calS_p}^p = \norm{\Paren{\II - \PP }\AA \Paren{\II - \QQ}  }_{\calS_p}^p .
\end{equation*}
Plugging these two bounds back into Equation \eqref{eqn:block_matrix_pinching}, along with Equation \eqref{eqn:schatten_p-rotation}, we can conclude,
\begin{equation*}
    \norm{\AA }_{\calS_p}^p \geq \norm{ \PP \AA \QQ }_{\calS_p}^p +  \norm{ \Paren{ \II -\PP }\AA \Paren{\II - \QQ }  }_{\calS_p}^p.
\end{equation*}

\end{proof}

\begin{mdframed}
  \begin{algorithm}[Block Krylov Iteration, \cite{musco2015randomized}]
    \label{algo:simul_power_iter}\mbox{}
    \begin{description}
    \item[Input:] An $n \times d$ matrix $\AA$, target rank $k$,  iteration count $q$ and a block size parameter $s$ such that $k\leq s \leq d$.  
    
    \begin{enumerate}
    \item Let $\UU$ be a $n \times s$ matrix such that each entry is drawn i.i.d. from $\mathcal{N}(0,1)$. Let $\KK = \left[ \AA^\top \UU ; (\AA^\top \AA) \AA^\top \UU; (\AA^\top \AA)^2 \AA^\top \UU; \ldots; (\AA^\top \AA)^q \AA^\top \UU\right]$ be the $d \times s(q+1)$  Krylov matrix obtained by concatenating the matrices $\AA^\top\UU, \ldots, \Paren{\AA^\top \AA}^q \AA^\top \UU$. 
    \item Compute an orthonomal basis $\QQ$ for the column span of $\KK$. Let $\MM = \QQ^\top \AA^\top \AA \QQ$. 
    \item Compute the top $k$ left singular vectors of $\MM$, and denote them by $\YY_k$.
    \end{enumerate}
    \item[Output:] $\ZZ = \QQ \YY_k$ 
    \end{description}
  \end{algorithm}
\end{mdframed}

Note, despite establishing Lemma \ref{lem:schatten_p_orthogonal_projections}, it is not immediately apparent how to lower bound $\norm{\AA \ZZ \ZZ^\top}_{\calS_p}^p$, where $\ZZ$ is a candidate solution. Next, we show how to translate a guarantee on the Euclidean norm of $\AA$ times a column of $\ZZ$ to a lower bound on $\norm{\AA \ZZ \ZZ^\top}_{\calS_p}^p$. 

\begin{lemma}[Per-Vector Guarantees to Schatten Norms]
\label{lem:correlated_projections}
Let $\AA$ be an $n\times d$ matrix with singular values denoted by $\{ \sigma_i\Paren{\AA } \}_{i \in [d]}$.
Let $\ZZ$ be a $d \times k$ matrix with orthonormal columns that is output by Algorithm  \ref{algo:simul_power_iter}, such that for all $i \in [k]$, with probability at least 99/100,  $ \norm{\AA\ZZ_{*,i} }_2^2 \geq \sigma_{i}^2\Paren{\AA} - \gamma_i  \sigma_{k+1}^2\Paren{\AA}$, for some $\gamma \in (0,1)$.
 Then, for any $p\geq1$, we have
\begin{equation*}
    \norm{ \AA \ZZ \ZZ^\top }_{\calS_p}^p \geq \norm{  \AA_k }_{\calS_p}^p - \sum_{i \in [k]}  \bigO{ \gamma_i p } \sigma_{k+1}^2\Paren{\AA} \sigma_{i}^{p-2}\Paren{\AA}.
\end{equation*}
\end{lemma}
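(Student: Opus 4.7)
The plan is to exploit the fact that the Block Krylov output $\ZZ$ has special structure which makes $\MM \equiv \ZZ^\top \AA^\top \AA \ZZ$ diagonal, so that the per-vector $\ell_2^2$ guarantee translates directly into per-singular-value bounds, after which a scalar Bernoulli estimate finishes the job uniformly in $p \geq 1$.

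The first step is to observe that $\ZZ^\top$ has orthonormal rows, so $\AA\ZZ\ZZ^\top$ and $\AA\ZZ$ share the same non-zero singular values, and by unitary invariance (Fact \ref{fact:unitary_inv}),
\begin{equation*}
    \norm{\AA\ZZ\ZZ^\top}_{\calS_p}^p = \sum_{i=1}^k \sigma_i^p\Paren{\AA\ZZ}.
\end{equation*}
Next, unpacking Algorithm \ref{algo:simul_power_iter}: the output is $\ZZ = \QQ\YY_k$, where $\YY_k$ collects the top-$k$ orthonormal eigenvectors of the symmetric PSD matrix $\QQ^\top\AA^\top\AA\QQ$. Consequently,
\begin{equation*}
    \MM = \ZZ^\top \AA^\top \AA \ZZ = \YY_k^\top \Paren{\QQ^\top \AA^\top \AA \QQ} \YY_k = \Lam_k,
\end{equation*}
the diagonal matrix of top-$k$ eigenvalues of $\QQ^\top\AA^\top\AA\QQ$, sorted decreasingly. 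Because $\MM$ is diagonal, its eigenvalues equal its diagonal entries, and hence $\sigma_i^2\Paren{\AA\ZZ} = \MM_{ii} = \norm{\AA\ZZ_{*,i}}_2^2$. The per-vector hypothesis of the lemma then directly yields $\sigma_i^2\Paren{\AA\ZZ} \geq \sigma_i^2\Paren{\AA} - \gamma_i \sigma_{k+1}^2\Paren{\AA}$ for every $i \in [k]$.

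The remaining step is a scalar inequality. Setting $x_i = \gamma_i \sigma_{k+1}^2\Paren{\AA}/\sigma_i^2\Paren{\AA} \in [0,1]$, Bernoulli's inequality (Fact \ref{fact:bernoulli}) with exponent $p/2 \geq 1$ handles $p \geq 2$, while for $p \in [1,2)$ the concavity of $x \mapsto (1-x)^{p/2}$ on $[0,1]$ gives the chord bound $(1-x_i)^{p/2} \geq 1 - x_i$. In either regime,
\begin{equation*}
    \sigma_i^p\Paren{\AA\ZZ} \geq \sigma_i^p\Paren{\AA}\Paren{1 - \bigO{p} x_i} = \sigma_i^p\Paren{\AA} - \bigO{\gamma_i p}\,\sigma_i^{p-2}\Paren{\AA}\,\sigma_{k+1}^2\Paren{\AA},
\end{equation*}
and summing over $i \in [k]$, together with $\norm{\AA_k}_{\calS_p}^p = \sum_i \sigma_i^p\Paren{\AA}$, delivers the claim.

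The only substantive ingredient is the observation that the algorithmic construction of $\ZZ$ makes $\MM$ exactly diagonal; everything else is elementary. This sidesteps any majorization / Schur-Horn detour, which would only yield the desired lower bound for the convex regime $p \geq 2$. Here, all $p \geq 1$ are handled uniformly by the same scalar estimate, which is why I expect no real technical obstacle beyond correctly reading off the diagonality from the algorithm.
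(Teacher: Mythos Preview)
Your proposal is correct and follows essentially the same approach as the paper: both establish $\sigma_i^2(\AA\ZZ) = \norm{\AA\ZZ_{*,i}}_2^2$ from the fact that $\ZZ = \QQ\YY_k$ with $\YY_k$ the top-$k$ eigenvectors of $\QQ^\top\AA^\top\AA\QQ$, and then apply a Bernoulli-type scalar bound and sum over $i$. Your direct diagonality observation for $\ZZ^\top\AA^\top\AA\ZZ$ is a cleaner packaging of the paper's chain of equalities, and your explicit split into $p\geq 2$ (Bernoulli) versus $p\in[1,2)$ (concavity) is slightly more careful than the paper's blanket invocation of Fact~\ref{fact:bernoulli}.
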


\begin{proof}
First, 
we observe that it suffices to show that $\sigma_i(\AA \ZZ)^2 \geq \norm{\AA z_i}_2^2$, where $z_i $ is shorthand for $\ZZ_{*,i}$, the $i$-th columm of $\ZZ$.
Assuming this inequality holds, we can complete the proof as follows:  we know that  for all $i \in [k]$,
\begin{equation}
\label{eqn:gap-free-guarantee-with-sing-val}
\begin{split}
    \sigma_i^2(\AA \ZZ)  \geq \norm{\AA z_i}_2^2 & \geq \sigma_i^2(\AA) - \gamma \sigma_{k+1}^2(\AA) \\
    & = \sigma_i^2(\AA)\Paren{ 1 - \gamma \frac{\sigma_{k+1}^2(\AA)}{\sigma_i^2(\AA)} }
\end{split}
\end{equation}
Then,
taking $p/2$-th powers in \eqref{eqn:gap-free-guarantee-with-sing-val},
\begin{equation}
\label{eqn:intermediate-bound-on-singular-values}
\begin{split}
    \sigma_i^p(\AA \ZZ) & \geq  \sigma_i^p(\AA)\Paren{ 1 - \gamma \frac{\sigma_{k+1}^2(\AA)}{\sigma_i^2(\AA)} }^{p/2} \\
    & \geq \sigma_i^p(\AA)\Paren{ 1 - \bigO{  \frac{\gamma p \sigma_{k+1}^2(\AA)}{\sigma_i^2(\AA)} } } \\
    & = \sigma_i^p(\AA) - \bigO{ \gamma p } \sigma_{k+1}^2\Paren{\AA} \sigma_{i}^{p-2}\Paren{\AA}
\end{split}
\end{equation}
where the second inequality follows from the generalized Bernoulli inequality (see Fact \ref{fact:bernoulli}). Summing over all $i \in [k]$, we can conclude 
\begin{equation*}
      \norm{ \AA \ZZ }_{\calS_p}^p \geq \norm{\AA_k }_{\calS_p}^p - \sum_{i \in [k]} \bigO{ \gamma p } \sigma_{k+1}^2\Paren{\AA} \sigma_{i}^{p-2}\Paren{\AA}.
\end{equation*}
Therefore, it remains to show that $\sigma_i(\AA \ZZ)^2 \geq \norm{\AA z_i}_2^2$. First, we recall that Algorithm \ref{algo:simul_power_iter} outputs $\{ z_i \}_{i \in [k]}$ such that $z_i = \QQ \tilde{z}_i$, 
where $\QQ$ is an orthonormal basis for the Krylov space $\KK$ (an $d \times s(q+1)$ matrix) and $\tilde{z}_i$ is the $i$-th singular vector of $\QQ^\top \AA^\top \AA  \QQ$. Note that the $\tilde{z}_i$'s are $s(q+1)$-dimensional vectors. Let $\WW \mathbf{\Omega} \WW^\top$ be the SVD of $\QQ^\top \AA^\top \AA  \QQ$. Then, $\QQ \WW \mathbf{\Omega} \WW^\top\QQ^\top$ is the SVD of $\QQ \QQ^\top \AA^\top \AA \QQ \QQ^\top$. To see this, let the $i$-th column of $\QQ \WW$ be denoted by $\QQ \WW_{*,i}$. Then, 
\begin{equation*}
    \left\langle \QQ \WW_{*,i} , \QQ \WW_{*,i} \right\rangle = \WW_{*,i}^\top \QQ^\top \QQ \WW_{*,i} = 1
\end{equation*}
and similarly for any $j\neq i$, 
\begin{equation*}
    \left\langle \QQ \WW_{*,i} , \QQ \WW_{*,j} \right\rangle = \WW_{*,i}^\top \QQ^\top \QQ \WW_{*,j} = 0
\end{equation*}
where we use that $\QQ^\top \QQ = \II$ and the columns of $\WW$ are orthonormal, which holds by definition. Therefore, $z_i = \QQ \tilde{z}_i$ is the $i$-th singular vector
of $\QQ\QQ^\top \AA^\top \AA \QQ \QQ^\top$. Let $\tilde{\ZZ}$ be the matrix obtained by stacking the vectors $\tilde{z}_i$ together. Then, we have 
\begin{equation}
    \begin{split}
        \sigma_i(\AA \ZZ)^2 = \sigma_i^2(\AA \QQ \tilde{\ZZ} ) & = \sigma_i^2(\AA \QQ ) \\
        & = \sigma_i^2(\AA \QQ \QQ^\top )\\
        &= z_i^T \QQ \QQ^\top \AA^\top \AA \QQ \QQ^\top z_i \\
        & = z_i^\top \AA^\top \AA z_i 
    \end{split}
\end{equation}
where the first  equality follows from the definition of $\tilde{\ZZ}$, the second follows from observing that $\tilde{\ZZ}$ are the singular vectors of $\AA\QQ$ as shown above, the third follows from $\QQ^\top$ having orthonormal rows, the fourth from $z_i$ being the $i$-th singular vector of $\AA \QQ \QQ^\top $ and the last from observing that $z_i$ is in the column span of $\QQ$ and thus $\QQ\QQ^\top z_i = z_i$. 
This concludes the proof. 
\end{proof}

Next, we show a lemma relating a high-accuracy per vector guarantee to cost on the residual subspace.

\begin{lemma}[High-Accuracy Per-Vector Guarantee to Residual Cost]
\label{lem:high-accuracy-to-residual}
Given a matrix $\AA \in \mathbb{R}^{n \times d}$, integer $p\geq1$, $k\in [d]$, $\ell\in[k]$ and orthonormal vectors $\{ w_i \}_{i \in [\ell]}$ such that $\norm{\AA^\top w_i }_2^2 \geq \sigma_{i}^2 - \poly\Paren{\epsilon/ d} \sigma_{k+1}^2$ and $(\sigma_{\ell}-\sigma_{\ell+1})/\sigma_{\ell} \geq \epsilon/d$. Let $\WW$ be the matrix formed by stacking together the $w_i$'s as columns. Then,
\begin{equation*}
    \norm{\AA_\ell^\top\Paren{ \II -\WW\WW^\top } }^2_F \leq \poly\Paren{\frac{\epsilon}{d}}\sigma_{k+1}^2 , 
\end{equation*}
where $\AA_{\ell}$ is the matrix obtained by truncating all but the top $\ell$ singular values of $\AA$. 
\end{lemma}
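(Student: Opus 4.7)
The plan is to open up the Frobenius norm squared in terms of a full SVD of $\AA$, then use the per-vector lower bound to control one half of the resulting expression and the singular-value gap assumption to close the loop. Write a full SVD $\AA = \UU \Sig \VV^\top$, extending $\UU$ to an orthonormal basis of $\mathbb{R}^n$ if necessary (the appended columns carry singular value zero, so they do not affect any sums weighted by $\sigma_i^2$). Set $\alpha_i \defeq \norm{\WW^\top u_i}_2^2 \in [0,1]$; since $\WW$ has $\ell$ orthonormal columns, $\sum_i \alpha_i = \trace{\WW^\top \WW} = \ell$.

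First I would compute, using $(\II - \WW\WW^\top)^2 = \II - \WW\WW^\top$ and $\AA_\ell \AA_\ell^\top = \sum_{i \leq \ell}\sigma_i^2 u_i u_i^\top$,
\begin{equation*}
\norm{\AA_\ell^\top (\II - \WW\WW^\top)}_F^2 \;=\; \trace{(\II - \WW\WW^\top)\,\AA_\ell \AA_\ell^\top} \;=\; \sum_{i=1}^{\ell} \sigma_i^2 (1 - \alpha_i) \;=:\; S.
\end{equation*}
Summing the per-vector hypothesis and expanding in the SVD basis gives $\sum_i \sigma_i^2 \alpha_i = \sum_j \norm{\AA^\top w_j}_2^2 \geq \sum_{i=1}^\ell \sigma_i^2 - \poly(\epsilon/d)\,\sigma_{k+1}^2$, which after rearrangement reads $S - T \leq \poly(\epsilon/d)\,\sigma_{k+1}^2$, where $T \defeq \sum_{i > \ell} \sigma_i^2 \alpha_i$.

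Next I set $\beta \defeq \sum_{i > \ell} \alpha_i$, which, using $\sum_i \alpha_i = \ell$, equals $\sum_{i \leq \ell}(1 - \alpha_i)$. Monotonicity of the singular values gives the sandwich $S \geq \sigma_\ell^2 \beta$ and $T \leq \sigma_{\ell+1}^2 \beta$, so
\begin{equation*}
(\sigma_\ell^2 - \sigma_{\ell+1}^2)\,\beta \;\leq\; S - T \;\leq\; \poly(\epsilon/d)\,\sigma_{k+1}^2.
\end{equation*}
The gap hypothesis $(\sigma_\ell - \sigma_{\ell+1})/\sigma_\ell \geq \epsilon/d$ yields $\sigma_\ell^2 - \sigma_{\ell+1}^2 \geq (\epsilon/d)\,\sigma_\ell^2$, and therefore $\beta \leq \poly(\epsilon/d)\,\sigma_{k+1}^2/\sigma_\ell^2$. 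Plugging back, $T \leq \sigma_{\ell+1}^2 \beta \leq \poly(\epsilon/d)\,\sigma_{k+1}^2$, and hence $S \leq T + \poly(\epsilon/d)\,\sigma_{k+1}^2 \leq \poly(\epsilon/d)\,\sigma_{k+1}^2$, which is the claim.

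The main obstacle is the passage in the last step: the per-vector guarantee on its own only tells us that $\WW$ captures the top-$\ell$ energy up to an additive $\sigma_{k+1}^2$, which allows $\WW$ to trade some mass between the top-$\ell$ block and the tail. The gap hypothesis is precisely what prevents this trade, by forcing any leaked mass $\beta$ at the bottom to cost at least $(\epsilon/d)\sigma_\ell^2 \beta$ in the energy inequality. Without the gap, one could only bound $S$ in terms of $\sigma_{\ell+1}^2$, which need not be small relative to $\sigma_{k+1}^2$. A minor care point is that the per-vector hypothesis may be vacuous for indices where $\sigma_i^2 < \poly(\epsilon/d)\sigma_{k+1}^2$, but such terms contribute only $\poly(\epsilon/d)\sigma_{k+1}^2$ to $S$ and are harmless.
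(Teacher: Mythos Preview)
Your proof is correct and is essentially the same argument as the paper's, recast in more transparent scalar notation. The paper works with the Frobenius norms $\norm{\UU_\ell^\top \WW}_F^2$, $\norm{\AA_\ell^\top \WW}_F^2$, and $\norm{(\AA - \AA_\ell)^\top \WW}_F^2$, which in your variables are $\sum_{i\le\ell}\alpha_i$, $\sum_{i\le\ell}\sigma_i^2\alpha_i$, and $T$ respectively; its key inequalities (that $\norm{\AA_\ell^\top\WW}_F^2 \le \sum_{i\le\ell}\sigma_i^2 - \sigma_\ell^2(\ell - \norm{\UU_\ell^\top\WW}_F^2)$ and $\norm{(\AA-\AA_\ell)^\top\WW}_F^2 \le \sigma_{\ell+1}^2(\ell - \norm{\UU_\ell^\top\WW}_F^2)$) are exactly your $S \ge \sigma_\ell^2\beta$ and $T \le \sigma_{\ell+1}^2\beta$, and the paper's final bound $S \le \poly(\epsilon/d)\sigma_{k+1}^2 / (1 - \sigma_{\ell+1}^2/\sigma_\ell^2)$ is what you obtain after eliminating $\beta$ and $T$.
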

\begin{proof}
By Pythagorean Theorem,
\begin{equation}
\begin{split}
    \norm{\AA^\top \WW \WW^\top }_F^2 & = \norm{\AA_\ell^\top \WW \WW^\top }_F^2 + \norm{ \Paren{\AA - \AA_\ell}^\top \WW \WW^\top }_F^2 \\
    & \leq \norm{\AA_\ell^\top \WW \WW^\top }_F^2 + \sigma_{\ell +1}^2 \Paren{ \norm{\WW}^2_F - \norm{\UU_\ell^\top \WW }^2_F  },
\end{split}
\end{equation}
where $\AA = \UU \Sig\VV^\top$. Further,

\begin{equation}
\label{eqn:upper-bound-on-u-ell}
    \begin{split}
        \norm{\AA_\ell^\top \WW \WW^\top }_F^2 = \norm{ \Sig_\ell \UU_\ell^\top  \WW  }_F^2 \leq \sum_{i \in [\ell]} \sigma_i^2 - \sigma_\ell^2\Paren{\ell - \norm{\UU_\ell^\top \WW }_F^2},
    \end{split}
\end{equation}
where the last inequality is obtained by making the Euclidean norm of all of the $\UU_\ell\WW $'s in $[\ell-1]$to be $1$, and  the $\ell$-th row to be $\norm{(\UU_\ell^\top \WW)_{\ell,*} }^2_2=\norm{\UU_\ell^\top \WW }_F^2 - (\ell-1)$. Rearranging  Equation \eqref{eqn:upper-bound-on-u-ell}, we have
\begin{equation}
\label{eqn:bound-alignment-rearranged}
    \ell - \norm{\UU_\ell^\top \WW }_F^2 \leq  \frac{\norm{\AA_\ell^\top}_F^2 - \norm{\AA_\ell^\top\WW}_F^2}{\sigma_{\ell}^2}.
\end{equation}
Now, observe $\norm{\WW }_F^2 = \ell$, and substituting \eqref{eqn:bound-alignment-rearranged} back into \eqref{eqn:upper-bound-on-u-ell}, 

\begin{equation}
\label{eqn:upperbound-atw-f}
    \norm{\AA^\top \WW}_F^2 \leq \norm{\AA_\ell^\top \WW}_F^2 + \frac{\sigma_{\ell+1}^2 }{\sigma_{\ell}^2 } \Paren{ \norm{\AA_\ell^\top }_F^2  - \norm{\AA_\ell^\top \WW}_F^2 }.
\end{equation}
Next, we can use the guarantee's on the $w_i$ to obtain a lower bound on $\norm{\AA^\top \WW}_F^2$ as follows:
\begin{equation}
\label{eqn:lowerbound-atw-f}
    \norm{\AA^\top \WW }_F^2 = \sum_{i\in[\ell]}\norm{\AA^\top w_i}_2^2 \geq \norm{\AA_\ell^\top }_F^2 - \ell\cdot \poly\Paren{ \frac{\epsilon}{d} }\sigma_{k+1}^2,
\end{equation}
Combining equations \eqref{eqn:upperbound-atw-f} and \eqref{eqn:lowerbound-atw-f}, we have
\begin{equation}
\begin{split}
   \norm{\AA_\ell^\top  -\AA_\ell^\top \WW\WW^\top}_F^2 & = \Paren{\norm{\AA_\ell^\top }_F^2 - \norm{\AA_\ell^\top \WW}_F^2} \\
    & \leq\frac{\ell \poly(\epsilon/d) \sigma_{k+1}^2 }{ 1- (\sigma_{\ell+1}/\sigma_{\ell})^2 } \\
    &\leq \poly(\epsilon/d) \sigma_{k+1}^2,
\end{split}
\end{equation}
which concludes the proof.
\end{proof}

Next, we need a lemma relating the Schatten-$p$ norm of $\AA \ZZ$ to that of $\WW^\top \AA$, where $\ZZ$ is an arbitrary orthonormal basis and $\WW$ is an orthonormal basis for $\AA \ZZ$.

\begin{lemma}
\label{lem:relating-singular-values}
Given a full-rank $n \times d$ matrix $\AA$, let $\WW $ be a $n \times k$ matrix with orthonormal columns. Further, let $\ZZ$ be an $d \times k$ matrix with orthonormal columns such that $\ZZ$ is a basis for $\AA^\top \WW$. Then, for all $i \in [k]$,
\begin{equation*}
    \sigma_i\Paren{  \AA \ZZ }^p \geq \sigma_i\Paren{ \AA^\top \WW  }^p
\end{equation*}
\end{lemma}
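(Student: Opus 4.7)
The plan is to reduce the statement to showing $\sigma_i(\AA\ZZ) \geq \sigma_i(\AA^\top\WW)$ for every $i\in[k]$ (the $p$-th power inequality then follows since both sides are nonnegative and $t\mapsto t^p$ is monotone on $[0,\infty)$). I would do this by inserting $\WW^\top\AA\ZZ$ as a bridge quantity and then comparing it on each side separately.

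First I would use the basis hypothesis: since $\ZZ$ is an orthonormal basis for the column span of $\AA^\top\WW$, there exists a $k\times k$ matrix $\TT$ with
\begin{equation*}
\AA^\top \WW = \ZZ\TT, \qquad \TT = \ZZ^\top \AA^\top \WW = (\WW^\top \AA\ZZ)^\top.
\end{equation*}
Because $\ZZ$ has orthonormal columns, $\ZZ^\top\ZZ=\II_k$, and multiplying on the left by a matrix with orthonormal columns preserves singular values, so $\sigma_i(\AA^\top\WW)=\sigma_i(\ZZ\TT)=\sigma_i(\TT)$. Since transposition also preserves singular values, we get the identity
\begin{equation*}
\sigma_i(\AA^\top \WW) = \sigma_i(\WW^\top \AA\ZZ).
\end{equation*}

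Next I would compare $\sigma_i(\WW^\top \AA\ZZ)$ to $\sigma_i(\AA\ZZ)$. Because $\WW$ has orthonormal columns, $\WW^\top$ is a contraction (its operator norm equals $1$), so by the standard singular-value inequality $\sigma_i(\BB\CC)\leq \|\BB\|_{\mathrm{op}}\cdot \sigma_i(\CC)$ applied with $\BB=\WW^\top$ and $\CC=\AA\ZZ$, we obtain
\begin{equation*}
\sigma_i(\WW^\top \AA\ZZ) \leq \sigma_i(\AA\ZZ).
\end{equation*}
Chaining this with the previous identity gives $\sigma_i(\AA^\top \WW) \leq \sigma_i(\AA\ZZ)$ for all $i\in[k]$, and raising to the $p$-th power completes the proof.

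There is no real obstacle here; the only subtlety worth double-checking is the degenerate case in which $\AA^\top\WW$ does not have full column rank $k$ (so that the factorization $\AA^\top\WW=\ZZ\TT$ requires $\ZZ$ to have fewer than $k$ columns). In that situation $\sigma_i(\AA^\top\WW)=0$ for large enough $i$ and the inequality is vacuous for those indices, while the argument above handles the remaining indices unchanged; alternatively, under the stated full-rank hypothesis on $\AA$ together with the assumption that $\ZZ\in\mathbb{R}^{d\times k}$ has orthonormal columns spanning the range of $\AA^\top\WW$, the matrix $\TT$ is automatically $k\times k$ and the argument proceeds verbatim.
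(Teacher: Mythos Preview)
Your proof is correct and follows essentially the same approach as the paper: both insert $\ZZ$ via the basis hypothesis and then remove $\WW$ via the contraction inequality $\sigma_i(\BB\CC)\le\|\BB\|_{\mathrm{op}}\,\sigma_i(\CC)$. The paper phrases the argument with the projectors $\ZZ\ZZ^\top$ and $\WW\WW^\top$ (writing $\sigma_i(\AA^\top\WW)=\sigma_i(\ZZ\ZZ^\top\AA^\top\WW\WW^\top)\le\sigma_i(\ZZ\ZZ^\top\AA^\top)\sigma_1(\WW\WW^\top)$), while you phrase it via the explicit factor $\TT=\ZZ^\top\AA^\top\WW$; the two are the same idea in slightly different packaging, with your version a touch more direct.
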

\begin{proof}
We use the following fact that for two matrices $\AA$ and $\BB$, we have that for all
$i$, $\sigma_i(\AA \cdot \BB) \leq \sigma_i(\AA) \cdot \sigma_1(\BB)$; see, e.g., (2) in \cite{lcc15} and references [33-36] therein. 

Using this fact, we have
$$\sigma_i(\AA^\top \WW) = \sigma_i(\AA^\top \WW\WW^T) = \sigma_i(\ZZ \ZZ^T \AA^\top \WW \WW^T) \leq \sigma_i (\ZZ \ZZ^T \AA^\top) \cdot \sigma_1(\WW \WW^T) = \sigma_i (\ZZ \ZZ^T \AA^\top) = \sigma_i(\ZZ^T \AA),$$
where we have used that $\sigma_1(\WW \WW^T) = 1$ since $\WW \WW^T$ is a projection matrix, and the fact that $\ZZ \ZZ^T$ is a basis for the column span of $\AA^\top \WW$. Raising both sides to the $p$-th power establishes the lemma.

\end{proof}

Finally, we can combine the two aforementioned lemmas to obtain the following corollary:
\begin{corollary}[Changing Basis for high-accuracy vectors]
\label{cor:change-of-basis-high-acc}

Given a matrix $\AA \in \mathbb{R}^{n \times d}$, integer $p\geq1$, $k \in [d]$, $\ell \in [k]$ and orthonormal vectors $\{ w_i \}_{i \in [\ell]}$ such that $\norm{\AA^\top w_i }_2^2 \geq \sigma_{i}^2 - \poly\Paren{\epsilon/ d} \sigma_{k+1}^2$ and $(\sigma_{\ell}-\sigma_{\ell+1})/\sigma_{\ell} \geq \epsilon/d$. Let $\WW$ be the matrix formed by stacking together the $w_i$'s as columns and let $\ZZ$ be an orthonormal basis for $\AA^\top \WW\WW^\top$. Then,
\begin{equation*}
    \norm{\AA_\ell \Paren{ \II -\ZZ\ZZ^\top } }^2_F \leq \poly\Paren{\frac{\epsilon}{d}}\sigma_{k+1}^2 . 
\end{equation*}
\end{corollary}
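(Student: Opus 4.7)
The plan is to decompose $\AA_\ell(\II - \ZZ\ZZ^\top)$ by inserting the orthogonal projection $\WW\WW^\top$ on the left. Since $\WW\WW^\top$ and $\II - \WW\WW^\top$ are orthogonal projections, the two resulting summands are Frobenius-orthogonal, which gives
\begin{equation*}
\norm{\AA_\ell(\II - \ZZ\ZZ^\top)}_F^2 = \norm{\WW\WW^\top \AA_\ell(\II - \ZZ\ZZ^\top)}_F^2 + \norm{(\II - \WW\WW^\top)\AA_\ell(\II - \ZZ\ZZ^\top)}_F^2.
\end{equation*}
The second summand is immediate: since $\II - \ZZ\ZZ^\top$ has operator norm at most one, and $\norm{(\II - \WW\WW^\top)\AA_\ell}_F^2 = \norm{\AA_\ell^\top(\II - \WW\WW^\top)}_F^2 \leq \poly(\epsilon/d)\,\sigma_{k+1}^2$ by Lemma~\ref{lem:high-accuracy-to-residual}, this term is already bounded by $\poly(\epsilon/d)\,\sigma_{k+1}^2$.

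For the first summand, the key observation is that by construction the column span of $\ZZ$ equals the column span of $\AA^\top \WW\WW^\top = \AA^\top \WW$, which is the row space of $\WW^\top \AA$. Therefore $\II - \ZZ\ZZ^\top$ projects onto the kernel of $\WW^\top\AA$, yielding the identity $\WW^\top \AA(\II - \ZZ\ZZ^\top) = 0$. Splitting $\AA = \AA_\ell + (\AA - \AA_\ell)$ and applying this identity gives
\begin{equation*}
\WW^\top \AA_\ell(\II - \ZZ\ZZ^\top) = -\,\WW^\top(\AA - \AA_\ell)(\II - \ZZ\ZZ^\top),
\end{equation*}
so it suffices to bound $\norm{\WW^\top(\AA - \AA_\ell)}_F$. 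Using that $(\AA - \AA_\ell)^\top \UU_\ell = 0$ (the head and tail singular subspaces are orthogonal), for each column $w_i$ of $\WW$ I may replace $w_i$ by $(\II - \UU_\ell\UU_\ell^\top)w_i$, obtaining $\norm{\WW^\top(\AA - \AA_\ell)}_F^2 \leq \sigma_{\ell+1}^2\,\norm{(\II - \UU_\ell\UU_\ell^\top)\WW}_F^2$.

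The main remaining obstacle is to bound the tilt $\norm{(\II - \UU_\ell\UU_\ell^\top)\WW}_F^2$ of $\WW$ away from the top-$\ell$ left singular subspace of $\AA$; this is exactly where the gap hypothesis $(\sigma_\ell - \sigma_{\ell+1})/\sigma_\ell \geq \epsilon/d$ enters. I will reuse the intermediate inequality derived inside the proof of Lemma~\ref{lem:high-accuracy-to-residual}, combined with the Lemma's own conclusion, to get
\begin{equation*}
\norm{(\II - \UU_\ell\UU_\ell^\top)\WW}_F^2 \;=\; \ell - \norm{\UU_\ell^\top\WW}_F^2 \;\leq\; \frac{\norm{\AA_\ell^\top}_F^2 - \norm{\AA_\ell^\top\WW}_F^2}{\sigma_\ell^2} \;=\; \frac{\norm{\AA_\ell^\top(\II - \WW\WW^\top)}_F^2}{\sigma_\ell^2} \;\leq\; \frac{\poly(\epsilon/d)\,\sigma_{k+1}^2}{\sigma_\ell^2}.
\end{equation*}
Combining the pieces and using $\sigma_{\ell+1} \leq \sigma_\ell$ makes the first summand also at most $\poly(\epsilon/d)\,\sigma_{k+1}^2$, and adding the two summands yields the corollary.
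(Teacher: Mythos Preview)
Your proof is correct, but it follows a genuinely different route from the paper's. The paper proves the corollary by rerunning the argument of Lemma~\ref{lem:high-accuracy-to-residual} verbatim on the \emph{right} side: it works with $\ZZ$ and the right singular block $\VV_\ell$, derives the analogue of the tilt bound $\ell - \norm{\VV_\ell^\top \ZZ}_F^2 \leq (\norm{\AA_\ell}_F^2 - \norm{\AA_\ell \ZZ}_F^2)/\sigma_\ell^2$, and then invokes Lemma~\ref{lem:relating-singular-values} to transfer the per-vector lower bound from $\WW$ to $\ZZ$ via $\norm{\AA\ZZ}_F^2 \geq \norm{\AA^\top\WW}_F^2$.

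Your argument instead stays entirely on the left side. You split $\AA_\ell(\II - \ZZ\ZZ^\top)$ by the projection $\WW\WW^\top$ and handle the $(\II-\WW\WW^\top)$ piece directly by Lemma~\ref{lem:high-accuracy-to-residual}. For the $\WW\WW^\top$ piece you use the clean algebraic identity $\WW^\top\AA(\II - \ZZ\ZZ^\top)=0$, which is immediate from the definition of $\ZZ$, reducing to a bound on $\norm{\WW^\top(\AA-\AA_\ell)}_F$; this in turn is controlled by the same tilt $\ell - \norm{\UU_\ell^\top\WW}_F^2$ that already appeared inside the proof of Lemma~\ref{lem:high-accuracy-to-residual}. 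The upshot is that you avoid Lemma~\ref{lem:relating-singular-values} altogether and use Lemma~\ref{lem:high-accuracy-to-residual} (and one of its intermediate inequalities) essentially as a black box. The paper's approach is more symmetric and makes the $\VV_\ell$-side estimate explicit, at the cost of needing the singular-value comparison lemma; yours is shorter and more self-contained for this particular statement.
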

\begin{proof}
We closely follow the proof in Lemma \ref{lem:high-accuracy-to-residual}.
By Pythagorean Theorem,
\begin{equation}
\begin{split}
    \norm{\AA  \ZZ \ZZ^\top }_F^2 & = \norm{\AA_\ell \ZZ \ZZ^\top }_F^2 + \norm{ \Paren{\AA - \AA_\ell} \ZZ \ZZ^\top }_F^2 \\
    & \leq \norm{\AA_\ell \ZZ \ZZ^\top }_F^2 + \sigma_{\ell +1}^2 \Paren{ \norm{\ZZ}^2_F - \norm{\VV_\ell \ZZ}^2_F  },
\end{split}
\end{equation}
where $\AA = \UU \Sig\VV^\top$. Further,

\begin{equation}
\label{eqn:upper-bound-on-u-ell-2}
    \begin{split}
        \norm{\AA_\ell \ZZ \ZZ^\top }_F^2 = \norm{ \Sig_\ell \VV_\ell^\top  \ZZ  }_F^2 \leq \sum_{i \in [\ell]} \sigma_i^2 - \sigma_\ell^2\Paren{\ell - \norm{\VV_\ell^\top \ZZ }_F^2},
    \end{split}
\end{equation}
where the last inequality is obtained by making the Euclidean norm of all of the $\VV_\ell^\top\ZZ $'s in $[\ell-1]$to be $1$, and  the $\ell$-th row to be $\norm{(\VV_\ell^\top \ZZ)_{\ell,*} }^2_2=\norm{\VV_\ell^\top \ZZ }_F^2 - (\ell-1)$. Rearranging  Equation \eqref{eqn:upper-bound-on-u-ell-2}, we have
\begin{equation}
\label{eqn:bound-alignment-rearranged-2}
    \ell - \norm{\VV_\ell^\top \ZZ }_F^2 \leq  \frac{\norm{\AA_\ell }_F^2 - \norm{\AA_\ell \ZZ}_F^2}{\sigma_{\ell}^2}.
\end{equation}
Now, observe $\norm{\ZZ}_F^2 = \ell$, and substituting \eqref{eqn:bound-alignment-rearranged-2} back into \eqref{eqn:upper-bound-on-u-ell-2}, 

\begin{equation}
\label{eqn:upperbound-az-f}
    \norm{\AA\ZZ}_F^2 \leq \norm{\AA_\ell \ZZ}_F^2 + \frac{\sigma_{\ell+1}^2 }{\sigma_{\ell}^2 } \Paren{ \norm{\AA_\ell }_F^2  - \norm{\AA_\ell \ZZ}_F^2 }.
\end{equation}
Next, we can use the guarantee's on the $w_i$ to obtain a lower bound on $\norm{\AA^\top \WW}_F^2$ as follows:
\begin{equation}
    \norm{\AA^\top \WW }_F^2 = \sum_{i\in[\ell]}\norm{\AA^\top w_i}_2^2 \geq \norm{\AA_\ell^\top }_F^2 - \ell\cdot \poly\Paren{ \frac{\epsilon}{d} }\sigma_{k+1}^2,
\end{equation}
Observe $\norm{\AA^\top \WW}_F^2 = \sum_{i \in [\ell]} \sigma_{i}^2(\AA^\top \WW)$. By Lemma \ref{lem:relating-singular-values}, we know that for all $i$, $\sigma_{i}^2(\AA^\top \WW)\leq \sigma_{i}^2(\AA \ZZ)$. Therefore, we can restate the above equation as follows:

\begin{equation}
\label{eqn:lowerbound-az-f}
    \norm{\AA\ZZ}_F^2 \geq \norm{\AA^\top \WW }_F^2\geq  \norm{\AA_\ell  }_F^2 - \ell\cdot \poly\Paren{ \frac{\epsilon}{d} }\sigma_{k+1}^2,
\end{equation}
Combining equations \eqref{eqn:upperbound-az-f} and \eqref{eqn:lowerbound-az-f}, we have
\begin{equation}
\begin{split}
   \norm{\AA_\ell   -\AA_\ell  \ZZ\ZZ^\top}_F^2 & = \Paren{\norm{\AA_\ell }_F^2 - \norm{\AA_\ell \ZZ }_F^2} \\
    & \leq\frac{\ell \poly(\epsilon/d) \sigma_{k+1}^2 }{ 1- (\sigma_{\ell+1}/\sigma_{\ell})^2 } \\
    &\leq \poly(\epsilon/d) \sigma_{k+1}^2,
\end{split}
\end{equation}
which concludes the proof.
\end{proof}

We now have all the ingredients we need to complete the proof of Theorem \ref{thm:optimal_schatten_p_lra}.

\begin{proof}[Proof of Theorem \ref{thm:optimal_schatten_p_lra}]
Observe, using Lemma~\ref{lem:gap_independent} with probability at least $97/100$,  Step 1 of Algorithm \ref{algo:optimal_schatten_p_lra} outputs $\hat{\sigma}_i$'s such that for all $i \in [k+1]$, $\hat{\sigma}_{i}^2 = \Paren{1\pm 0.1/p} \sigma_{i}^2$ and $\hat{\sigma}_s^2 = \Paren{1\pm 0.1/p} \sigma_{s}^2$, for $s = \bigO{kp^{-1/3}/\epsilon^{1/3}}$. Condition on this event. 

At a high level, we proceed via a case analysis: either the Schatten-$p$ norm of the tail is large compared to the $(k+1)$-st singular value, and we don't require a highly accurate solution, or the Schatten-$p$ norm of the tail is small, and increasing the block size induces a gap. We formalize this intuition into a proof. 
 
\paragraph{No large singular values.} Let us first consider the case where  $\sigma_1 < \Paren{1 + 1/p}\sigma_{k+1}$. 
We yet again split into cases, and consider the case where the Schatten-$p$ norm of the tail is small, i.e. $\norm{\AA -\AA_k}^p_{\calS_p} \leq \frac{k}{p^{1/3} \epsilon^{1/3}} \cdot \sigma^p_{k+1}$.
Observe, for any $t \in [1, d-k-1]$,
\begin{equation}
\label{eqn:tradeoff-analysis}
    \frac{k}{p^{1/3} \epsilon^{1/3}} \cdot \sigma^p_{k+1}  \geq  \norm{\AA -\AA_k}^p_{\calS_p} \geq \sum_{i = k+1}^{k+1+t} \sigma_{i}^p \geq t \sigma_{k+1+t}^p.
\end{equation}
Then, setting $t = \frac{ \Paren{1 + 1/p}^p  k} {\epsilon^{1/3} p^{1/3}} = \Theta\Paren{  \frac{ k} {\epsilon^{1/3} p^{1/3}} } $,  we have  $\sigma_{k+1+t} \leq \sigma_{k+1}/\Paren{1 + 1/p}$. It suffices to show that we can detect this gap for some $s \geq k+1+t$.  Recall, we know that $\hat{\sigma}_{k+1} = (1\pm 0.1/p) \sigma_{k+1}$ and $\hat{\sigma}_s = (1\pm 0.1/p)\sigma_s$. Then, we have
\begin{equation}
\label{eqn:small-gap-case}
    \hat{\sigma}_s\leq \Paren{1+ \frac{0.1}{p}} \sigma_s \leq \Paren{1+ \frac{0.1}{p}} \sigma_{k+1 +t} \leq \Paren{1+ \frac{0.1}{p}} \cdot \Paren{\frac{1}{1+1/p}} \sigma_{k+1} \leq \frac{1}{\Paren{1 + \frac{0.5}{p}} } \hat{\sigma}_{k+1}.
\end{equation}
Therefore, Algorithm \ref{algo:optimal_schatten_p_lra} 
outputs $\ZZ$, an orthonormal basis for $\AA^\top \WW_2$,  where $\WW_2$ is obtained by running  Algorithm \ref{algo:simul_power_iter} on $\AA^\top$,  initialized with a block size of $\Theta\Paren{  \frac{ k} {\epsilon^{1/3} p^{1/3}} }$ and run for $\bigO{\log(d/\epsilon)\sqrt{p} } $ iterations. Observe, since $\sigma_{k+1+t} \leq \sigma_{k+1}/\Paren{1 + 1/p}$,  this suffices to demonstrate a gap that depends on $p$ as follows: $\frac{\sigma_k}{\sigma_k - \sigma_{k+t+1}} \leq p$.
Recall, we account for this gap by running $\bigO{\log(d)\sqrt{p}}$ iterations. Using the gap dependent analysis (Lemma \ref{lem:gap_dependent}), we can conclude that with probability at least $99/100$, for all $i \in [k]$, 
\begin{equation}
\label{eqn:high-acc-no-gap}
    \norm{\AA^\top (\WW_2)_{*,i} }_2^2 \geq \sigma_i^2 - \poly\Paren{\frac{\epsilon}{d}} \sigma^2_{k+1}.
\end{equation}
Then, applying Lemma \ref{lem:correlated_projections} with $ \WW_2 \WW_2^\top$ satisfying the guarantee in \eqref{eqn:high-acc-no-gap}, we have 
\begin{equation}
\label{eqn:bound-azzt-p-p-high-acc}
\begin{split}
     \norm{ \AA^\top \WW_2 \WW_2^\top }_{\calS_p}^p & \geq \norm{  \AA_k }_{\calS_p}^p - \poly\Paren{\frac{\epsilon}{d}} \sum_{i \in [k]}  \sigma_{k+1}^2 \sigma_{i}^{p-2} \\
    & \geq \norm{  \AA_k }_{\calS_p}^p -  \poly\Paren{\frac{\epsilon}{d}}  \sigma_{k+1}^{p} .
\end{split}
\end{equation}
where the last inequality uses that $\sigma_1 < (1+1/p)\sigma_{k+1}$ and $(1+1/p)^{p-2} = \bigO{1}$. 
Next, we use Lemma \ref{lem:schatten_p_orthogonal_projections} to relate $\norm{\AA^\top \WW_2 \WW_2^\top}_{\calS_p}^p$ to $\norm{\AA \Paren{\II - \ZZ \ZZ^\top}}_{\calS_p}^p$, where $\ZZ$ is an orthonormal basis for $\AA^\top \WW_2 \WW_2^\top$ as output by the algorithm. Setting $\QQ = \ZZ \ZZ^\top$ and $\PP = \WW_2 \WW_2^\top$, we observe that $\norm{\PP \AA \QQ}_{\calS_p}^p = \norm{ \AA^\top \WW_2 \WW_2^\top }_{\calS_p}^p =\norm{  \WW_2 \WW_2^\top\AA }_{\calS_p}^p  $ and $\norm{\Paren{\II - \PP } \AA \Paren{\II - \QQ} }_{\calS_p}^p = \norm{\AA \Paren{\II - \ZZ \ZZ^\top}}_{\calS_p}^p$.
Then, invoking Lemma  \ref{lem:schatten_p_orthogonal_projections}  and plugging in Equation \eqref{eqn:bound-azzt-p-p-high-acc}, we have
\begin{equation}
\begin{split}
    \norm{ \Paren{\II - \PP} \AA \Paren{\II - \QQ } }_{\calS_p}^p = \norm{  \AA \Paren{\II - \ZZ\ZZ^\top } }_{\calS_p}^p & \leq \norm{\AA}_{\calS_p}^p - \norm{\AA^\top \WW_2  \WW_2^\top}_{\calS_p}^p
    \\ 
    & \leq \norm{\AA}_{\calS_p}^p - \norm{  \AA_k }_{\calS_p}^p +  \poly\Paren{\frac{\epsilon}{d}}  \sigma_{k+1}^{p}  \\
    & \leq \Paren{1+\poly\Paren{\frac{\epsilon}{d}}}\norm{\AA - \AA_k }_{\calS_p}^p ,
\end{split}
\end{equation}
which concludes the analysis in this case.

As shown in Equation \ref{eqn:small-gap-case}, we can detect a gap between $\sigma_{k+1+t}$ and $\sigma_{k+1}$ by comparing $\hat{\sigma}_s$ and $\hat{\sigma}_{k+1}$. When  \ref{eqn:small-gap-case} does not hold,
we know that $\hat{\sigma}_s\geq \Paren{1+0.5/p}\hat{\sigma}_{k+1}$ and  Algorithm \ref{algo:optimal_schatten_p_lra} outputs $\ZZ$, an orthonormal basis for $\AA^\top \WW_1\WW_1^\top$. Since we have $(1\pm 0.1/p)$-approximate estimates to these quantities, we can conclude that $\sigma_s \geq \Paren{1+0.1/p}\sigma_{k+1}$. Then, we have
\begin{equation}
\label{eqn:large-tail-error-analysis}
    \norm{\AA - \AA_k }_{\calS_p}^p \geq s \cdot \sigma_s^p = \Omega\Paren{ \frac{k}{\epsilon^{1/3} p^{1/3}} } \sigma_{k+1}^p.
\end{equation}
It therefore remains to consider the case where $\norm{\AA -\AA_k}^p_{\calS_p}> \frac{ck}{p^{1/3} \epsilon^{1/3}} \cdot \sigma^p_{k+1}$, for a fixed universal constant $c$.  Here, we note that the tail is large enough that an additive error of  $\bigO{\epsilon^{2/3}p^{1/3}}\sigma_{k+1}^2$ on each of the top-$k$ singular values suffices. Formally, it follows from Lemma \ref{lem:gap_independent} (setting $\gamma = \epsilon^{2/3} p^{-1/3}$, and invoking it for $\AA^\top$) that initializing Algorithm \ref{algo:simul_power_iter} with block size $k$ and running for $\bigO{ \log(d/\epsilon) p^{1/6}/\epsilon^{1/3}}$ iterations suffices to output a $n \times k$ matrix $\WW_1$ such that with probability at least $99/100$, for all $i \in [k]$,
\begin{equation}
\label{eqn:bound-no-singular-value-gap}
    \norm{\AA^\top \Paren{\WW_1}_{*,i} }_2^2 \geq \sigma_i^2 - \epsilon^{2/3} p^{-1/3} \sigma_{k+1}^2.
\end{equation}
Then, invoking Lemma \ref{lem:correlated_projections} with $\AA^\top$ and $\WW_1$ as defined above, we have 
\begin{equation}
\label{eqn:bound-azzt-p-p}
\begin{split}
     \norm{ \AA^\top \WW_1 \WW_1^\top }_{\calS_p}^p & = \norm{\WW_1 \WW_1^\top \AA   }_{\calS_p}^p \\
     & \geq \norm{  \AA_k }_{\calS_p}^p - \sum_{i \in [k]} \bigO{ \epsilon^{2/3} p^{-1/3} p } \sigma_{k+1}^2 \sigma_{i}^{p-2} \\
    & \geq \norm{  \AA_k }_{\calS_p}^p -  \bigO{ k \epsilon^{2/3} p^{2/3} }  \sigma_{k+1}^{p} 
\end{split}
\end{equation}
where the last inequality uses that $\sigma_1 < (1+1/p)\sigma_{k+1}$ and $(1+1/p)^p = \bigO{1}$. Recall, in this case, Algorithm \ref{algo:optimal_schatten_p_lra} outputs $\ZZ \ZZ^\top$ where $\ZZ$ is an orthonormal basis for $ \AA^\top \WW_1 \WW^\top_1 $.  
Next, we invoke Lemma \ref{lem:schatten_p_orthogonal_projections} to relate $\norm{  \AA^\top \WW_1 \WW_1^\top  }_{\calS_p}^p$ to $\norm{\AA \Paren{\II - \ZZ \ZZ^\top}}_{\calS_p}^p$. Setting $\QQ = \ZZ \ZZ^\top$ and $\PP = \WW_1 \WW^\top_1$, we observe that $\norm{\PP \AA \QQ}_{\calS_p}^p = \norm{ \WW_1 \WW_1^\top   \AA }_{\calS_p}^p $ and $\norm{\Paren{\II - \PP } \AA \Paren{\II - \QQ} }_{\calS_p}^p = \norm{\AA \Paren{\II - \ZZ \ZZ^\top}}_{\calS_p}^p$. 
Then, invoking Lemma  \ref{lem:schatten_p_orthogonal_projections}  and plugging in Equation \eqref{eqn:bound-azzt-p-p}, we have
\begin{equation}
\begin{split}
    \norm{ \Paren{\II - \PP} \AA \Paren{\II - \QQ } }_{\calS_p}^p = \norm{  \AA \Paren{\II - \ZZ\ZZ^\top } }_{\calS_p}^p & \leq \norm{\AA}_{\calS_p}^p - \norm{\WW_1 \WW_1^\top \AA }_{\calS_p}^p
    \\ 
    & \leq \norm{\AA}_{\calS_p}^p - \norm{  \AA_k }_{\calS_p}^p +  \bigO{ k \epsilon^{2/3} p^{2/3} }  \sigma_{k+1}^{p}  \\
    & \leq \Paren{1+\bigO{p\epsilon}}\norm{\AA - \AA_k }_{\calS_p}^p ,
\end{split}
\end{equation}
where the last inequality follows from our assumption on the Schatten-$p$ norm of the tail, given the case we are in. Taking the $(1/p)$-th root, and recalling that $\epsilon< 1/2$, we obtain 
\begin{equation}
     \norm{  \AA \Paren{\II - \ZZ\ZZ^\top } }_{\calS_p} \leq \Paren{1+\bigO{\epsilon } } \norm{\AA - \AA_k}_p,
\end{equation}
which concludes the case where $\ell =0$.

\paragraph{Large Singular Values.} Next, we consider the case where $\sigma_1 > \Paren{1+1/p}\sigma_{k+1}$. Then, let $\ell \in[k]$ be the largest integer such that $\sigma_\ell > \Paren{1+0.5/p}\sigma_{k+1}$ and $\sigma_{\ell +1} < (1- \epsilon/d) \sigma_{\ell}$. Observe, such an $\ell$ is guaranteed to exist.  

We then note that in all settings Algorithm \ref{algo:optimal_schatten_p_lra} runs $\Omega(\log(d/\epsilon)\sqrt{p}$ iterations on $\AA^\top$ and 
since exists a gap of size $p$ between $\sigma_\ell$ and $\sigma_{k+1}$ it follows from Lemma \ref{lem:gap_dependent} that running Block Krylov Iteration
that Algorithm \ref{algo:optimal_schatten_p_lra} always outputs an orthonormal matrix $\WW$ s.t. for all $i \in [\ell]$,

\begin{equation}
\label{eqn:mm_guarantee_rank_k}
    \norm{ \AA^\top \WW_{*,i} }^2 \geq \sigma_{i}^2 - \poly\Paren{\frac{\epsilon}{d}}\sigma_{k+1}^2.
\end{equation}
Further, for all $ i \in [\ell + 1, k]$, we have 

\begin{equation}
\label{eqn:mm_guarantee_rank_k2}
    \norm{ \AA^\top \WW_{*,i} }^2 \geq \sigma_{i}^2 -  \gamma_i \sigma_{k+1}^2,
\end{equation}
where $\gamma_i$ is determined by whether $\WW = \WW_1$ or $\WW = \WW_2$, as we discuss later. 

We note that we cannot simply take $p/2$-th powers here (for large $p$) as this would introduce cross terms that scale proportional to $\sigma_i(\AA)$, which can be significantly larger than $\sigma_{k+1}(\AA)$. Instead, we require a finer analysis by splitting $\AA$ into a head and tail term. Further, we let $\ZZ$ be an orthonormal basis for $\AA^\top \WW \WW^\top$.

We are now ready to bound $\norm{\AA \Paren{\II - \ZZ\ZZ^\top} }_{\calS_p}$. By the triangle inequality,
\begin{equation}
\label{eqn:tri_ineq_schatten_p-up}
\begin{split}
    \norm{\AA \Paren{\II - \ZZ \ZZ^\top}}_{\calS_p}  & \leq \norm{\AA_{\ell}\Paren{\II -  \ZZ \ZZ^\top} }_{\calS_p} + \norm{\Paren{\AA - \AA_{\ell}} \Paren{\II - \ZZ \ZZ^\top}}_{\calS_p}  \\
\end{split}
\end{equation}
By Corollary \ref{cor:change-of-basis-high-acc}, we know that $\norm{\AA_{\ell}\Paren{\II -\ZZ\ZZ^\top}}_F^2 \leq \poly\Paren{\frac{\epsilon}{d}}\sigma_{k+1}^2$, and since all Schatten norms are within a $\sqrt{d}$ factor of the Frobenius norm, we have

\begin{equation*}
    \norm{\AA_{\ell}\Paren{\II -\ZZ\ZZ^\top}}_{\calS_p} \leq \poly \Paren{\frac{\epsilon}{d}} \sigma_{k+1}.
\end{equation*}
Substituting this back into Equation \eqref{eqn:tri_ineq_schatten_p-up}, we have
\begin{equation}
\label{eqn:bounding-all-but-one-cost}
    \norm{\AA \Paren{\II - \ZZ \ZZ^\top}}_{\calS_p} \leq \bigO{\frac{\epsilon}{d}} \norm{\AA - \AA_k}_{\calS_p} +  \underbrace{  \norm{\Paren{ \AA - \AA_{\ell }} \Paren{\II - \ZZ\ZZ^\top} }_{\calS_p}}_{\ref{eqn:bounding-all-but-one-cost}.1} .
\end{equation}
It remains to bound term \ref{eqn:bounding-all-but-one-cost}.1 above.
By triangle inequality, we have

\begin{equation}
\label{eqn:triangle-w-iminusw-split}
\begin{split}
    &\norm{  \Paren{ \AA -\AA_\ell} (\II - \ZZ \ZZ^\top) }_{\calS_p}^p \\
    & \leq \Paren{ \norm{(\II -\WW\WW^\top) \Paren{ \AA -\AA_\ell} (\II - \ZZ \ZZ^\top) }_{\calS_p} + \norm{\WW\WW^\top \Paren{ \AA -\AA_\ell} (\II - \ZZ \ZZ^\top) }_{\calS_p}}^p 
\end{split}
\end{equation}

We bound the two terms on the RHS independently. 
To upper bound $\norm{\WW\WW^\top \Paren{ \AA -\AA_\ell} (\II - \ZZ \ZZ^\top) }_{\calS_p}^p$ we use the relation between Frobenius and Schatten norms, and recall that by definition, $\WW\WW^\top\AA\Paren{\II - \ZZ\ZZ^\top} = \WW\WW^\top\AA - \WW\WW^\top \AA \ZZ\ZZ^\top =0$, and thus

\begin{equation}
\begin{split}
    \norm{\WW\WW^\top \Paren{ \AA -\AA_\ell} (\II - \ZZ \ZZ^\top) }_{\calS_p} & \leq \sqrt{k} \norm{\WW \WW^\top \Paren{\AA - \AA_\ell} (\II - \ZZ \ZZ^\top)  }_F \\
    & = \sqrt{k} \norm{\WW \WW^\top  \AA_\ell (\II - \ZZ \ZZ^\top)  }_F  \\
    & \leq \sqrt{k} \norm{  \AA_\ell (\II - \ZZ \ZZ^\top)  }_F \\
    & \leq  \poly\Paren{\frac{\epsilon}{d}} \sigma_{k+1} ,
\end{split}
\end{equation}
where the last inequality  follows from Corollary \ref{cor:change-of-basis-high-acc}. Therefore, combining the above, we have 

\begin{equation}
\label{eqn:bound-w-a-minus-al}
\begin{split}
    \norm{\WW\WW^\top \Paren{ \AA -\AA_\ell} (\II - \ZZ \ZZ^\top) }_{\calS_p} 
     \leq \poly\Paren{\frac{\epsilon}{d}} \sigma_{k+1} ,
\end{split}
\end{equation}

It remains to upper bound $\norm{(\II -\WW\WW^\top) \Paren{ \AA -\AA_\ell} (\II - \ZZ \ZZ^\top) }_{\calS_p}^p$. 
Recall, $\WW$ is the orthonormal basis output by Block Krylov run on $\AA^\top$, and in Algorithm \ref{algo:optimal_schatten_p_lra} $\WW$ is either $\WW_1$ or $\WW_2$. Let $\ZZ$ is a basis for $\AA^\top \WW \WW^\top$. Then, applying Lemma \ref{lem:schatten_p_orthogonal_projections} with $\QQ = \ZZ\ZZ^\top$ and $\PP = \WW\WW^\top$, we have

\begin{equation}
   \begin{split}
     \norm{\Paren{\II - \WW\WW^\top}  \Paren{\AA - \AA_{\ell }} \Paren{\II - \ZZ\ZZ^\top} }^p_{\calS_p} & \leq  \norm{\Paren{ \AA - \AA_{\ell }}  }^p_{\calS_p} -  \norm{\WW \WW^\top\Paren{  \AA - \AA_{\ell } \ZZ\ZZ^\top } }^p_{\calS_p}\\
     & = \sum_{j \in [\ell+1, d]  }\sigma_j^p - \sum_{j \in [k ] } \sigma_j^p\Paren{ \WW^\top \Paren{\AA -\AA_{\ell}} \ZZ  }
\end{split}
\end{equation}


Next, we show that for all $j \in [k]$, $\sigma_j\Paren{\WW^\top  \Paren{\AA -\AA_{\ell}  }\ZZ  } \geq \sigma_{j + \ell}\Paren{\WW^\top \AA} $. Here, we invoke Fact  \ref{fact:cauchy} for $\XX = \WW^\top \Paren{ \AA - \AA_{\ell}} \ZZ$ and $\YY = \WW^\top \AA_{\ell}\ZZ$, with $i = j$ and $j = \ell$. Note, the precondition on the indices $i,j$ in Fact \ref{fact:cauchy} is satisfied since $\XX, \YY$ are $n \times k$ matrices, and $j \in[k]$ and $\ell < k$. Then, we have 

\begin{equation*}
\begin{split}
    \sigma_{j + \ell}\Paren{ \WW^\top \AA \ZZ  } & =  \sigma_{j+ \ell}\Paren{ \WW^\top \Paren{ \AA - \AA_{\ell}\ZZ } +  \WW^\top \AA_{\ell} \ZZ  }\\
    & \leq \sigma_{j}\Paren{ \WW^\top \Paren{ \AA - \AA_{\ell}} \ZZ } + \sigma_{\ell+1}\Paren{ \WW^\top \AA_{\ell} \ZZ },
\end{split}
\end{equation*}
but $\WW^\top \AA_{\ell}\ZZ$ is a rank $\leq \ell$ matrix, and thus $\sigma_{\ell+1}\Paren{\WW^\top \AA_{\ell} \ZZ}=0$. Therefore, we can conclude, 


\begin{equation}
\label{eqn:tail-bound-projection-sp}
    \norm{\Paren{\II - \WW \WW^\top } \Paren{ \AA - \AA_{\ell}} \Paren{\II - \ZZ\ZZ^\top} }_{\calS_p}^p \leq  \sum_{j \in [\ell+1, d]  }\sigma^p_j - \sum_{j \in [\ell+1, k +\ell] } \sigma_{j }^p\Paren{\WW^\top \AA  \ZZ}. 
\end{equation} 
However, now we observe that $\WW \WW^\top \AA \ZZ\ZZ^\top = \WW \WW^\top \AA$, and thus $\sigma_j\Paren{\WW^\top \AA\ZZ} = \sigma_j \Paren{\WW \AA}$. 
Recall, for all $j \in [k]$, it follows from Equation \eqref{eqn:intermediate-bound-on-singular-values} in the proof of Lemma \ref{lem:correlated_projections} that $\sigma_j^p(\WW^\top \AA ) =\sigma_j^p(  \AA^\top \WW )  \geq \sigma_j^p(\AA) - \bigO{\gamma_j p } \sigma_{k+1}^2 \sigma_j^{p-2}$. Further, by definition, for $j \in[\ell +1, k]$, $\sigma_{j} \leq \Paren{1+1/p}\sigma_{k+1}$ and thus, for all $j \in[\ell+1, k]$, 
\begin{equation}
\begin{split}
    \sigma_j^p(\WW^\top \AA ) & \geq \sigma_j^p - \bigO{\gamma_j p \Paren{1 + 1/p}^{p-2} } \sigma_{k+1}^p\\
    & \geq \sigma_j^p  - \bigO{\gamma_j p }\sigma_{k+1}^p.
\end{split}
\end{equation}


Recall, we can correctly determine whether $\norm{\AA - \AA_k }_{\calS_p}^p \leq \frac{k}{p^{1/3}\epsilon^{1/3} } \sigma_{k+1}^p$ or not, up to error in estimating the singular values as shown earlier, in the analysis for the case where there are no large singular values. In particular, let us first consider the case where this is true. Repeating the argument in equations \eqref{eqn:tradeoff-analysis}, \eqref{eqn:small-gap-case}, we can conclude that Algorithm \ref{algo:optimal_schatten_p_lra} outputs $\WW = \WW_2$ and thus for all $j \in [\ell+1, k]$, $\gamma_j = \poly(\epsilon/d)$.  Therefore, substituting this back into Equation \eqref{eqn:tail-bound-projection-sp}, we have
\begin{equation}
\begin{split}
    \norm{\Paren{\II- \WW\WW^\top }\Paren{ \AA - \AA_{\ell}} \Paren{\II - \ZZ\ZZ^\top} }_{\calS_p}^p & \leq \sum_{j \in [\ell+1, d]  }\sigma^p_j - \Paren{ \sum_{j \in [\ell +1, k] }  \sigma^p_j - \bigO{\gamma_j  p} \sigma_{k+1}^p }\\
    & \leq  \norm{\AA - \AA_k}_{\calS_p}^p + \sum_{j \in [\ell+1, k]} \bigO{\gamma_j  p}  \sigma_{k+1}^p \\
    & \leq \Paren{1 + \poly\Paren{\frac{\epsilon}{d}} } \norm{\AA - \AA_k}_{\calS_p}^p ,
\end{split}
\end{equation}
concluding the analysis in this case.


Next, consider the case where $\norm{\AA - \AA_k }_{\calS_p}^p > \frac{k}{p^{1/3}\epsilon^{1/3} } \sigma_{k+1}^p$. Then, repeating the analysis in Equation \eqref{eqn:large-tail-error-analysis}, we know that Algorithm \ref{algo:optimal_schatten_p_lra} outputs $\WW = \WW_2$, and thus for all $j \in [\ell+1, k]$, $\gamma_j = \epsilon^{2/3} p^{-1/3}$ as shown in Equation \eqref{eqn:bound-no-singular-value-gap}. Again, substituting this back into Equation \eqref{eqn:tail-bound-projection-sp}, we have
\begin{equation}
\label{eqn:bound-i-minus-w-a-minus-al}
\begin{split}
    \norm{\Paren{\II- \WW\WW^\top } \Paren{\AA - \AA_{\ell}} \Paren{\II - \ZZ\ZZ^\top} }_{\calS_p}^p  & \leq  \norm{\AA - \AA_k}_{\calS_p}^p + \sum_{j \in [\ell+1, k]} \bigO{\gamma_j  p}  \sigma_{k+1}^p \\
    & \leq \norm{\AA - \AA_k}_{\calS_p}^p +   \bigO{ kp  \cdot \epsilon^{2/3} p^{-1/3}  }  \sigma_{k+1}^p \\
    & \leq \Paren{1+\bigO{\epsilon p} } \norm{\AA - \AA_k}_{\calS_p}^p,
\end{split}
\end{equation}
where the last inequality follows from our assumption. Taking the $(1/p)$-th root and substituting equations \eqref{eqn:bound-w-a-minus-al} and \eqref{eqn:bound-i-minus-w-a-minus-al} back into \eqref{eqn:triangle-w-iminusw-split}, we can conclude

\begin{equation*}
   \norm{\Paren{\AA - \AA_\ell}\Paren{\II - \ZZ\ZZ^\top } }_{\calS_p}^p \leq \Paren{1 +\bigO{\epsilon}}\norm{\AA-\AA_k}_{\calS_p} + \poly\Paren{\frac{\epsilon}{d}} \sigma_{k+1} \leq \Paren{1 +\bigO{\epsilon}}\norm{\AA-\AA_k}_{\calS_p},
\end{equation*}
which when substituted into Equation \eqref{eqn:tri_ineq_schatten_p-up} concludes the analysis.

Next, we analyze the running time and matrix-vector products. Running Algorithm \ref{algo:simul_power_iter}  with block size $k$ for $q = \bigO{\log(d)p^{1/6}/\epsilon^{1/3}}$ iterations requires $\bigO{\frac{\nnz(\AA)k p^{1/6} \log(d)}{\epsilon^{1/3}} }$ time and $\bigO{ \frac{k p^{1/6} \log(d)}{\epsilon^{1/3}} }$ matrix-vector products.
Similarly, running with block size $\bigO{k/\Paren{\epsilon p}^{1/3}}$ for $q = \bigO{\log(d/\epsilon)\sqrt{p}}$ iterations  requires $\bigO{\frac{\nnz(\AA)k p^{1/6} \log(d/\epsilon)} {\epsilon^{1/3}} }$ time and $\bigO{ \frac{k p^{1/6} \log(d)}{\epsilon^{1/3}} }$ matrix-vector products. Finally, we observe that to obtain a $\Paren{1+1/p}$-approximation to $\sigma_{1}$ and $\sigma_{k+1}$, we need $\bigO{\log(d)\sqrt{p}}$ iterations with blocksize $k+1$ and this requires $\bigO{\log(d) \sqrt{p} k} $ matrix-vector products.
Note, our setting of the exponent of $p$ and $\epsilon$ was chosen to balance the two cases, and this concludes the proof.




\end{proof}

\section{Query Lower Bounds}
\label{sec:lower_bound}
Next, we show that the $\epsilon$-dependence obtained by our algorithms for Schatten-$p$ low-rank approximation is optimal in the restricted computation model of matrix-vector products. The matrix-vector product model is defined as follows: given a matrix $\AA$, our algorithm is allowed to make adaptive matrix-vector queries to $\AA$, where one matrix-vector query is of the form $\AA v$, for any $v \in \mathbb{R}^d$.
Our lower bounds are information-theoretic and rely on the hardness of estimating the smallest eigenvalue of a Wishart ensemble, as established in recent work of Braverman,  Hazan, Simchowitz and Woodworth~\cite{braverman2020gradient}. 

We split the lower bounds into the case of $p\in[1,2]$ and $p>2$. For $p\in[1,2]$, we have a simple argument based on the Araki-Lieb-Thirring inequality (Fact \ref{fact:alt_ineq}), whereas for $p>2$, our lower bounds require an involved argument using a norm compression inequality for partitioned operators (Fact \ref{fact:aligned-norm-compression}).

\subsection{Lower Bounds for $p \in [1,2]$}

The main lower bound we prove in this sub-section is as follows: 


\begin{theorem}[Query Lower Bound for $p \in \textrm{[} 1,2 \textrm{]}$]
\label{thm:mv_lowerboundn_schatten_12}
Given $\eps >0$, and $p \in [1,2]$, there exists a distribution $\mathcal{D}$ over $n \times n$ matrices such that for $\AA \sim \mathcal{D}$, any randomized algorithm that with probability at least $9/10$ outputs a  rank-$1$ matrix $\BB$ such that $\norm{ \AA - \BB}^p_{\calS_p} \leq (1+\eps)\norm{\AA - \AA_1}^p_{\calS_p}$ must make $\Omega(1/\eps^{1/3})$ matrix-vector queries to $\AA$.  
\end{theorem}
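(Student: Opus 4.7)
My plan is to reduce from estimating the smallest eigenvalue of a Wishart ensemble, using the lower bound of Braverman, Hazan, Simchowitz and Woodworth~\cite{braverman2020gradient}, which asserts that $\Omega(n)$ matrix-vector queries are required to estimate $\lambda_n(\WW)$ to additive error $1/n^2$ for $\WW \sim \textsf{Wishart}(n)$. I will take the hard distribution $\mathcal{D}$ to be the law of $\AA := \II - \WW/5$ on $n \times n$ matrices with $n = \Theta(1/\eps^{1/3})$. By Fact~\ref{fact:norms_of_wishart}, with constant probability $\AA \succeq 0$, $\sigma_1(\AA) = 1 - \lambda_n(\WW)/5 = \Theta(1)$, and $\norm{\AA - \AA_1}_{\calS_p}^p \leq \norm{\AA}_{\calS_p}^p = \Theta(1/\eps^{1/3})$. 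Conditioning on this event, I will show that any $q$-query LRA algorithm can be post-processed into an estimator for $\lambda_n(\WW)$ of additive error $\bigO{1/n^2}$, forcing $q = \Omega(n) = \Omega(1/\eps^{1/3})$.

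The first step is to pass from a rank-$1$ output $\BB$ to a unit vector $z$. Writing $\BB = \sigma u v^\top$ and setting $z := v$, we have $\BB\Paren{\II - zz^\top} = 0$, hence $\AA\Paren{\II - zz^\top} = \Paren{\AA - \BB}\Paren{\II - zz^\top}$. Submultiplicativity against the operator norm (a special case of Fact~\ref{fact:holder_schatten_p} with $q = p$, $r = \infty$) then yields $\norm{\AA\Paren{\II - zz^\top}}_{\calS_p} \leq \norm{\AA - \BB}_{\calS_p}$, so raising to the $p$-th power gives $\norm{\AA\Paren{\II - zz^\top}}_{\calS_p}^p \leq (1+\eps)\norm{\AA - \AA_1}_{\calS_p}^p$.

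The heart of the argument is the projection-cost lower bound: for any PSD $\AA$, unit vector $z$, and every $p \in [1,2]$,
\begin{equation*}
    \norm{\AA\Paren{\II - zz^\top}}_{\calS_p}^p \geq \norm{\AA}_{\calS_p}^p - \norm{\AA z}_2^p.
\end{equation*}
I will derive this by rewriting the left side as $\trace{\Paren{\Paren{\II - zz^\top}\AA^2\Paren{\II - zz^\top}}^{p/2}}$ and applying the reverse Araki--Lieb--Thirring inequality (Fact~\ref{fact:alt_ineq}, valid since $p/2 \in [1/2, 1]$) with the outer projection $\II - zz^\top$: using $\Paren{\II - zz^\top}^{p/2} = \II - zz^\top$, this produces the lower bound $\trace{\Paren{\II - zz^\top}\AA^p\Paren{\II - zz^\top}} = \norm{\AA}_{\calS_p}^p - z^\top \AA^p z$. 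A second application of the same inequality, now with the outer projection $zz^\top$, gives $z^\top \AA^p z = \trace{\Paren{zz^\top}^{p/2}\Paren{\AA^2}^{p/2}\Paren{zz^\top}^{p/2}} \leq \trace{\Paren{zz^\top \AA^2 zz^\top}^{p/2}} = \norm{\AA z}_2^p$, completing the bound. Combining with the approximation guarantee and Fact~\ref{fact:norms_of_wishart},
\begin{equation*}
    \norm{\AA z}_2^p \geq \sigma_1(\AA)^p - \eps\norm{\AA - \AA_1}_{\calS_p}^p = \sigma_1(\AA)^p - \bigO{\eps^{2/3}}.
\end{equation*}
Since $p$ is a fixed constant and $\sigma_1(\AA) = \Theta(1)$, the mean value theorem lifts this to $\sigma_1(\AA) - \norm{\AA z}_2 = \bigO{\eps^{2/3}} = \bigO{1/n^2}$, so the estimator $\widehat{\lambda}_n := 5\Paren{1 - \norm{\AA z}_2}$ recovers $\lambda_n(\WW)$ to additive error $\bigO{1/n^2}$. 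One extra matrix-vector query suffices to compute $\norm{\AA z}_2$, so the BHSW lower bound yields $q + 1 = \Omega(n) = \Omega(1/\eps^{1/3})$.

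I expect the main technical subtlety to be the twofold use of Araki--Lieb--Thirring: the first application (to the outer projection $\II - zz^\top$) is used to lower bound $\norm{\AA\Paren{\II - zz^\top}}_{\calS_p}^p$, while the second (to $zz^\top$) is used to upper bound $z^\top \AA^p z$. Both steps crucially require $p/2 \leq 1$, which is why the argument is confined to $p \in [1,2]$; at the upper boundary $p = 2$ both inequalities become equalities and the bound reduces to the Pythagorean theorem, whereas for $p > 2$ both flip direction, making this route vacuous and motivating the norm-compression machinery used in the companion Theorem~\ref{thm:query_lower_bound_p>2}.
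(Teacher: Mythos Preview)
Your proposal is correct and follows essentially the same route as the paper: the same hard instance $\AA = \II - \WW/5$ with $n = \Theta(1/\eps^{1/3})$, the same two-fold reverse Araki--Lieb--Thirring argument to obtain $\norm{\AA(\II - zz^\top)}_{\calS_p}^p \geq \norm{\AA}_{\calS_p}^p - \norm{\AA z}_2^p$, and the same reduction to the Braverman--Hazan--Simchowitz--Woodworth minimum-eigenvalue lower bound. Your treatment is in fact slightly more careful than the paper's in one respect: the paper silently assumes the algorithm's output is of the form $vv^\top$, whereas you explicitly reduce an arbitrary rank-$1$ output $\BB = \sigma u v^\top$ to a unit vector via $\BB(\II - vv^\top) = 0$ and H\"older; your choice of estimator $5(1 - \norm{\AA z}_2)$ versus the paper's $\tfrac{5}{p}(1 - \norm{\AA z}_2^p)$ is an immaterial difference since $p$ and $\sigma_1(\AA)$ are $\Theta(1)$.
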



We require the following theorem on the hardness of computing the minimum eigenvalue of a Wishart Matrix, introduced recently by  Braverman, Hazan, Simchowitz and Woodworth~\cite{braverman2020gradient}:

\begin{theorem}[Computing Min Eigenvalue of Wishart, Theorem 3.1 \cite{braverman2020gradient}]
\label{thm:min_eigenvalue}
Given $\epsilon \in (0,1)$, there exists a function $\mathbf{d}: (0,1) \to \mathbb{N}$ such that for all $d \geq \mathbf{d}(\epsilon) $, the following holds. Let $\WW \sim \textrm{Wishart}(d)$ be a Wishart matrix and $\{\lambda_i \}_{i \in [d]}$ be the eigenvalues of $\WW$, in descending order. Then, there exists a universal constant $c^*$  such that:
\begin{enumerate}
    \item \label{wish:cond1} Let $\zeta_1$ be the event that $\lambda_{d}(\WW) \leq c_1/d^2$, $\zeta_2$ be the event that $\lambda_{d-1}(\WW) - \lambda_{d}(\WW) \geq c_2/d^2$ and $\zeta_3$ be the event that $\norm{\WW}_{op} \leq 5$, where $c_1$ and $c_2$ are constants that depend only on $\epsilon$. Then, $\Pr_{\WW}\left[ \zeta_1 \cap \zeta_2 \cap \zeta_3 \right]\geq 1- \frac{c^*\sqrt{\epsilon} }{2}$. 
    \item\label{wish:cond2} Any randomized algorithm that makes at most $(1-\epsilon)d$ adaptive matrix-vector queries and outputs an estimate $\hat{\lambda}_{d}$ must satisfy 
    \[
    \Pr_{\WW}\left[\abs{\hat{\lambda}_d - \lambda_d}  \geq \frac{1}{4d^2} \right] \geq c^* \sqrt{\epsilon}.
    \]
\end{enumerate}
\end{theorem}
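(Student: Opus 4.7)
}

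The statement has two parts: a random-matrix theory claim about the joint distribution of the smallest eigenvalues and operator norm of a Wishart matrix, and an information-theoretic lower bound on estimating $\lambda_d(\WW)$ in the matrix-vector query model. I would handle them separately and then combine them.

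For part \ref{wish:cond1}, I would appeal to Edelman's exact density formula for the smallest eigenvalue of a (square) Wishart ensemble, which shows that $d^2 \lambda_d(\WW)$ converges in distribution to a non-degenerate limit on $(0,\infty)$. In particular, for any $\epsilon > 0$ there exist constants $c_1(\epsilon), c_2(\epsilon)$ such that with probability at least $1 - c^*\sqrt{\epsilon}/6$, we have $\lambda_d(\WW) \leq c_1/d^2$. For the gap $\lambda_{d-1}(\WW) - \lambda_d(\WW) \geq c_2/d^2$, I would use the joint density of the two smallest eigenvalues, which scales on the same $1/d^2$ scale and has a non-degenerate limiting joint law on $(0,\infty)^2$ supported off the diagonal; this gives a gap $\geq c_2/d^2$ with probability $\geq 1 - c^*\sqrt{\epsilon}/6$. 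The operator norm bound $\|\WW\|_{\mathrm{op}} \leq 5$ follows from Marchenko--Pastur concentration (the almost-sure limit is $4$ for this normalization) together with Gaussian concentration for the top singular value of $\XX$, which contributes the final $1 - c^*\sqrt{\epsilon}/6$. Union bound gives $\Pr[\zeta_1 \cap \zeta_2 \cap \zeta_3] \geq 1 - c^*\sqrt{\epsilon}/2$.

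For part \ref{wish:cond2}, the key structural fact is the rotational invariance of the Wishart distribution: for any orthogonal $\QQ$, $\QQ\WW\QQ^\top$ has the same law as $\WW$. By Yao's minimax principle, it suffices to lower bound the failure probability of any deterministic algorithm on a random Wishart input. The transcript of any deterministic algorithm after $q = (1-\epsilon)d$ adaptive queries is a measurable function of the Krylov-type subspace $\calK$ generated by its queries, which has dimension at most $q$, leaving an orthogonal complement $\calK^\perp$ of dimension at least $\epsilon d$. By rotational invariance, conditioned on the algorithm's transcript, the compression of $\WW$ onto $\calK^\perp$ is distributed as a Wishart-like matrix of effective dimension $\epsilon d$, whose minimum eigenvalue is on the scale $1/(\epsilon d)^2 \gg 1/d^2$ with a non-degenerate distribution. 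By Cauchy interlacing, $\lambda_d(\WW)$ is bounded above by $\lambda_{\min}$ of this compression, so the posterior distribution of $\lambda_d(\WW)$ given the transcript has spread $\Omega(1/d^2)$, which forces any estimator $\hat\lambda_d$ to fail by at least $1/(4d^2)$ with probability $\Omega(\sqrt{\epsilon})$.

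The central technical step is the anti-concentration argument in the second part, and this is where I would expect the most work. The cleanest way to formalize it is a two-point coupling: I would construct two priors $\mu_0$ and $\mu_1$ on Wishart matrices such that (i) the joint law of $(\AA v^1, \AA v^2, \ldots, \AA v^q)$ for any adaptive sequence has total variation distance $o(1)$ between $\mu_0$ and $\mu_1$ for $q \leq (1-\epsilon)d$, and (ii) the minimum eigenvalues differ by at least $1/(2d^2)$ with probability $\Omega(\sqrt{\epsilon})$. The indistinguishability in (i) follows by a rotation argument (``plant'' the differing eigendirection in a uniformly random direction of $\calK^\perp$, using that rotating in $\calK^\perp$ leaves the transcript unchanged), while (ii) follows from the non-degeneracy of the Edelman limit density established in part \ref{wish:cond1}. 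Combining gives the desired constant $c^*$ and the $\Omega(d)$ query lower bound.
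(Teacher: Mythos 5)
This statement is not proved in the paper at all: it is quoted verbatim as Theorem~3.1 of \cite{braverman2020gradient} and used as a black box, so there is no in-paper argument to compare yours against. Judged on its own merits, your sketch of part~\ref{wish:cond1} is reasonable: the hard-edge ($1/d^2$) scaling of the smallest eigenvalue of a square Wishart matrix, the non-degenerate limiting law of the two smallest eigenvalues, and Marchenko--Pastur for the operator norm do give the three events with the claimed probabilities after a union bound. This matches how \cite{braverman2020gradient} establish their Corollary~3.3-type statements.

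Part~\ref{wish:cond2} has a genuine quantitative gap. Write $\WW=\frac{1}{d}\GG\GG^\top$ with $\GG$ a $d\times d$ standard Gaussian. The compression of $\WW$ onto an $m=\epsilon d$-dimensional subspace $\calK^\perp$ is (in law) $\frac1d \YY\YY^\top$ with $\YY$ an $\epsilon d\times d$ Gaussian, i.e.\ a highly \emph{rectangular} Wishart, not a square one. Its smallest eigenvalue concentrates at $(1-\sqrt{\epsilon})^2=\Theta(1)$, not at $1/(\epsilon d)^2$ as you claim; rectangular Wishart matrices are far from the hard edge. Consequently Cauchy interlacing only yields $\lambda_d(\WW)\le \Theta(1)$, which is vacuous, and no $\Omega(1/d^2)$ posterior spread follows from this route. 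The actual argument in \cite{braverman2020gradient} is considerably more delicate: it tracks the exact conditional law of the Gaussian factor given the adaptive query transcript (the unqueried block of $\GG$ remains fresh Gaussian) and proves an anti-concentration statement for $\sigma_{\min}$ of a square Gaussian matrix as a function of that residual block, at scale $1/d$ in singular value (hence $1/d^2$ in eigenvalue), with the $\sqrt{\epsilon}$ probability emerging from how much of the extremal singular vector can still ``hide'' in the $\epsilon d$-dimensional unexplored subspace. Your closing two-point-coupling idea is closer in spirit, but the step ``plant the differing eigendirection in a random direction of $\calK^\perp$'' is exactly the hard part: for a Wishart matrix the minimizing eigenvector is determined by the whole matrix and cannot be planted independently of the rest of the spectrum, so constructing two priors that are indistinguishable under $(1-\epsilon)d$ adaptive queries yet have $\lambda_d$ separated by $\Omega(1/d^2)$ requires the full conditional-Gaussian machinery, which your sketch does not supply.
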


We also use the following lemma  from \cite{braverman2020gradient} bounding the minimum eigenvalue of a Wishart ensemble:

\begin{lemma}[Non-Asymptotic Spectra of Wishart Ensembles, Corollary 3.3 \cite{braverman2020gradient}]
Let $\WW\sim \textsf{Wishart}(n)$ be such that $n = \Omega(1/\eps^3)$.
Then, there exists a universal constant $c_2 > 0$ such that
\begin{equation*}
    \Pr\left[\lambda_n\Paren{\WW} \geq \frac{1}{n^2} \right] \geq c_2, \hspace{0.2in} \textrm{and} \hspace{0.2in} \Pr\left[\lambda_n\Paren{\WW} < \frac{1}{2 n^2} \right] \geq \frac{c_2}{2}.
\end{equation*}
\end{lemma}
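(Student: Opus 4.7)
The plan is to reduce the statement to the classical distribution of the smallest singular value of a square standard Gaussian matrix and then to read off both probabilities from Edelman's density. Writing $\AA := \sqrt{n}\,\XX$, the matrix $\AA$ has i.i.d.\ $\mathcal{N}(0,1)$ entries and $n^{2}\lambda_{n}(\WW) = n\,\sigma_{\min}^{2}(\AA)$. Thus it suffices to show that for all $n$ exceeding some fixed universal constant $n_0$ (absorbed by the hypothesis $n = \Omega(1/\eps^{3})$),
\[
\Pr\!\bigl[n\,\sigma_{\min}^{2}(\AA) \geq 1\bigr] \geq c_{2}
\quad\text{and}\quad
\Pr\!\bigl[n\,\sigma_{\min}^{2}(\AA) < \tfrac{1}{2}\bigr] \geq c_{2}/2 .
\]

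The key input is Edelman's classical theorem (1988), which asserts that $n\,\sigma_{\min}^{2}(\AA)$ converges in distribution, as $n \to \infty$, to a limit law with density
\[
f(x) \;=\; \frac{1+\sqrt{x}}{2\sqrt{x}}\,\exp\!\bigl(-\sqrt{x}-\tfrac{x}{2}\bigr), \qquad x > 0 .
\]
I would use this to compute the limiting probabilities
\[
p^{+} \;=\; \int_{1}^{\infty} f(x)\,dx,
\qquad
p^{-} \;=\; \int_{0}^{1/2} f(x)\,dx,
\]
and check that both are strictly positive universal constants: the integrable $x^{-1/2}$ singularity at the origin gives the crude lower bound $p^{-} \geq \int_{0}^{1/2} \tfrac{1}{2\sqrt{x}}\,e^{-1}\,dx = e^{-1}/\sqrt{2}$, while the exponential tail of $f$ together with an explicit estimate yields $p^{+} > 0$. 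Setting $c_{2} := \tfrac{1}{2}\min\!\bigl(p^{+},\,2p^{-}\bigr)$ then makes both asymptotic bounds hold with room to spare.

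The last step is to pass from the limit law to the concrete value of $n$. Since the hypothesis $n=\Omega(1/\eps^{3})$ lets us assume $n$ exceeds any fixed $n_{0}$, I would invoke the quantitative form of Edelman's theorem, namely convergence of the exact finite-$n$ density of $n\,\sigma_{\min}^{2}(\AA)$ to $f$ uniformly on every interval $[\delta,M]$ bounded away from $0$ and $\infty$. This immediately transfers the estimate for $p^{+}$, since the region $[1,\infty)$ is already away from the singularity, after a tail truncation at $M$ which costs an arbitrarily small amount thanks to the exponential decay of $f$. For $p^{-}$ the same argument handles $[\delta,1/2]$, leaving the interval $[0,\delta]$ near the $x^{-1/2}$ singularity as the main obstacle. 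I would control this piece \emph{uniformly in $n$} with the classical non-asymptotic small-ball estimate of Szarek (later sharpened by Rudelson--Vershynin), $\Pr[\sigma_{\min}(\AA)\leq t/\sqrt{n}] \leq Ct$ for all $t > 0$, which bounds the contribution of $[0,\delta]$ by $C\sqrt{\delta}$ independently of $n$. Choosing $\delta$ small forces the finite-$n$ probabilities to differ from $p^{\pm}$ by less than the slack in the definition of $c_{2}$, completing the proof.
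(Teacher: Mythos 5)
The paper never proves this lemma: it is imported verbatim as Corollary~3.3 of Braverman, Hazan, Simchowitz and Woodworth~\cite{braverman2020gradient}, so your argument is necessarily a different route --- a self-contained derivation from Edelman's limit law. The substance checks out: with $\AA=\sqrt{n}\,\XX$ one indeed has $n^{2}\lambda_{n}(\WW)=n\,\sigma_{\min}^{2}(\AA)$, the density $f$ is the correct Edelman limit for $n\,\sigma_{\min}^{2}(\AA)$, both $p^{+}$ and $p^{-}$ are strictly positive (your $e^{-1}/\sqrt{2}$ bound for $p^{-}$ is valid since $\sqrt{x}+x/2<1$ on $(0,1/2)$), and the hypothesis $n=\Omega(1/\eps^{3})$ lets you take $n$ past any universal $n_{0}$. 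Your last paragraph, however, is heavier than needed: since the limit law is atomless, plain convergence in distribution already gives $\Pr[n\sigma_{\min}^{2}(\AA)\geq 1]\to p^{+}$ and $\Pr[n\sigma_{\min}^{2}(\AA)<1/2]\to p^{-}$ at these continuity points, with no need for uniform density convergence; and the Szarek/Rudelson--Vershynin small-ball estimate on $[0,\delta]$ is superfluous for the second bound, because that region is \emph{contained} in the event $\{n\sigma_{\min}^{2}(\AA)<1/2\}$ whose probability you are lower-bounding, so discarding it only weakens the bound in the safe direction. What your approach buys is an explicit, in-principle computable constant $c_{2}$ and independence from the cited source; what the citation buys the paper is brevity and a genuinely non-asymptotic statement whose proof in~\cite{braverman2020gradient} does not route through a distributional limit.
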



We are now ready to prove Theorem \ref{thm:mv_lowerboundn_schatten_12}. Our high level approach is to show that we can take any solution that is a $(1+\eps)$-relative-error Schatten-$p$ low-rank approximation to the hard instance $\II -\frac{1}{5} \WW$, where $\WW$ is a Wishart ensemble, and extract from it an accurate estimate of the minimum eigenvalue of $\WW$, thus appealing to the hardness stated in (2) of Theorem \ref{thm:min_eigenvalue} above.

\begin{proof}[Proof of Theorem \ref{thm:mv_lowerboundn_schatten_12}]
Let $n = \Theta\Paren{1/\epsilon^{1/3}}$ and let $\AA = \II - \frac15\WW$ be an $n \times n$ instance where $\WW\sim \textsf{Wishart}(n)$. Let $\zeta_1$ be the event that $\norm{\WW}_{\textrm{op}}\leq 5$. It follows from Fact \ref{fact:norms_of_wishart} that $\zeta_1$ holds with probability at least $99/100$, and we condition on this event. 
Let $\zeta_2$ be the event that $\lambda_n\Paren{\WW} \geq \frac{1}{n^2} = \frac{\eps^{2/3}}{c^*}$ and $\zeta_3$ be the event that $\lambda_n\Paren{\WW} < \frac{1}{2 n^2} = \frac{ \eps^{2/3}}{2c^* }$.

Then, conditioning on $\zeta_2$, we have that 
\begin{equation}
    1- \frac{1}{5} \lambda_n(\WW) \leq 1- \frac{\eps^{2/3}}{5c^*} .
\end{equation}
Similarly, conditioning on $\zeta_3$, we have that
\begin{equation}
    1- \frac{1}{5} \lambda_n(\WW) \geq 1- \frac{\eps^{2/3}}{10c^*} .
\end{equation}

We observe that for $p\in [1,2]$,
using Bernoulli's inequality (Fact \ref{fact:bernoulli}) we have 
\begin{equation*}
    \Paren{ 1 - \frac{1}{5}\lambda_n(\WW)}^p \geq 1 -  \frac{p}{5}\lambda_n(\WW)
\end{equation*}
and since $(1-x)^p \leq (1-x)$ for any $x \in (0,1)$, we also have that,
\begin{equation*}
    \Paren{ 1 - \frac{1}{5}\lambda_n(\WW)}^p \leq 1 -  \frac{1}{5}\lambda_n(\WW)
\end{equation*}
Therefore, we can conclude, $\Paren{ 1 - \frac{1}{5}\lambda_n(\WW)}^p =1- \Theta\Paren{  \lambda_n(\WW) }$.
Further, it follows from part (1) of Fact \ref{fact:norms_of_wishart} that $0 \preceq \II-\frac{1}{5}\WW  \preceq \II$, and thus\kcnote{If you don't mind, I'll define $\AA= \II-\frac{1}{5}\WW$, as is done in $p\ge 2$ case, and use it in the below, making it shorter and a little easier to read.}
\begin{equation}
\label{eq:bound-schatten-p}
    \norm{\AA }^p_{\calS_p} = \sum_{i \in [n]} \lambda_i^p\Paren{\II-\frac{1}{5}\WW} \leq \sum_{i \in [n]} \lambda_i\Paren{\II-\frac{1}{5}\WW} \leq \bigO{\frac{1}{\epsilon^{1/3}}}
\end{equation}
where the last inequality follows from the fact that $n = \sqrt{c^*}/\epsilon^{1/3}$. Let $\AA_1$ denote the best rank-$1$ approximation to $\AA$.
Then, it follows from Equation \eqref{eq:bound-schatten-p} that
\begin{equation}
    \epsilon \norm{\AA - \AA_1 }^p_{\calS_p} \leq \epsilon \norm{\AA  }^p_{\calS_p} \leq \bigO{\epsilon^{2/3}}
\end{equation}
Observe, any $(1+\epsilon)$-approximate relative-error Schatten-$p$ low-rank approximation algorithm for $k =1$ outputs a matrix $v v^\top$ such that 
\begin{equation}
\label{eqn:upper_bound_sp}
\begin{split}
    \norm{ \AA \Paren{ \II - vv^\top}}^p_{\calS_p} & \leq (1+\epsilon) \norm{ \AA -\AA_1 }^p_{\calS_p} \\
    & \leq \norm{ \AA }^p_{\calS_p} - \norm{\AA }^p_{\textrm{op}} + \Theta(\epsilon^{2/3})
\end{split}
\end{equation}
 By definition of the Schatten-$p$ norm we have:
\begin{equation}
\label{eqn:lower_bound_sp}
    \begin{split}
        \norm{ \AA\Paren{ \II - vv^\top}}^p_{\calS_p} & = \trace{\Paren{\Paren{ \II - vv^\top}^2 \AA^2\Paren{ \II - vv^\top}^2}^{p/2}} \\
        & \geq \trace{\Paren{ \II - vv^\top}^p \AA^p\Paren{ \II - vv^\top}^p}\\
        & = \trace{ \AA^p - \AA^p vv^\top } \\
        & = \norm{  \AA }^p_{\calS_p} -  \trace{ \Paren{ vv^\top}^{p/2} \Paren{\AA^2}^{p/2}  \Paren{ vv^\top}^{p/2} } \\
        & \geq \norm{  \AA }^p_{\calS_p} -  \trace{ \Paren{  vv^\top \AA^2   vv^\top}^{p/2} } \\
        & = \norm{  \AA }^p_{\calS_p} -  \norm{\AA vv^\top }^p_{\calS_p} \\
        & = \norm{  \AA }^p_{\calS_p} -  \norm{\AA v}^p_{2}
    \end{split}
\end{equation}
where the first and last inequality follows from the reverse Araki-Lieb-Thirring inequality (Fact \ref{fact:alt_ineq}). Combining equations \eqref{eqn:upper_bound_sp} and \eqref{eqn:lower_bound_sp}, we have that 
\begin{equation}
\label{eqn:combined_bound_on_pth_power}
    \norm{ \AA  }^p_{\textrm{op}} \geq \norm{\AA v}^p_{2}\geq \norm{\AA }^p_{\textrm{op}} - \Theta(\epsilon^{2/3})
\end{equation}

Next, we observe that $\AA v = \Paren{\II - 1/5 \WW}v$ can be computed with one additional matrix-vector product and 
\begin{equation}
\label{eqn:min_eigenvalue}
    \norm{\AA  }_{\textrm{op}}^p = \Paren{1- \frac{1}{5}\lambda_n(\WW) }^p = 1 -  \frac{p}{5}\lambda_n(\WW) + \bigO{ \lambda^2_n(\WW)  }
\end{equation}
Consider the estimator $\hat{\lambda}(\WW) = \frac{5}{p}\Paren{ 1-\norm{ \Paren{ \II-\frac{1}{5}\WW }v }_{2}^p}$. Combining equations   \eqref{eqn:combined_bound_on_pth_power} and \eqref{eqn:min_eigenvalue}, we can conclude
\begin{equation*}
    \hat{\lambda}(\WW) = \lambda_{\min}(\WW) \pm \Theta(\epsilon^{2/3}).
\end{equation*}
obtaining an additive error estimate to the minimum eigenvalue of $\WW$ by computing an additional matrix-vector product. 
It follows that we satisfy conditions \eqref{wish:cond1} and \eqref{wish:cond2} in Theorem \ref{thm:min_eigenvalue} and thus any algorithm for computing a rank-$1$ approximation to the matrix $\AA = \II - \frac{1}{5}\WW$ in Schatten $p$ norm must make at least $\frac{1}{\epsilon^{1/3}}$ queries to the aforementioned matrix, completing the proof.
The claim follows from Theorem \ref{thm:min_eigenvalue}.
\end{proof}

\subsection{ Lower Bound for $p>2$}
\label{sec:lower_bound_p_large}

We now consider the case when $p>2$. We note that the previous approach no longer works since we cannot lower bound the cost of $\|\Paren{\II - \WW/5}\Paren{\II - vv^\top}\|_{\calS_p}$, as the Araki-Lieb-Thirring inequality reverses (see application in Equation \ref{eqn:lower_bound_sp}). Therefore, we require a new approach, and appeal to a special case of Conjecture \ref{conj:norm_compression} that is known to be true, i.e. the Aligned Norm Compression inequality (see Fact \ref{fact:aligned-norm-compression}).  The main theorem we prove in this sub-section is as follows: 

\begin{theorem}[Query Lower Bound for $p>2$]
\label{thm:query_lower_bound_p>2}
Given $\eps >0$, and $p \geq 2$ such that $p = \bigO{1}$, there exists a distribution $\mathcal{D}$ over $n \times n$ matrices such that for $\AA \sim \mathcal{D}$, any randomized algorithm that with probability at least $99/100$ outputs a unit vector $u$  such that $\norm{ \AA - \AA uu^\top}^p_{\calS_p} \leq (1+\eps)\norm{\AA - \AA_1}^p_{\calS_p}$ must make $\Omega\Paren{1/\eps^{1/3}}$ matrix-vector queries to $\AA$.
\end{theorem}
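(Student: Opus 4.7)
The plan is to reuse the hard instance $\AA = \II - \WW/5$ with $\WW \sim \textsf{Wishart}(n)$ and $n = \Theta(1/\epsilon^{1/3})$, and reduce approximate Schatten-$p$ rank-$1$ approximation to estimating $\lambda_n(\WW)$ to additive accuracy $O(\epsilon^{2/3})$, which by Theorem \ref{thm:min_eigenvalue} requires $\Omega(1/\epsilon^{1/3})$ matrix-vector queries. By Fact \ref{fact:norms_of_wishart} we condition on $\|\WW\|_{\mathrm{op}} \leq 5$ so that $0 \preceq \AA \preceq \II$ and $\|\AA\|_{\calS_p}^p = \Theta(1/\epsilon^{1/3})$. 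As in the proof of Theorem \ref{thm:mv_lowerboundn_schatten_12}, a $(1+\epsilon)$-approximate rank-$1$ solution $z$ satisfies
\begin{equation*}
\norm{\AA(\II - zz^\top)}_{\calS_p}^p \leq \norm{\AA}_{\calS_p}^p - \sigma_1(\AA)^p + O(\epsilon^{2/3}).
\end{equation*}
Thus it suffices to produce from $z$ a lower bound on $\norm{\AA(\II - zz^\top)}_{\calS_p}^p$ of the form $\norm{\AA}_{\calS_p}^p - \|\AA z\|_2^p - O(\epsilon^{2/3})$, which would give $\|\AA z\|_2^p \geq \sigma_1(\AA)^p - O(\epsilon^{2/3})$ and let us extract $\hat\lambda_n = \tfrac{5}{p}(1 - \|\AA z\|_2^p) = \lambda_n(\WW) \pm O(\epsilon^{2/3})$ using one extra matrix-vector product (recall $\sigma_1(\AA) = 1 - \tfrac{1}{5}\lambda_n(\WW)$ and that $x \mapsto x^p$ has Lipschitz constant $O(1)$ near $1$).

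The main task is therefore to establish this reverse Pythagorean-type inequality for $p > 2$. Following the technical overview, set $y = \AA z/\|\AA z\|_2$ and let $\YY, \ZZ$ be orthonormal bases for the orthogonal complements of $y$ and $z$ respectively. A rotation argument and unitary invariance of the Schatten norm give
\begin{equation*}
\norm{\AA}_{\calS_p} = \norm{\begin{pmatrix} y^\top \AA z & y^\top \AA \ZZ \\ \YY^\top \AA z & \YY^\top \AA \ZZ \end{pmatrix}}_{\calS_p}.
\end{equation*}
By the choice of $y$, we have $y^\top \AA \ZZ = \tfrac{1}{\|\AA z\|_2} z^\top \AA^\top \AA \ZZ$, but this block is not zero in general; however $\YY^\top \AA z = 0$ since $\YY^\top y = 0$ and $y$ is parallel to $\AA z$. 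So the block matrix has a zero bottom-left entry, and the two rows are automatically aligned (any row with a single nonzero block is trivially a scalar multiple of a fixed matrix; for the top row we use the scalar structure to express both entries as multiples of $\AA(\II - zz^\top)$ padded appropriately, via Lemma \ref{lem:orthogonal_proj_block_matrices} as referenced in the overview). This lets us apply the Aligned Norm Compression Inequality (Fact \ref{fact:aligned-norm-compression}) for $p \geq 2$ to obtain
\begin{equation*}
\norm{\AA}_{\calS_p}^p \leq \norm{\CC_{\AA,p}}_{\calS_p}^p, \qquad \CC_{\AA,p} = \begin{pmatrix} |y^\top \AA z| & \norm{y^\top \AA \ZZ}_2 \\ 0 & \norm{\YY^\top \AA \ZZ}_{\calS_p} \end{pmatrix}.
\end{equation*}

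The remaining step is to bound $\norm{\CC_{\AA,p}}_{\calS_p}^p$ explicitly. Since $\CC_{\AA,p}$ is $2\times 2$ and upper triangular, its squared singular values are the eigenvalues of $\CC_{\AA,p}^\top \CC_{\AA,p}$, for which Fact \ref{fact:singular_vals_2x2} gives a closed form in terms of the three entries. Using $|y^\top \AA z|^2 + \|y^\top \AA \ZZ\|_2^2 = \|y^\top \AA\|_2^2 = \|\AA z\|_2^2$ and $\norm{\YY^\top \AA \ZZ}_{\calS_p} \leq \norm{\AA(\II - zz^\top)}_{\calS_p}$, together with the fact that the spectrum of $\AA$ is concentrated in $[1 - O(\epsilon^{2/3}), 1]$ on the hard instance so $\|y^\top \AA \ZZ\|_2 = O(\epsilon^{2/3})$, a direct computation yields
\begin{equation*}
\norm{\CC_{\AA,p}}_{\calS_p}^p \leq \norm{\AA z}_2^p + (1 + O(\epsilon^{2p/3})) \norm{\AA(\II - zz^\top)}_{\calS_p}^p.
\end{equation*}
Combining this with the norm compression inequality $\norm{\AA}_{\calS_p}^p \leq \norm{\CC_{\AA,p}}_{\calS_p}^p$ gives the desired lower bound on $\norm{\AA(\II - zz^\top)}_{\calS_p}^p$, from which extraction of $\hat\lambda_n$ and invocation of Theorem \ref{thm:min_eigenvalue} completes the proof.

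The principal obstacle is the alignment step: Fact \ref{fact:aligned-norm-compression} requires the two rows of the block matrix to be scalar multiples of fixed matrices, and our $2\times 2$ decomposition a priori does not have this structure. The resolution is precisely the choice $y = \AA z/\|\AA z\|_2$, which forces the bottom-left block to vanish so that only one row needs to be handled, together with a padding and rotation argument (Lemma \ref{lem:orthogonal_proj_block_matrices}) that rewrites the top row as a scalar multiple of a single row matrix. Verifying this alignment reduction rigorously, and carefully tracking the $O(\epsilon^{2p/3})$ cross terms produced when expanding the singular value formula for the $2 \times 2$ compressed matrix, are the delicate parts of the argument.
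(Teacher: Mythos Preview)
Your overall architecture matches the paper's proof: same hard instance, same choice $y=\AA z/\|\AA z\|_2$ to kill the lower-left block, same appeal to the aligned norm-compression inequality, same $2\times 2$ singular-value computation, and same extraction of $\hat\lambda_n$. However, there is a genuine gap at the step where you bound the off-diagonal entry $\|y^\top \AA \ZZ\|_2$.

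Your justification is that ``the spectrum of $\AA$ is concentrated in $[1 - O(\epsilon^{2/3}), 1]$ on the hard instance.'' This is false: for $\WW\sim\textsf{Wishart}(n)$ the eigenvalues of $\WW$ spread over an interval of order $[0,4]$, so the eigenvalues of $\AA=\II-\WW/5$ span roughly $[0,1]$, not a window of width $\epsilon^{2/3}$ near $1$. Relatedly, the identity $\|y^\top \AA\|_2^2=\|\AA z\|_2^2$ you wrote is not valid in general (it would force $\|y^\top\AA\ZZ\|_2=0$); what is true is $|y^\top\AA z|=\|\AA z\|_2$, while $\|y^\top\AA\|_2^2=\|\AA^2 z\|_2^2/\|\AA z\|_2^2$, which can exceed $\|\AA z\|_2^2$ by a constant for an arbitrary unit $z$.

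The paper closes this gap with an additional idea you are missing: before doing the block-matrix analysis, it \emph{boosts} the algorithm's output by running $O(1/\epsilon^{1/3})$ iterations of Block Krylov starting from $u$ (Corollary~\ref{lem:bound_convergence_in_gamma_iter}). This costs only $O(1/\epsilon^{1/3})$ extra matrix-vector products, so it does not affect the $\Omega(1/\epsilon^{1/3})$ lower bound, but it guarantees $\|\AA\|_{\mathrm{op}}^2-\|\AA u\|_2^2\le O(\epsilon^{2/3}\sigma_2^2)$. With this in hand, sub-multiplicativity gives
\[
\|v^\top\AA(\II-uu^\top)\|_2
=\sqrt{\tfrac{\|u^\top\AA^\top\AA\|_2^2}{\|\AA u\|_2^2}-\|\AA u\|_2^2}
\le\sqrt{\|\AA\|_{\mathrm{op}}^2-\|\AA u\|_2^2}
\le O(\epsilon^{1/3}),
\]
which is exactly the control on the off-diagonal entry that your argument needs (note the exponent is $\epsilon^{1/3}$, not $\epsilon^{2/3}$). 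Once you insert this boosting step, the rest of your outline goes through as written.
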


We first introduce a sequence of key lemmas required for our proof.

\begin{corollary}[Special Case of Lemma \ref{lem:gap_independent}]
\label{lem:bound_convergence_in_gamma_iter}
 Given $\gamma \in [0,1]$, a vector $v \in \mathbb{R}^d$ and an $n \times d$ matrix $\AA$, let $t = \log(n/\gamma)/(c \sqrt{ \gamma} )$, for a fixed universal constant $c$. Then, there exists an algorithm that computes $t$ matrix-vector products with $\AA$ and outputs a unit vector $u$ such that with probability at least $99/100$,  
\begin{equation*}
    \norm{\AA}_{\textrm{op}}^2 - \norm{\AA u}_2^2 \leq O\Paren{ \gamma \sigma_2^2 }.
\end{equation*}
where $\sigma_2$ is the second largest singular value of $\AA$. 
\end{corollary}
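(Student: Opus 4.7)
The plan is to derive the corollary as a direct specialization of Lemma \ref{lem:gap_independent} (gap-independent Block Krylov) to target rank $k=1$. Concretely, I would run Algorithm \ref{algo:simul_power_iter} on $\AA$ with block size $1$ (initialized from a random Gaussian starting vector, which I suspect is the role intended for the vector $v$ in the statement) and with iteration count $q = \Theta(\log(d/\gamma)/\sqrt{\gamma})$. Let $u \in \mathbb{R}^d$ denote the (single) column of the output matrix $\ZZ$; it is a unit vector since Algorithm \ref{algo:simul_power_iter} returns $\QQ \YY_k$ where $\QQ$ has orthonormal columns and $\YY_k$ is a matrix of singular vectors.

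Applying Lemma \ref{lem:gap_independent} with $k=1$ directly gives that with probability at least $99/100$,
\[
\norm{\AA u}_2^2 \;\geq\; \sigma_1^2(\AA) - \gamma\,\sigma_2^2(\AA).
\]
Since $\norm{\AA}_{\textrm{op}}^2 = \sigma_1^2(\AA)$, rearranging yields the desired inequality $\norm{\AA}_{\textrm{op}}^2 - \norm{\AA u}_2^2 \leq O(\gamma\,\sigma_2^2)$. The total number of matrix-vector products performed is $O(k \cdot q) = O(\log(d/\gamma)/\sqrt{\gamma})$, which can be written as $\log(n/\gamma)/(c\sqrt{\gamma})$ for an appropriate universal constant $c$ (using that $\log(d/\gamma)$ and $\log(n/\gamma)$ agree up to constant factors in the relevant regime, and that the effective rank is at most $\min(n,d)$).

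There is no substantive obstacle here: the corollary is essentially a convenient restatement of the $k=1$ case of the Musco--Musco gap-independent guarantee, packaged in the operator-norm form in which it will be used later. The only bookkeeping is confirming the equivalence of the $\log(n/\gamma)$ and $\log(d/\gamma)$ factors, and noting that the $\pm$ in Lemma \ref{lem:gap_independent} can be replaced by a one-sided lower bound since we only care about the quantity $\norm{\AA}_{\textrm{op}}^2 - \norm{\AA u}_2^2$, which is automatically non-negative.
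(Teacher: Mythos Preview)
Your proposal is correct and matches the paper's approach exactly: the paper gives no separate proof for this corollary, labeling it in the header as a ``Special Case of Lemma~\ref{lem:gap_independent}'', and your derivation---instantiate Lemma~\ref{lem:gap_independent} with $k=1$, read off $\norm{\AA u}_2^2 \geq \sigma_1^2 - \gamma\sigma_2^2$, and rearrange---is precisely that specialization. Your handling of the bookkeeping ($\log(d/\gamma)$ versus $\log(n/\gamma)$, the role of the input vector $v$ as the Krylov starting vector) is also consistent with how the corollary is invoked later in the proof of Theorem~\ref{thm:query_lower_bound_p>2}, where the authors explicitly start the Block Krylov iterations from the vector output by the hypothesized algorithm.
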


Next, we prove a key lemma relating the norm of a matrix to norms of orthogonal projections applied to the matrix. We note that this lemma is straight forward and holds for arbitrary vectors unit $u, v$ if Conjecture \ref{conj:norm_compression} holds. However, we show that we can transform our matrix to have structure such that we can apply Fact \ref{fact:aligned-norm-compression} instead.

\begin{lemma}[Orthogonal Projectors to Block Matrices ]
\label{lem:orthogonal_proj_block_matrices}
Given an $n\times d$ matrix $\AA$, $p > 2$ and unit vectors $u \in \mathbb{R}^d, v  \in \mathbb{R}^n$, such that $\Paren{\II- vv^\top }\AA u u^\top =0$. Then,   we have 
\begin{equation*}
    \norm{\AA }_{\calS_p} \leq   \norm{ \begin{pmatrix}
        \norm{ vv^\top \AA uu^\top}_{\calS_p}  & \norm{vv^\top  \AA \Paren{ \II - uu^\top}  }_{\calS_p} \\\
        0 & \norm{\Paren{ \II - vv^\top} \AA \Paren{ \II - uu^\top}}_{\calS_p} 
    \end{pmatrix} }_{\calS_p} .
\end{equation*}
\end{lemma}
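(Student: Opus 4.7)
The plan is to reduce the claim, via unitary rotations and zero-column padding, to a block matrix that satisfies the alignment hypothesis of the aligned norm compression inequality (Fact \ref{fact:aligned-norm-compression}). The main obstacle is engineering the alignment: the natural 2-by-2 block decomposition induced by the projections $vv^\top$ and $uu^\top$ has a top-left block of shape $1\times 1$ and a top-right block of shape $1\times(d-1)$, so the two top blocks cannot be scalar multiples of a common matrix without first equalizing their shapes via padding.

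First, I would choose matrices $\YY\in\mathbb{R}^{n\times(n-1)}$ and $\UU\in\mathbb{R}^{d\times(d-1)}$ with orthonormal columns spanning $\mathrm{span}(v)^\perp$ and $\mathrm{span}(u)^\perp$ respectively, and set $\RR=[\,v\mid\YY\,]^\top$, $\SS=[\,u\mid\UU\,]^\top$. By unitary invariance (Fact \ref{fact:unitary_inv}), $\norm{\AA}_{\calS_p}=\norm{\RR\AA\SS^\top}_{\calS_p}$, and
\begin{equation*}
\RR\AA\SS^\top=\begin{pmatrix} v^\top\AA u & v^\top\AA\UU \\ \YY^\top\AA u & \YY^\top\AA\UU \end{pmatrix}=\begin{pmatrix} \alpha & w^\top \\ 0 & \MM \end{pmatrix},
\end{equation*}
where the hypothesis $(\II-vv^\top)\AA uu^\top=0$ forces the bottom-left block $\YY^\top\AA u$ to vanish. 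Here $\alpha=v^\top\AA u$, $w=\UU^\top\AA^\top v\in\mathbb{R}^{d-1}$, and $\MM=\YY^\top\AA\UU\in\mathbb{R}^{(n-1)\times(d-1)}$.

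Next, choose a unitary $\QQ'\in\mathbb{R}^{(d-1)\times(d-1)}$ with $w^\top\QQ'=\norm{w}_2\,e_1^\top$, and apply the block-diagonal unitary $\mathrm{diag}(1,\QQ')$ on the right. By unitary invariance, the Schatten-$p$ norm is preserved, and the matrix becomes $\bigl(\begin{smallmatrix}\alpha & \norm{w}_2 e_1^\top \\ 0 & \MM\QQ'\end{smallmatrix}\bigr)$, whose top-right block is now supported on a single column. Now insert $d-2$ zero columns between the first column and the remaining $d-1$ columns; this does not change the singular values. The resulting $n\times(2d-2)$ matrix $\tilde\BB$ admits the $2\times 2$ partition
\begin{equation*}
\tilde\BB=\begin{pmatrix} \alpha\,e_1^\top & \norm{w}_2\,e_1^\top \\ 0_{(n-1)\times(d-1)} & \MM\QQ' \end{pmatrix},
\end{equation*}
in which the top row has blocks $\MM_1=\alpha\XX$ and $\MM_2=\norm{w}_2\XX$ with $\XX=e_1^\top\in\mathbb{R}^{1\times(d-1)}$, and the bottom row has blocks $\MM_3=0\cdot\YY'$ and $\MM_4=1\cdot\YY'$ with $\YY'=\MM\QQ'$. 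Both rows are therefore aligned.

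Applying Fact \ref{fact:aligned-norm-compression} to $\tilde\BB$ and using $\norm{\alpha e_1^\top}_{\calS_p}=|\alpha|$, $\norm{\,\norm{w}_2e_1^\top}_{\calS_p}=\norm{w}_2$, $\norm{\MM\QQ'}_{\calS_p}=\norm{\MM}_{\calS_p}$ yields
\begin{equation*}
\norm{\AA}_{\calS_p}=\norm{\tilde\BB}_{\calS_p}\leq \norm{\begin{pmatrix} |\alpha| & \norm{w}_2 \\ 0 & \norm{\MM}_{\calS_p} \end{pmatrix}}_{\calS_p}.
\end{equation*}
Finally, identifying the scalar entries of this $2\times 2$ matrix with Schatten-$p$ norms of the four projections of $\AA$ finishes the proof: $|\alpha|=\norm{vv^\top\AA uu^\top}_{\calS_p}$ (rank-$1$ with sole singular value $|\alpha|$), $\norm{w}_2=\norm{vv^\top\AA(\II-uu^\top)}_{\calS_p}$ (rank-$1$), and $\norm{\MM}_{\calS_p}=\norm{(\II-vv^\top)\AA(\II-uu^\top)}_{\calS_p}$ since $\YY,\UU$ have orthonormal columns. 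The critical design choice, and the only delicate step, was the joint rotation-plus-padding in the right block column, chosen precisely so that both top blocks collapse to scalar multiples of the rank-$1$ matrix $e_1^\top$, which is exactly the alignment Fact \ref{fact:aligned-norm-compression} requires.
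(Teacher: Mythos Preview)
Your proposal is correct and follows essentially the same approach as the paper: unitarily rotate to the block form induced by $v$ and $u$, use the hypothesis to kill the bottom-left block, then pad with $d-2$ zero columns and rotate the right column-block so that the top row becomes two scalar multiples of $e_1^\top$, at which point Fact~\ref{fact:aligned-norm-compression} applies. The only cosmetic difference is that the paper pads first and rotates second, whereas you rotate first and pad second; both orderings produce the same aligned block structure.
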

\begin{proof}

Let $\II - vv^\top = \YY \YY^\top$, where $\YY$ has $n -1$ orthonormal columns. Further, since $v$ and $\YY$ span disjoint subspaces, and the union of their span is $\mathbb{R}^n$, the matrix $\Paren{  v \mid \YY }$, obtained by concatenating their columns is unitary. Then, let $\RR = \Paren{  v \mid \YY }^\top$ and observe, $\RR$ has orthonormal rows and columns (since $\RR$ is unitary).  
Next, let  $\II - uu^\top = \ZZ \ZZ^\top$, where $\ZZ$ is $d \times (d-1)$ and has orthonormal columns. Let $\SS = \Paren{ u  \mid \ZZ }^\top$,  and observe $\SS$ has orthonormal rows and columns. 

Let $\hat{\AA} = \RR \AA \SS^\top$, which admits the following block-matrix form:

\begin{equation*}
\begin{split}
    \hat{\AA} = \begin{pmatrix}v^\top \\
    \YY^\top \end{pmatrix} \cdot \AA \cdot  \Paren{ u \mid \ZZ  } =\begin{pmatrix}v^\top \\
    \YY^\top \end{pmatrix}  \Paren{ \AA u \mid \AA \ZZ  } = \begin{pmatrix} v^\top \AA u & v^\top \AA \ZZ \\
    \YY^\top \AA u & \YY^\top \AA \ZZ   \end{pmatrix}
\end{split}
\end{equation*}
Since $\RR$ and $\SS$ are unitary, it follows from unitary invariance of the Schatten-$p$ norm that
\begin{equation}
    \norm{\AA}_{\calS_p} = \norm{\hat{\AA}}_{\calS_p} = \norm{ \begin{pmatrix} v^\top \AA u & v^\top \AA \ZZ \\
    \YY^\top \AA u & \YY^\top \AA \ZZ   \end{pmatrix}  }_{\calS_p} = \norm{ \begin{pmatrix} v^\top \AA u & v^\top \AA \ZZ \\
    0 & \YY^\top \AA \ZZ   \end{pmatrix}  }_{\calS_p} ,
\end{equation}
where the last equality follows from observing that $\norm{ \YY^\top \AA u }_F = \norm{\YY \YY^\top \AA u u^\top }_F = \norm{\Paren{\II - vv^\top} \AA u u^\top }_F=0$ and therefore $\YY^\top \AA u$ is a matrix of all $0$s. Next, we append a set of $d -2$ columns of $0$'s to make the top left and top right block the same size. Since this does not change the singular values, we have 
\begin{equation}
    \norm{\AA}_{\calS_p}= \norm{ \begin{pmatrix} v^\top \AA u & 0 & v^\top \AA \ZZ \\
    0 & 0 &  \YY^\top \AA \ZZ   \end{pmatrix}  }_{\calS_p}
\end{equation}
Next, we construct a rotation matrix $\RR$ such that on right multiplying a row vector by $\RR$, the first $d-1$ coordinates remain the same and on the remaining coordinates, the vector $v^\top \AA \ZZ $ gets mapped to $c e_1^\top$ for some scalar $c$. Let $\SS$ be the $d-1 \times d -1$ rotation matrix such that $v^\top \AA \ZZ  \SS = c e_1^\top$. Then, $\RR = \begin{pmatrix} \II & 0 \\
0 & \SS \end{pmatrix}$ and it is easy to verify that $\RR$ is unitary. Therefore,
\begin{equation*}
    \begin{pmatrix} v^\top \AA u & 0 & v^\top \AA \ZZ \\
    0 & 0 &  \YY^\top \AA \ZZ   \end{pmatrix} \cdot \RR = \begin{pmatrix} v^\top \AA u & 0 & c e_1^\top \\
    0 & 0 &  \YY^\top \AA \ZZ \SS   \end{pmatrix}   
\end{equation*}
Now, we observe the final matrix above has a block matrix form we can apply the Aligned Norm Compression inequality from Fact \ref{fact:aligned-norm-compression}, with $\alpha_1 = v^\top\AA u$, $\alpha_2 = c$, $\beta_1  = 0$ and $\beta_2 = 0$, and therefore 

\begin{equation}
\begin{split}
    \norm{\AA}_{\calS_p}= \norm{\begin{pmatrix} v^\top \AA u & 0 & c e_1^\top \\
    0 & 0 &  \YY^\top \AA \ZZ \SS\end{pmatrix}    }_{\calS_p} & \leq  \norm{\begin{pmatrix} \norm{ v^\top \AA u}_{\calS_p} & 0 & \norm{c e_1^\top}_{\calS_p} \\
    0 & 0 &  \norm{ \YY^\top \AA \ZZ \SS}_{\calS_p} \end{pmatrix}    }_{\calS_p} \\
    & = \norm{ \begin{pmatrix} \norm{ v v^\top \AA u u^\top }_{\calS_p} & \norm{v v^\top \AA \ZZ \ZZ^\top}_{\calS_p} \\
    0& \norm{\YY \YY^\top \AA \ZZ\ZZ^\top }_{\calS_p}   \end{pmatrix}  }_{\calS_p}
\end{split}
\end{equation}
where the last equality follows from unitary invariance and substituting the definition of $\YY\YY^\top$ and $\ZZ \ZZ^\top$ completes the proof. 



\end{proof}

\begin{fact}[SVD of a $2\times 2$ Matrix]
\label{fact:singular_vals_2x2}
Given a $2 \times 2$ matrix $\MM =\begin{pmatrix}
        a& b \\
        c & d \end{pmatrix} $ let $\UU \Sig \VV^\top$ be the SVD of $\MM$. Then,
\begin{equation*}
    \Sig_{1,1} = \sqrt{ \frac{ a^2 + b^2 + c^2 + d^2 + \sqrt{ \Paren{a^2 + b^2 - c^2 - d^2 }^2 + 4 \Paren{ac +bd }^2 }  }{2} } ,
\end{equation*}
and
\begin{equation*}
    \Sig_{2,2} =\sqrt{ \frac{ a^2 + b^2 + c^2 + d^2 - \sqrt{ \Paren{a^2 + b^2 - c^2 - d^2 }^2 + 4 \Paren{ac +bd }^2 }  }{2} } .
\end{equation*}
\end{fact}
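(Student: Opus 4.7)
The plan is to compute $\MM^\top \MM$, whose eigenvalues are the squares of the singular values of $\MM$, and then apply the quadratic formula to its characteristic polynomial. The only nontrivial step is matching the resulting discriminant to the expression claimed in the fact.

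First, I would write
\[
\MM^\top \MM \;=\; \begin{pmatrix} a^2+c^2 & ab+cd \\ ab+cd & b^2+d^2 \end{pmatrix},
\]
whose trace is $T = a^2 + b^2 + c^2 + d^2$ and whose determinant is
\[
D \;=\; (a^2+c^2)(b^2+d^2) - (ab+cd)^2 \;=\; a^2 d^2 + b^2 c^2 - 2abcd \;=\; (ad - bc)^2.
\]
The characteristic polynomial of $\MM^\top \MM$ is therefore $\lambda^2 - T\lambda + D$, so by the quadratic formula its eigenvalues are $(T \pm \sqrt{T^2 - 4D})/2$. Since the singular values of $\MM$ are the nonnegative square roots of the eigenvalues of $\MM^\top\MM$, we obtain $\Sig_{1,1} = \sqrt{(T + \sqrt{T^2 - 4D})/2}$ and $\Sig_{2,2} = \sqrt{(T - \sqrt{T^2 - 4D})/2}$.

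It remains to verify the algebraic identity
\[
T^2 - 4D \;=\; (a^2 + b^2 - c^2 - d^2)^2 + 4(ac + bd)^2.
\]
Setting $u = a^2 + b^2$ and $v = c^2 + d^2$, the left side equals $(u+v)^2 - 4(ad-bc)^2 = (u-v)^2 + 4uv - 4(ad-bc)^2$, so the identity reduces to $uv = (ad-bc)^2 + (ac+bd)^2$. Expanding the right-hand side,
\[
(ad-bc)^2 + (ac+bd)^2 \;=\; a^2 d^2 + b^2 c^2 + a^2 c^2 + b^2 d^2 \;=\; (a^2+b^2)(c^2+d^2) \;=\; uv,
\]
as required. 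Substituting this into the expressions for $\Sig_{1,1}$ and $\Sig_{2,2}$ derived above yields exactly the formulas in the statement.

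The main obstacle is essentially bookkeeping: spotting that the discriminant $T^2 - 4D$ arising naturally from the quadratic formula admits the sum-of-squares rewrite $(a^2+b^2-c^2-d^2)^2 + 4(ac+bd)^2$ via the Brahmagupta--Fibonacci-style identity $(a^2+b^2)(c^2+d^2) = (ad-bc)^2 + (ac+bd)^2$. Once this identity is in hand, the fact is immediate; otherwise the formulas look mysterious.
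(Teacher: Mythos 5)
Your proof is correct. The paper states this as a bare Fact with no proof (it is a standard closed-form for the singular values of a $2\times 2$ matrix), so there is no argument to compare against; your derivation via the characteristic polynomial of $\MM^\top\MM$ and the identity $(a^2+b^2)(c^2+d^2) = (ad-bc)^2 + (ac+bd)^2$ is a complete and valid verification. One very minor simplification: if you instead diagonalize $\MM\MM^\top = \bigl(\begin{smallmatrix} a^2+b^2 & ac+bd \\ ac+bd & c^2+d^2 \end{smallmatrix}\bigr)$, which has the same characteristic polynomial, the discriminant $(a^2+b^2-c^2-d^2)^2 + 4(ac+bd)^2$ falls out directly from $(\mathrm{tr})^2 - 4\det$ without needing the Brahmagupta--Fibonacci step as a separate lemma.
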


Now, we are ready to prove Theorem \ref{thm:query_lower_bound_p>2}. 

\begin{proof}[Proof of Theorem \ref{thm:query_lower_bound_p>2}]
Let $\AA = \II - \frac{1}{5}\WW$ where $\WW$ is an $n\times n$ Wishart matrix as in the proof
of Theorem~\ref{thm:mv_lowerboundn_schatten_12}
and we have by hypothesis
that there is an algorithm that with probability at least $99/100$, outputs a unit vector $u$ such that $\norm{\AA \Paren{\II - uu^\top } }_{\calS_p}^p \leq (1+\eps) \norm{\AA - \AA_1 }_{\calS_p}^p $. Let $v = \AA u/\norm{\AA u}_2$ and  observe, $\Paren{ \II - vv^\top} \AA uu^\top =0$. Further, by the unitary invariance of the Schatten-$p$ norm, 
\begin{equation}
\label{eqn:term_a}
    \norm{vv^\top \AA uu^\top}_{\calS_p} =  \abs{v^\top \AA  u } = \frac{ \abs{u^\top\AA^\top \AA  u }} {\norm{\AA u}_2 } = \norm{\AA u}_2.
\end{equation}
Similarly, 
\begin{equation}
\label{eqn:term_c}
\begin{split}
    \norm{ vv^\top \AA \Paren{\II -  uu^\top } }_{\calS_p} & =  \sqrt{ \norm{ v^\top \AA \Paren{\II -  uu^\top } }_{2}^2  } = \sqrt{ \norm{v^\top \AA}_2^2 - \norm{ v^\top \AA uu^\top}_2^2  }  \\
    & = \sqrt{ \frac{ \norm{u^\top \AA^\top \AA}_2^2}{\norm{\AA u}_2^2 } - \norm{\AA u}_2^2  } \\
    & \leq \sqrt{ \frac{ \norm{u^\top \AA^\top }_2^2\cdot \norm{\AA}_{\textrm{op}}^2 }{\norm{\AA u}_2^2 } - \norm{\AA u}_2^2  } \\
    & \leq \epsilon^{1/3}\sigma_2, 
\end{split}
\end{equation}
where we use sub-multiplicativity of the $\ell_2$ norm and Corollary \ref{lem:bound_convergence_in_gamma_iter} with $\gamma = \epsilon^{2/3}$. Note that we can assume w.l.o.g. that Corollary \ref{lem:bound_convergence_in_gamma_iter} holds since we can just iterate Block Krylov $q = (1/c\epsilon^{1/3})$ times, for a sufficiently large constant $c$, starting the iterations with the vector~$u$ output by the algorithm hypothesized for the theorem, and pay only $(1/c\epsilon^{1/3})$ extra matrix-vector products.
Since $vv^\top \AA + \AA uu^\top  - vv^\top \AA u u^\top $ has rank at most $3$, 
\begin{equation}
\label{eqn:term_d}
\begin{split}
\norm{\Paren{ \II - vv^\top} \AA \Paren{ \II - uu^\top} }_{\calS_p}^p &  =  \norm{ \AA - vv^\top \AA - \AA uu^\top +  vv^\top \AA u u^\top  }^p_{\calS_p} \\
& \geq \norm{\AA - \AA_3 }^p_{\calS_p}\\
& = \Omega\Paren{ \frac{1}{\epsilon^{1/3}} } ,
\end{split}
\end{equation}
where the last inequality follows from  Fact \ref{fact:norms_of_wishart}.
\kcnote{Not sure where $\norm{\AA }_{\calS_p}^p = \Omega_p(1/\epsilon^{1/3})$ comes from}\abnote{fixed.} \kcnote{At last, use of Fact \ref{fact:norms_of_wishart}.2! Although it was an upper bound before, I think, and this is a lower bound.}
\abnote{modified the fact to be $\Theta$. }


Let 
$\MM=\begin{pmatrix}
        a& b \\
        c & d \end{pmatrix}  = \begin{pmatrix}
        \norm{vv^\top \AA uu^\top}_{\calS_p}  & \norm{vv^\top  \AA  \Paren{\II - uu^\top} }_{\calS_p}  \\\
        \norm{ \Paren{\II - vv^\top} \AA uu^\top }_{\calS_p} & \norm{\Paren{ \II - vv^\top} \AA \Paren{ \II - uu^\top} }_{\calS_p}
    \end{pmatrix}^\top$. Then, it follows from Fact \ref{fact:singular_vals_2x2} that 

\begin{equation}
\label{eqn:first_singular_value}
\begin{split}
    \Sig_{1,1}\Paren{\MM} & = \frac{1 }{\sqrt{ 2}  } \cdot  \sqrt{a^2  + c^2 + d^2 + \sqrt{ \Paren{a^2  - c^2 - d^2 }^2 + 4 \Paren{ac }^2 }   } \\
    & = \frac{1 }{\sqrt{ 2}  } \cdot  \sqrt{ a^2  + c^2 + d^2 +  \Paren{c^2 + d^2 - a^2 }  + \Theta\Paren{ \frac{4 a^2 c^2 }{c^2 + d^2 - a^2  } }    }  \\
    & = \sqrt{ c^2 + d^2    + \Theta\Paren{\frac{  a^2 c^2  }{c^2 + d^2 - a^2 }} } ,
\end{split}
\end{equation}
where
we use that $b=0$,  $c, a \leq 1$ and $1 \ll d$ and the Taylor expansion of $\sqrt{x + y}$ for $x,y\geq0$. Similarly, 

\begin{equation}
\label{eqn:second_singular_value}
    \Sig_{2,2}\Paren{\MM}   = \sqrt{ a^2    - \Theta\Paren{\frac{  a^2 c^2  }{c^2 + d^2 - a^2 }} }.
\end{equation}
Then, using equations \eqref{eqn:first_singular_value} and  \eqref{eqn:second_singular_value} we can bound the Schatten-$p$ norm of $\MM$ as follows:

\begin{equation}
\label{eqn:schatten_p_of_a}
    \norm{\MM }_{\calS_p}^p \leq   \underbrace{  \Paren{ c^2 + d^2    + \Theta\Paren{\frac{  a^2 c^2  }{c^2 + d^2 - a^2 }}  }^{p/2} }_{\ref{eqn:schatten_p_of_a}.1} + \underbrace{ \Paren{a^2    - \Theta\Paren{\frac{  a^2 c^2  }{c^2 + d^2 - a^2 }} }^{p/2}}_{\ref{eqn:schatten_p_of_a}.2}  .
\end{equation}
We now bound each of the terms above. Consider the first term: 
\begin{equation}
\begin{split}
    \Paren{ c^2 + d^2    + \Theta\Paren{\frac{  a^2 c^2  }{c^2 + d^2 - a^2 }}  }^{p/2} & =  \Bigg( \norm{vv^\top \AA \Paren{\II - uu^\top } }_{\calS_p}^2  \\
    & \hspace{0.2in} + \norm{\Paren{ \II - vv^\top} \AA \Paren{ \II - uu^\top} }_{\calS_p}^2    
    + \Theta\Paren{\eps^{2/3} \norm{\AA u}_2^2 }  \Bigg)^{p/2}\\
    & \leq  \Paren{ \Theta\Paren{\eps^{2/3}}  + \norm{ \AA \Paren{ \II - uu^\top} }_{\calS_p}^2 }^{p/2} \\
    & \leq \Paren{ 1+\bigO{ \epsilon^{2p/3}}  } \norm{ \AA - \AA_1 }_{\calS_p}^p,
\end{split}
\end{equation}
where we use equation \eqref{eqn:term_a}, \eqref{eqn:term_c}, and \eqref{eqn:term_d}, and $\norm{ \AA \Paren{ \II - uu^\top} }_{\calS_p}^2 \leq (1+\epsilon)^{2/p} \norm{\AA - \AA_1}_{\calS_p}^2 $. The last inequality follows from observing that
\begin{equation*}
     \eps^{2/3} \leq \bigO{  \eps^{4/3} \cdot \frac{1}{\epsilon^{2/3p }}} \leq  \bigO{ \eps^{4/3} \cdot \norm{\AA - \AA_1}_{\calS_p}^2}.
\end{equation*}
We can now bound the second term in Equation \ref{eqn:schatten_p_of_a} as follows: 

\begin{equation}
    \Paren{a^2    - \Theta\Paren{\frac{  a^2 c^2  }{c^2 + d^2 - a^2 }} }^{p/2} = \Paren{ \norm{\AA u}_2^2 -  \Theta\Paren{ \epsilon^{ 2/3 }\norm{\AA u}_2^2 }  }^{p/2} \leq  \norm{\AA u }_2^p  .
\end{equation}
Then, we have

\begin{equation*}
     \norm{\MM }_{\calS_p}^p \leq \Paren{ 1+ \bigO{ \epsilon^{2p/3}}  } \norm{ \AA - \AA_1 }_{\calS_p}^p + \norm{\AA u }_2^p . 
\end{equation*}
It follows from Lemma  \ref{lem:orthogonal_proj_block_matrices}, that  $\norm{\MM }^p_{\calS_p} \geq \norm{\AA }_{\calS_p}^p$ and thus
\begin{equation}
\begin{split}
    \norm{\AA u }_2^p  & \geq \norm{\AA}_{\calS_p}^p - \Paren{ 1+ \bigO{ \epsilon^{2p/3}}  } \norm{ \AA - \AA_1 }_{\calS_p}^p  \\
    & =  \norm{\AA}_{\textrm{op}}^p - \bigO{ \epsilon^{2p/3}}   \norm{ \AA - \AA_1 }_{\calS_p}^p  \\
    & \geq \norm{\AA}_{\textrm{op}}^p -\bigO{ \epsilon    \norm{ \AA - \AA_1 }_{\calS_p}^p }\\
    & \geq \norm{\AA}_{\textrm{op}}^p  - \bigO{ \eps^{2/3} }
\end{split}
\end{equation}
where the second to last inequality follows from recalling $p\geq2$. The remainder of the proof
is as in that following \eqref{eqn:combined_bound_on_pth_power} in the proof of Theorem \ref{thm:mv_lowerboundn_schatten_12}.
\end{proof}

\paragraph{Acknowledgments:}  A. Bakshi and D. Woodruff would like to thank the National Science Foundation under Grant No. CCF-1815840, Office of Naval Research (ONR) grant N00014-18-1-2562, and a Simons Investigator Award. Part of this work was done while A. Bakshi was an intern at IBM Almaden. The authors thank Praneeth Kacham for pointing out an error in a previous version and for suggesting simplified proofs of lemma 5.8 and 5.9. The authors also thank anonymous reviewers for their careful reading of our manuscript and for several insightful suggestions.

\bibliographystyle{alpha}
\bibliography{references}

\appendix

\section{Extending Prior Work on Lower Bounds}
\label{sec:appendix-sar}

In this section, we briefly discuss prior work on estimating top singular/eigenvalues in the matrix-vector product model and why existing approaches do not immediately imply a lower bound for low-rank approximation, under any unitarily invariant norm, including Frobenius and spectral norm. 

In a sequence of works,
Braverman, Hazan, Simchowitz and Woodworth~\cite{braverman2020gradient} and Simchowitz, Alaoui and Recht~\cite{SAR18} establish eigenvalue estimation lower bounds in the matrix-vector query model. We draw on their techniques and use the hard instance at the heart of their lower bound, but require additional techniques to obtain a lower bound for low-rank approximation.

The main theorem  (Theorem 2.2 of \cite{SAR18}), for k =1,  states that any randomized algorithm which outputs a vector $v$ such that with constant probability 
\begin{equation*}
    v^\top \abs{\AA} v >= (1- \bigO{ \textrm{gap} } ) \norm{\AA}_{\textrm{op}},
\end{equation*}
requires $\Omega\Paren{1/\sqrt{\textrm{gap}}}$ matrix-vector products, where $\abs{\AA} = (\AA^2)^{1/2}$ has the same singular values as $\AA$ and $\textrm{gap} \in (0,1)$. However, this guarantee is too weak to imply a lower bound for spectral low-rank approximation. 

Indeed, for this theorem to be meaningful in our setting, we require setting $\textrm{gap} = \Theta(\epsilon)$. However, there exist input matrices $\AA$, e.g., $\AA = \textrm{diag}\Paren{1+\epsilon, 1, \ldots, 1, 0}$, and vector $v = \Theta\Paren{\sqrt{\epsilon}} e_1 + \Paren{(1-\Theta(\epsilon)} e_n$ such that
\begin{equation*}
    \norm{\AA(\II-v v^\top)}_{\textrm{op}} \leq \Paren{1+\epsilon} \sigma_2(\AA),
\end{equation*}
i.e. $v$ yields a valid low-rank approximation but $v^\top \AA v$ is only $\Theta(\epsilon)$. Note, here the gap is $\Theta(1)$ instead of the required $1-\epsilon$ and thus we obtain no lower bound for spectral low-rank approximation.

Moreover, it can be shown that when $\AA$ is the hard instance considered in \cite{SAR18}, i.e. $\AA= \GG + \lambda u u^T$, where $\GG$ is a Gaussian Orthogonal Ensemble (GOE) and $u$ is a random unit vector on the sphere, there exists a vector $v$ that does not satisfy the guarantee of Theorem 2.2, yet yields a spectral low-rank approximation. In particular, consider $v = \Theta(\sqrt{\epsilon}) r_1 + \Paren{1-\Theta(\epsilon)} r_d$ where $r_1$ is the largest singular vector of $\abs{\AA}$ and $r_d$ is the smallest singular vector. Since the smallest $O(1)$ singular values of a $d \times d$ GOE can be shown to be $O(1/d)$, and $\AA$ is a rank-$1$ perturbation of a GOE, similar to the diagonal case above, we can show
\begin{equation*}
    \norm{ \AA \Paren{\II-vv^T} }_{\textrm{op}} \leq \Paren{1+ \epsilon} \sigma_2(\AA),
\end{equation*}
yet $v^\top \abs{\AA} v$ is only $\Theta(\epsilon)$. Therefore, it is not possible to obtain a lower bound for low-rank approximation from Theorem 2.2 in a black-box manner.

\section{Low Rank Approximation of Matrix Polynomials} \label{sec:polynomial}

We note that polynomials of matrices are implicitly defined, even in the RAM model, and computing them explicitly would be prohibitively expensive and may destroy any sparsity structure. The proof just follows from running our algorithm on $\MM =\Paren{\AA^\top\AA}^\ell$. It is straightforward to simulate a matrix-vector product of the form $\MM v$ using access to matrix-vector products for $\AA$ and $\AA^\top$ with an $\bigO{\ell}$ overhead.

\begin{theorem}[Low Rank Approximation of Matrix Polynomials]
Given an $n \times d$ matrix $\AA$, $\ell \in \mathbb{N} $, target rank $k$ and an accuracy parameter $\eps>0$, let $\MM = \Paren{\AA^\top \AA}^{\ell}$ or $\MM = \AA \Paren{\AA^\top \AA}^{\ell}$. Then, for any $p\geq 1$, there exists an algorithm that uses at most  $\bigO{k\ell\log(nk)p^{1/6}/\eps^{1/3}}$ matrix-vector products and with probability at least $9/10$ outputs a matrix $\ZZ \in \mathbb{R}^{d\times k}$ with orthonormal columns such that,
\begin{equation*}
    \norm{ \MM  \Paren{\II - \ZZ\ZZ^\top}  }_{\calS_p} \leq (1+\eps)\min_{\UU : \hspace{0.05in} \UU^\top \UU = \II_k} \norm{ \MM  \Paren{\II - \UU \UU^\top}}_{\calS_p}.
\end{equation*}
\end{theorem}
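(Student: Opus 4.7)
The plan is to apply Theorem~\ref{thm:optimal_schatten_p_lra} directly to the polynomial $\MM$, treating $\MM$ as a black-box matrix and simulating each matrix-vector product with $\MM$ via queries to $\AA$ (and $\AA^\top$). The hint stated just before the theorem makes this reduction essentially immediate; the content of the proof is purely bookkeeping, so no new algorithmic ideas are needed.

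First I would show that a single query $\MM v$ costs $\bigO{\ell}$ queries to $\AA$ and $\AA^\top$. For $\MM = \Paren{\AA^\top\AA}^\ell$, set $v_0 = v$ and iteratively compute $v_{i+1} = \AA^\top\Paren{\AA v_i}$ for $i = 0, 1, \ldots, \ell - 1$, returning $v_\ell = \MM v$; this uses exactly $2\ell$ matrix-vector products. For $\MM = \AA\Paren{\AA^\top\AA}^\ell$, apply the same procedure and then multiply by $\AA$ once more, using $2\ell + 1$ products. Hence any algorithm that makes $q$ queries to $\MM$ can be simulated using $\bigO{q\ell}$ queries to $\AA$ and $\AA^\top$.

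Next, I would invoke Theorem~\ref{thm:optimal_schatten_p_lra} on $\MM$ with target rank $k$, accuracy $\eps$, and Schatten parameter $p$. The theorem outputs a $d \times k$ matrix $\ZZ$ with orthonormal columns satisfying the desired Schatten-$p$ approximation bound with respect to $\MM$, while using $\tilde{\mathcal{O}}\Paren{kp^{1/6}/\eps^{1/3}}$ matrix-vector queries to $\MM$. Composing this with the $\bigO{\ell}$ overhead per simulated query yields the claimed $\bigO{k\ell\log(nk) p^{1/6}/\eps^{1/3}}$ matrix-vector products to $\AA$ in total.

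The only subtle point, and where the main (mild) obstacle lies, is verifying that Theorem~\ref{thm:optimal_schatten_p_lra} applies to $\MM$ in both forms. The symmetric case $\Paren{\AA^\top\AA}^\ell$ is a $d \times d$ PSD matrix, while $\AA\Paren{\AA^\top\AA}^\ell$ has SVD $\UU\Sig^{2\ell+1}\VV^\top$ whenever $\AA = \UU\Sig\VV^\top$ and is $n \times d$; in both cases the hypotheses of Theorem~\ref{thm:optimal_schatten_p_lra} are satisfied without modification. There is no circularity, since the underlying algorithm accesses $\MM$ only through the matrix-vector oracle described above, and the output $\ZZ$ is a $d \times k$ matrix of orthonormal columns regardless of whether $\MM$ is $d \times d$ or $n \times d$.
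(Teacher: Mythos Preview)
Your proposal is correct and matches the paper's own argument essentially verbatim: the paper states that the proof ``just follows from running our algorithm on $\MM = (\AA^\top\AA)^\ell$'' and that simulating $\MM v$ via $\AA$ and $\AA^\top$ incurs an $\bigO{\ell}$ overhead, which is exactly the reduction you spell out.
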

The only prior work we are aware of is the algorithm of \cite{musco2015randomized}, which would achieve a worse $\bigO{k\ell \log(nk)/\eps^{1/2}}$ number of matrix-vector products for the Frobenius norm and match our guarantee for the spectral norm. 

\section{Improved Streaming Bounds}\label{sec:stream}

In the streaming model, the input matrix is initialized to all zeros, and at each time step, the $(i,j)$-th entry is updated. The updates can be positive or negative, and the goal is to output a low-rank approximation, without storing the whole matrix.  
The number of passes required by our algorithm is proportional to the number of \textit{adaptive} matrix-vector queries we require. As an immediate corollary of this observation, we obtain the following formal guarantee:

\begin{corollary}[Schatten LRA in a Stream]
\label{cor:schatten-lra-stream}
Given a matrix $\AA \in \mathbb{R}^{n \times d}$, a target rank $k \in [d]$, an accuracy parameter $\epsilon \in (0,1)$ and any $p\geq 1$, there exists a streaming algorithm that makes $\bigO{\log(d/\epsilon)p^{1/6}/\epsilon^{1/3}}$ passes over the input, requires $\bigO{n k/\epsilon^{1/3}}$ space, and outputs a $d\times k$ matrix $\ZZ$ with orthonormal columns such that with probability at least $9/10$,
\begin{equation*}
    \norm{ \AA \Paren{\II - \ZZ \ZZ^\top}  }^p_{\calS_p} \leq (1+\epsilon) \min_{\UU : \hspace{0.05in} \UU^\top \UU = \II_k} \norm{ \AA \Paren{\II - \UU \UU^\top} \ }^p_{\calS_p}.
\end{equation*}
\end{corollary}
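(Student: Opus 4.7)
The plan is to reduce the streaming problem to the matrix-vector product algorithm of Theorem~\ref{thm:optimal_schatten_p_lra}. In the turnstile stream, updates to $\AA$ arrive as triples $(i,j,\Delta)$, each adjusting $\AA_{ij}$ additively. Since every matrix-vector product $\AA v$ (respectively $\AA^\top w$) depends linearly on the entries of $\AA$, a single query with a pre-chosen vector $v$ can be answered in one pass: before the pass begins we commit to $v$ and initialize an $n$-dimensional accumulator, and as each update $(i,j,\Delta)$ streams by we perform $(\AA v)_i \gets (\AA v)_i + \Delta \cdot v_j$. At the end of the pass the exact product sits in memory, using only $\bigO{n}$ working words per accumulator, and the same is true for $\AA^\top$.

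First, I would bound the pass count. Every matrix-vector query made by Algorithm~\ref{algo:optimal_schatten_p_lra} happens inside Block Krylov Iteration (Algorithm~\ref{algo:simul_power_iter}), which constructs $(\AA^\top \AA)^i \AA^\top \UU$ by alternately multiplying by $\AA$ and $\AA^\top$ for $q$ rounds. Each round is one adaptive batch of queries and hence costs two passes. The dominant invocation is Step~1 of Algorithm~\ref{algo:optimal_schatten_p_lra}, which uses $q = \bigO{\log(d/\epsilon) p^{1/6}/\epsilon^{1/3} + \log(d/\epsilon)\sqrt{p}}$ iterations; every other invocation uses no more. Since Algorithm~\ref{algo:optimal_schatten_p_lra} calls Block Krylov only a constant number of times, summing yields the claimed $\bigO{\log(d/\epsilon)\, p^{1/6}/\epsilon^{1/3}}$ total passes.

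Next, I would bound the space. Between passes the algorithm must hold the random starting matrix, the Krylov matrix $\KK$, its orthonormal basis $\QQ$, the small projected matrix $\MM = \QQ^\top \AA^\top \AA \QQ$, and the output $\ZZ$. The binding object is $\KK \in \mathbb{R}^{d \times s(q+1)}$, with total block-size product $s(q+1) = \bigO{k \log(d/\epsilon) p^{1/6}/\epsilon^{1/3}}$ across the two Block Krylov runs. Absorbing the logarithmic and $p^{1/6}$ factors (justified for $p = \bigO{1}$ and with the paper's convention that poly-log factors are suppressed), this is $\bigO{n k/\epsilon^{1/3}}$ working memory, where we identify $n$ with the larger of the two dimensions and run the algorithm on whichever of $\AA$ or $\AA^\top$ is advantageous. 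The matrices $\QQ$, $\MM$, and $\ZZ$ all fit inside the same budget.

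The main obstacle, modest in this corollary, is ensuring that every \emph{offline} step performed between passes requires no further stream access: orthogonalization of $\KK$, formation of $\MM$, the small SVD that extracts $\YY_k$, and the case analysis comparing $\hat\sigma_{1}$, $\hat\sigma_{k+1}$, and $\hat\sigma_{s}$ must be carried out entirely on data already in memory. Forming $\MM$ needs the quantity $\AA \QQ$, which is computed by one additional two-pass round and is already absorbed into the pass count. Since the sequence of queries is identical to that executed by Theorem~\ref{thm:optimal_schatten_p_lra} in the matrix-vector model, the Schatten-$p$ approximation guarantee carries over verbatim, completing the proof.
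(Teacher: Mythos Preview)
Your proposal is correct and follows the same approach the paper takes: the paper merely states that ``the number of passes required by our algorithm is proportional to the number of \emph{adaptive} matrix-vector queries we require'' and declares the corollary immediate, whereas you spell out the simulation (one pass per adaptive round of Block Krylov, with all block columns handled in parallel), the pass accounting, and the space accounting. Your observation that the stated $\bigO{nk/\epsilon^{1/3}}$ space bound tacitly suppresses a $\log(d/\epsilon)\,p^{1/6}$ factor is accurate and matches the paper's informal treatment.
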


The only prior work on low-rank approximation in a stream is by Boutsidis, Woodruff and Zhong, who consider the special case of $p=2$~\cite{boutsidis2016optimal}. They obtain a single pass algorithm that requires $\bigO{nk/\epsilon +\textrm{poly}(k/\epsilon) }$ space and a two pass algorithm  that requires $\bigO{nk +\textrm{poly}(k/\epsilon) }$ space. For general $p$, we note that recent work by Li and Woodruff~\cite{lw20} can be used to derive a streaming algorithm that obtains a worse space dependence but only requires a single pass: for $1\leq p<2$, the space required is $\tilde{\mathcal{O}}\Paren{n \Paren{ \frac{k+ k^{2/p} }{\epsilon^2} +  \frac{ k^{2/p} }{\epsilon^{1+2/p} }  }  }$ and for $p > 2$, the space required is $\tilde{\mathcal{O}}\Paren{ n\Paren{\frac{k n^{1-2/p} }{\epsilon^2} + \frac{k^{2/p} + n^{1-2/p}}{\epsilon^{2+2/p}  } } }$.

We note that for $p<2$, we obtain a polynomially better dependence on $\epsilon$ and for $p>2$, the space complexity of our algorithm is linear in $n$, as compared to $n^{2-2/p}$ above. 
The optimal space complexity of Schatten-$p$ low-rank approximation (for $p \neq 2$) in a single pass remains open.

\end{document}